\algnewcommand\algorithmicinput{\textbf{Input:}}
\algnewcommand\INPUT{\item[\algorithmicinput]}
\algnewcommand\algorithmicoutput{\textbf{Output:}}
\algnewcommand\OUTPUT{\item[\algorithmicoutput]}
\algnewcommand\algorithmicbegin{\textbf{Algorithm:}}
\algnewcommand\BEGIN{\item[\algorithmicbegin]}
\numberwithin{equation}{section}
\newcommand{\thefont}[2]{\fontsize{#1}{#2}\fontshape{n}\selectfont}
\newcommand{\1}{\rlap{\thefont{10pt}{12pt}1}\kern.16em\rlap{\thefont{11pt}{13.2pt}1}\kern.4em}
\def\argmin{\mathop{\rm arg \; min}\limits}%
\theoremstyle{plain}
\newtheorem{proposition}{Proposition}[section]
\newtheorem{corollary}{Corollary}[section]
\newtheorem{theorem}{Theorem}[section]
\newtheorem{lemma}{Lemma}[section]
\theoremstyle{definition}
\newtheorem{definition}{Definition}[section]
\theoremstyle{remark}
\newtheorem{remark}{Remark}[section]
\newcommand{\E}{{\mathbb E}}
\newcommand{\R}{{\mathbb R}}
\newcommand{\C}{{\mathbb C}}
\newcommand{\N}{{\mathbb N}}
\renewcommand{\P}{{\mathbb P}}
\newcommand{\Id}{{\mathrm{Id}}}
\newcommand{\HH}{\ensuremath{\mathcal H}}
\def\argmin{\mathop{\rm arg \; min}\limits}%
\newcommand{\supp}{\mathop{\mathrm{supp}}}
\pgfplotsset{compat=newest}
\title{Estimation of linear operators from scattered impulse responses}
\author{ J\'{e}r\'{e}mie Bigot\footnote{Institut de Math\'ematiques de Bordeaux et CNRS, IMB-UMR5251, Universit\'e de Bordeaux, {\tt jeremie.bigot@math.u-bordeaux.fr}}  \hspace{1cm}  Paul Escande\footnote{D\'epartement d'Ing\'{e}nierie des Syst\`{e}mes Complexes (DISC), Institut Sup\'{e}rieur de l'A\'{e}ronautique et de l'Espace (ISAE), Toulouse, France, {\tt paul.escande@gmail.com}}  \hspace{1cm} Pierre Weiss\footnote{Institut des Technologies Avanc\'{e}es en Sciences du Vivant, ITAV-USR3505 and Institut de Math\'{e}matiques de Toulouse, IMT-UMR5219, CNRS and Universit\'{e} de Toulouse, Toulouse, France, {\tt pierre.armand.weiss@gmail.com}}  }
\date{\today}
\begin{document}

\maketitle
\thispagestyle{empty}

\begin{abstract}
We provide a new estimator of integral operators with smooth kernels, obtained from a set of scattered and noisy impulse responses.  The proposed approach relies on the formalism of smoothing in reproducing kernel Hilbert spaces and on the choice of an appropriate regularization term that takes the smoothness of the operator into account. It is numerically tractable in very large dimensions.  We study the estimator's robustness to noise and analyze its approximation properties with respect to the size and the geometry of the dataset. In addition, we show minimax optimality of the proposed estimator.
\end{abstract}

\noindent \emph{Keywords:} Integral operator, scattered approximation, estimator, convergence rate, numerical complexity, radial basis functions, Reproducing Kernel Hilbert Spaces, minimax. \\

\noindent\emph{AMS classifications:} 47A58, 41A15, 41A25, 68W25, 62H12, 65T60, 94A20.

\section*{Acknowledgments} 
The authors wish to acknowledge the excellent reviewing work for this paper. Important technical inconsistencies were pointed out in the first version of the paper, which helped improving the manuscript substantially. The great care taken  here has become very rare and we are truly indebted to the reviewers.
The authors are also grateful to Bruno Torr\'esani and R\'emi Gribonval for their interesting comments on a preliminary version of this paper.
The PhD degree of Paul Escande has been supported by the MODIM project funded by the PRES of Toulouse University and the Midi-Pyr\'en\'ees R\'egion. This work was partially supported by the OPTIMUS project from RITC.

\section{Introduction} \label{sec:intro}

Let $H:L^2(\R^d) \to L^2(\R^d)$ denote a linear integral operator defined for all $u\in L^2(\R^d)$ and $x \in \R^d$ by:
\begin{equation} \label{eq:integral_op}
Hu(x) = \int_{\R^d} K(x,y) u(y) dy,
\end{equation}
where $K: \R^d \times \R^d \to \R$, is the operator kernel.
Given a set of functions $(u_i)_{1\leq i \leq n}$, the problem of operator identification consists of recovering $H$ from the knowledge of $f_i=H u_i+\epsilon_i$, where $\epsilon_i$ is an unknown perturbation. 

This problem arises in many fields of science and engineering such as mobile communication \cite{kailath1959sampling}, imaging \cite{gentile2013interpolating} and geophysics \cite{belanger2015compressed}. 
Many different reconstruction approaches have been developed, depending on the operator's regularity and the set of test functions $(u_i)$.
Assuming that $H$ has a bandlimited Kohn-Nirenberg symbol and that its action on a Dirac comb is known, a few authors proposed extensions of Shannon's sampling theorem \cite{kailath1959sampling,kozek2005identification,pfander2013sampling,Grochenig2014}.
Another recent trend is to assume that $H$ can be decomposed as a linear combination of a small number of elementary operators.
When the operators are fixed, recovering $H$ amounts to solving a linear system.
The work \cite{chiu2012matrix} analyzes the conditioning of this linear system when $H$ is a matrix applied to a random Gaussian vector.
When the operator can be sparsely represented in a dictionary of elementary matrices, compressed sensing theories can be developed \cite{pfander2008identification}. 
Finally, in astrophysics, a few authors considered interpolating the coefficients of a few known impulse responses (also called Point Spread Functions, PSF) in a well chosen basis \cite{gentile2013interpolating,mboula2015super,chang2012atmospheric}. This strategy corresponds to assuming that $u_i=\delta_{y_i}$ and it is often used when the PSFs are compactly supported and have smooth variations. 
Notice that in this setting, each PSFs is known \emph{independently} of the others, contrarily to the work \cite{pfander2013sampling}.

This last approach is particularly effective in large scale imaging applications due to two useful facts.
First, representing the impulse responses in a small dimensional basis allows reducing the number of parameters to identify.
Second, there now exist efficient interpolation schemes based on radial basis functions. 
Despite its empirical success, this method still lacks of solid mathematical foundations and many practical questions remain open:
\begin{itemize}
 \item Under what hypotheses on the operator $H$ can this method be applied?
 \item What is the influence of the geometry of the set $(y_i)_{1\leq i\leq n}$?
 \item Is the reconstruction stable to the pertubations $(\epsilon_i)_{1\leq i\leq n}$? If not, how to make robust reconstructions, tractable in very large scale problems? 
 \item What theoretical guarantees can be provided in this challenging setting?
\end{itemize}

The objective of this work is to address the above mentioned questions. 
We design a robust algorithm applicable in large scale applications. 
It yields a finite dimensional operator estimator of $H$ allowing for fast matrix-vector products, which are essential for further processing.
The theoretical convergence rate of the estimator as the number of observations increases is studied thoroughly. 

The outline of this paper is as follows. 
We first specify the problem setting precisely in Section \ref{sec:problemsetting}.
We then describe the main outcomes of our study in Section \ref{sec:main_results}.
We provide a detailed explanation of the numerical algorithm in Section \ref{sec:numerical_algorithm}.
Finally, the proofs of the main results are given in Section \ref{sec:proofs}.

\section{Problem setting}
\label{sec:problemsetting}

Throughout the paper, $\Omega\subset \R^d$ will denote a bounded, open and connected set, with Lipschitz continuous boundary. 

The value of a function $f$ at $x$ is denoted $f(x)$, while the $i$-th value of a vector $v \in \R^{N}$ is denoted $v[i]$. 
The $(i,j)$-th element of a matrix $A$ is denoted $A[i,j]$. 
The Sobolev space $H^s(\Omega)$ is defined for $s$ in $\N$ by 
\begin{equation}\label{def:SobolevSpace}
H^s(\Omega)=\left\{u\in L^2(\Omega), \partial^{\alpha} u \in L^2(\Omega), \ \textrm{for all multi-index } \alpha \in \N^d \ s.t.  \ |\alpha|=\sum_{i=1}^d \alpha[i] \leq s\right\}.  
\end{equation}
The space $H^s(\Omega)$ can be endowed with a norm $\|u\|_{H^s(\Omega)} = \left( \sum_{|\alpha|\leq s} \|\partial^{\alpha} u\|_{L^2(\Omega)}^2 \right)^{1/2}$ and the semi-norm $|u|_{H^s(\Omega)} = \left( \sum_{|\alpha|= s} \|\partial^\alpha u\|_{L^2(\Omega)}^2\right)^{1/2}$.
In addition, we will use the Beppo-Levi semi-norm defined by
$|u|_{BL^s(\Omega)}^2 = \sum_{|\alpha|=s} \frac{s!}{\alpha_1!\alpha_2!\hdots\alpha_d!} \|\partial^\alpha u\|_{L^2(\Omega)}^2$
and the Beppo-Levi semi-inner product defined by
\begin{equation}
 \left\langle f,g\right\rangle_{BL^s(\Omega)} = \sum_{|\alpha|=s} \frac{s!}{\alpha_1!\alpha_2!\hdots\alpha_d!} \langle \partial^\alpha f,\partial^\alpha g\rangle_{L^2(\Omega)}.
\end{equation}

Let $a$ and $b$ denote two functions depending on a parameter $u$ living in a set $U$. The notation $a(u) \lesssim b(u)$ means that there exists a constant $c >0$ such that $a(u) \leq c b(u)$ for all $u\in U$, with $c$ independent of the parameters $u$.
The notation $a(u)\asymp b(u)$ means that $a$ and $b$ are equivalent, i.e. there exists $0<c\leq C$ such that $ca(u)\leq b(u)\leq Ca(u)$.

The Beppo-Levi and the Sobolev semi-norms are equivalent over the space $H^s(\R^d)$:
\begin{equation}
 |u|_{BL^s(\Omega)}^2 \asymp |u|_{H^s(\Omega)}^2.
\end{equation}

\subsection{The sampling model}

An integral operator can be represented in many different ways. 
A key representation in this paper is the Space Varying Impulse Response (SVIR) $S : \R^d \times \R^d \to \R$ defined for all $(x,y)\in \R^d\times \R^d$ by:
\begin{equation} \label{eq:defTVIR}
 S(x,y) = K(x+y,y).
\end{equation}
The impulse response or Point Spread Function (PSF) at location $y\in \R^d$ is defined by $S(\cdot,y)$.

The main purpose of this paper is the reconstruction of the SVIR  of an operator from the observation of a few  impulse responses $S(\cdot, y_i)$ at scattered (but known) locations $(y_i)_{1\leq i \leq n}$ in a set $\Omega$. 
In applications, the PSFs $S(\cdot, y_i)$ can only be observed through a projection onto an $N$ dimensional linear subspace $V_N$. We assume that the linear subspace $V_N$ reads
\begin{equation}\label{eq:defVN}
V_N=\mathrm{span}\left( \phi_k, 1\leq k\leq N\right), 
\end{equation}
where $(\phi_k)_{k\in \N}$ is an orthonormal basis of $L^2(\R^d)$.
In addition, the data is often corrupted by noise and we therefore observe a set of $N$ dimensional vectors $(F_i^\epsilon)_{1\leq i \leq n}$ defined for all $k\in \{1,\hdots,N\}$ by
\begin{equation}
 F_{i}^\epsilon[k] = \langle S(\cdot, y_i), \phi_k \rangle + \epsilon_{i}[k],  \; 1 \leq i \leq n,\label{eq:datamodel}
\end{equation}
where $\epsilon_{i}$ is a random vector with independent and identically distributed (iid) components with zero mean and finite variance $\sigma^2$. For \eqref{eq:datamodel} to be well defined, $S$ should be sufficiently smooth and we will provide precise regularity conditions in the next section.

Since impulse responses are observed on a bounded set $\Omega$, we can only expect reconstructing $S$ faithfully on $\R^d\times \Omega$ and not on the whole space $\R^d\times \R^d$. 
Hence the objective of this work is to define an estimator $\hat H$ with kernel $\hat K$ close to $H$ with respect to the Hilbert-Schmidt norm defined by:
\begin{equation}
 \|\hat H - H\|_{HS}^2 := \int_{\Omega} \int_{\R^d} |\hat K(x,y)- K(x,y)|^2 \,dx\,dy.
\end{equation}
Controlling the Hilbert-Schmidt norm allows controlling the action of $\hat H$ on functions compactly supported on $\Omega$. Indeed, for a function $u\in L^2(\R^d)$ with $\supp(u)\subseteq \Omega$, we get - using Cauchy-Schwarz inequality:
\begin{align*}
 \|\hat H u - H u \|_{L^2(\R^d)}^2 & = \int_{\R^d} \left(\int_{\R^d}  (\hat K(x,y)-K(x,y)) u(y) \,dy\right)^2 \,dx \\
 & \leq \int_{\R^d}  \|\hat K(x,\cdot)-K(x,\cdot)\|_{L^2(\Omega)}^2 \|u\|_{L^2(\R^d)}^2 \,dx \\
 &= \|\hat H - H\|_{HS}^2 \|u\|_{L^2(\R^d)}^2.
\end{align*}


\subsection{Space varying impulse response regularity}

The SVIR encodes the impulse response variations in the $y$ direction, instead of the $(x-y)$ direction for the kernel representation, see Figure \ref{fig:kernel_vs_tvir} for a 1D example. 
It is convenient since in many applications, the smoothness of $S$ in the $x$ and $y$ directions is driven by different physical phenomena. For instance in astrophysics, the regularity of $S(\cdot,y)$ depends on the optical system, while the regularity of $S(x,\cdot)$ may depend on exteriors factors such as atmospheric turbulence or weak gravitational lensing \cite{chang2012atmospheric}.
This property will be expressed through the specific regularity assumptions of $S$ defined hereafter.

First we will make use of the following functional space.
\begin{definition}\label{def:Er}
 The space $\mathcal{E}^r(\R^d)$ (also denoted $\mathcal{E}^r$) is defined, for all $r\in \R$ and $r>\frac{d}{2}$, as the set of functions $u\in L^2(\R^d)$ such that:
 \begin{equation}\label{eq:ineqsobolev}
  \|u\|_{\mathcal{E}^r(\R^d)}^2= \sum_{k\in \N} w[k] |\langle u,\phi_k \rangle|^2 < +\infty,
 \end{equation}
 where $w:\N\to \R_+^*$ is a weight sequence satisfying $w[k]\gtrsim (1+k^2)^{r/d}$.
\end{definition}

\begin{remark}
This definition is introduced in reference to the Sobolev spaces $H^{m}_{\Delta}(\R^d)$ of functions with $m$ derivatives in $L^2(\R^d)$ supported on a compact set $\Delta$. This space can be defined - alternatively to equation \eqref{def:SobolevSpace} - by:
\begin{equation}
 H^m_{\Delta}(\R^d)=  \left\{ u\in L^2(\R^d), \sum_{\lambda \in \Lambda} 2^{2m|\lambda|} |\langle u, \psi_\lambda \rangle|^2 <+\infty\right\},
\end{equation}
where $(\psi_\lambda)_{\lambda \in \Lambda}$ is a wavelet basis with at least $m+1$ vanishing moments (see e.g. \cite[Chapter 9]{Mallat-Book}) and $\lambda=(j,k)$ is a scale-space parameter. 
\end{remark}
\begin{remark}
Definition \ref{def:Er} encompasses many other spaces. 
For instance, it allows choosing a basis $(\phi_k)_{k\in \N}$ that is best adapted to the impulse responses at hand, by using principal component analysis, as was proposed in a few applied papers \cite{jee2007principal,berge2012point}. 
\end{remark}

The following definition gathers all the assumptions made on the operators. It will be used throughout the paper.

\begin{definition}\label{def:ballEr}
Let $A_{1}$ and $A_{2}$ be positive constants. 
Set $r>\frac{d}{2}$ and $s>\frac{d}{2}$.  
The ball $\mathcal{E}^{r,s}(A_{1},A_{2})$ is defined as the set of linear integral operators $H$ with SVIR $S$ belonging to $L^2(\R^d \times \R^d)$ with:
  \begin{align}
  &  \textrm{\textbf{Smooth variations}} & \quad & \int_{x \in \R^d} \|S(x,\cdot)\|_{H^s(\R^d)}^{2} dx \leq A_{1}\label{eq:varIR} \\ \nonumber \\
  & \textrm{\textbf{Impulse response regularity}} & \quad & \int_{y\in \R^d} \| S(\cdot,y)\|_{\mathcal{E}^r(\R^d)}^{2} dy \leq A_{2} \label{eq:regIR}  
  \end{align}
\end{definition}

Let us comment on these assumptions:
\begin{itemize}
 \item Equation \eqref{eq:regIR} means that $S(\cdot,y)$ belong to $\mathcal{E}^r(\R^d)$ for a.e. $y \in \R^d$.
 \item Similarly, assumption \eqref{eq:varIR} means that $S(x,\cdot)$ is in $H^s(\R^d)$ for a.e. $x \in \R^d$.
 The hypothesis $s>d/2$ ensures existence of a \emph{continuous} representant of $S(x,\cdot)$ for a.e. $x$, by Sobolev embedding theorems \cite[Thm.2, p.124]{stein2016singular}.
This regularity condition will allow the use of fine approximation results based on radial basis functions \cite{arcangeli2007extension}.
 \item The two regularity conditions are sufficient for the sampling procedure \eqref{eq:datamodel} to be well defined. Lemma \ref{lem:invert} indeed indicates that the functions $y \mapsto \langle S(\cdot,y), \phi_k \rangle$ are in $H^s(\R^d)$ for all $k \in \N$. By Sobolev embedding theorems \cite[Thm.2, p.124]{stein2016singular}, there exists a continuous representant of these functions and hence, we can give a meaning to $\langle S(\cdot,y), \phi_k \rangle$.
 \item In the particular case where $\mathcal{E}^r(\R^d) = H^r(\R^d)$ the space $\mathcal{E}^{r,s}$ is the mixed-Sobolev space $H^{r,s}(\R^d \times \R^d)$ \cite{lion1972non,opic1991estimates,zeiser2012wavelet}.
\end{itemize}

 \begin{figure}[htp]
	\centering
\begin{subfigure}[t]{0.3\columnwidth}\centering
  \vspace{0pt}
  \begin{tikzpicture} 
    \pgfmathsetmacro{\WDTH}{0.09}
    \node[anchor=south west,inner sep=0] (image) at (0,0) {\includegraphics[width=0.9\textwidth]{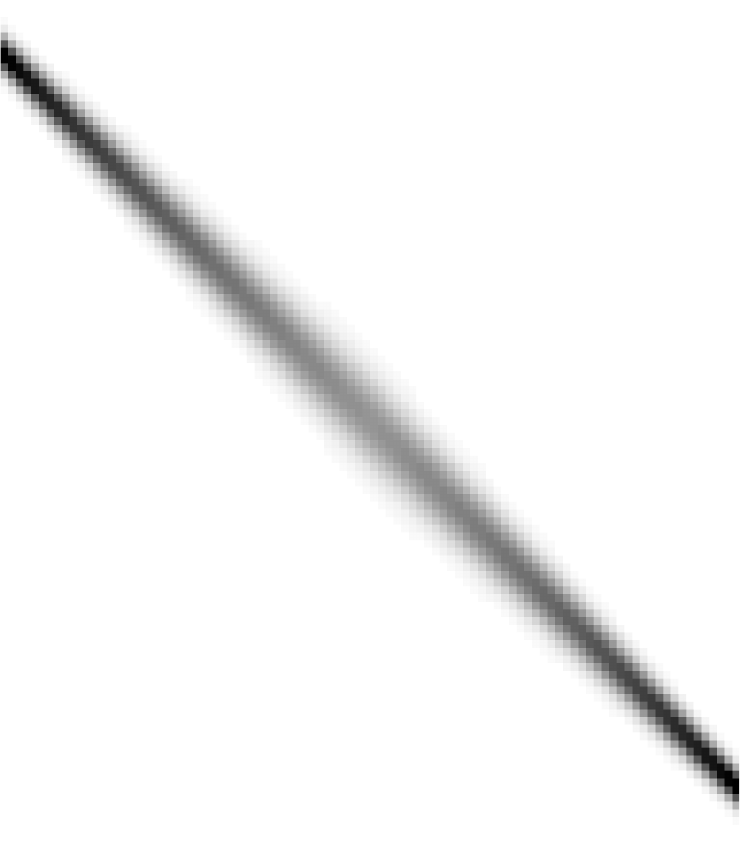}};
     \begin{scope}[x={(image.south east)},y={(image.north west)}]
	\draw [color=blue,line width=2pt,dashed] (0.135,0) -- (0.135,1) ;
	\draw [color=blue,line width=2pt,dashed] (0.865,0) -- (0.865,1) ;
	\draw [<->,color=blue,line width=1pt] (0.135,1) -- (0.865,1) ;
	\draw [color=blue] (0.5,1) node[above] {\Huge $\Omega$} ;
	\draw [->] (-0.1,1.2) -- (-0.1,0.5) ;
	\draw (0.3,1.2) node[above] {$y$} ;
	\draw (-0.1,0.9) node[left] {$x$} ;
	\draw [->] (-0.1,1.2) -- (0.6,1.2) ;
   \end{scope}
\end{tikzpicture}
\end{subfigure}%
\qquad\qquad\qquad
\begin{subfigure}[t]{0.3\columnwidth}\centering   \vspace{16pt}
  \begin{tikzpicture}
    \pgfmathsetmacro{\WDTH}{0.09}
    \node[anchor=south west,inner sep=0] (image) at (0,0) {\includegraphics[width=\textwidth]{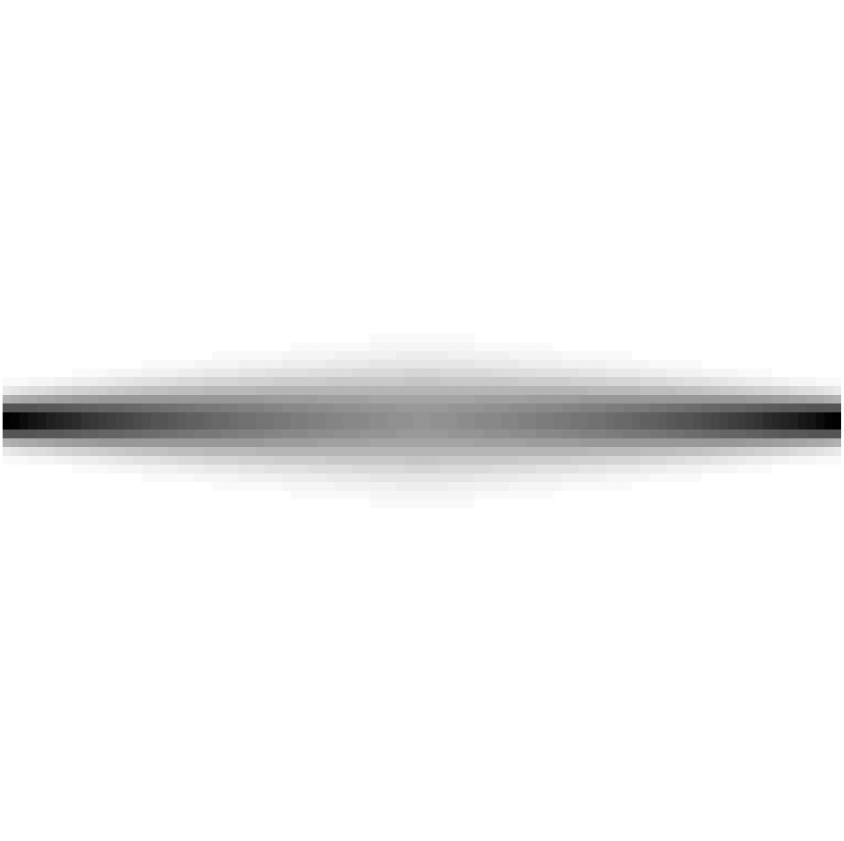}};
     \begin{scope}[x={(image.south east)},y={(image.north west)}]
	\draw [color=blue,line width=2pt,dashed] (0.135,0) -- (0.135,1) ;
	\draw [color=blue,line width=2pt,dashed] (0.865,0) -- (0.865,1) ;
	\draw [<->,color=blue,line width=1pt] (0.135,1) -- (0.865,1) ;
	\draw [color=blue] (0.5,1) node[above] {\Huge $\Omega$} ;
   \end{scope}
\end{tikzpicture}
\end{subfigure}
\caption{Illustration of the two representations of an integral operator considered in this paper. 
The kernel is defined by $K(x,y) = \frac{1}{\sqrt{2 \pi} \sigma(y)} \exp\left( -\frac{1}{2 \sigma(y)^2} |x-y|^{2} \right)$ for all $(x,y) \in \R^2$, where $\sigma(y) = 0.05 \left(1+ 2\min(y,1-y) \right)$ for $y \in \Omega = [0,1] $. Left: kernel representation (see equation \eqref{eq:integral_op}). Right: SVIR representation (see equation \eqref{eq:defTVIR}).
}
\label{fig:kernel_vs_tvir}
\end{figure}

\section{Main results}\label{sec:main_results}


\subsection{Construction of an estimator}

Let $F: \R^d \to \R^{N}$ denote the vector-valued function representing the impulse responses coefficients (IRC) in basis $(\phi_k)_{k\in \N}$: 
\begin{equation} \label{eq:IRC}
F(y)[k] = \langle S(\cdot, y), \phi_k\rangle. 
\end{equation}

Based on the observation model \eqref{eq:datamodel}, a natural approach to estimate the SVIR, consists in constructing an estimate $\hat{F}: \R^d \to \R^N$ of $F$. 
The estimated SVIR is then defined as
\begin{equation}
 \hat{S}(x,y) = \sum_{k=1}^N \hat{F}(y)[k] \phi_{k}(x), \mbox{ for } (x,y) \in \R^d \times \R^d. \label{eq:hat_T}
\end{equation}

Definition \ref{def:ballEr} motivates the introduction of the following space.
\begin{definition}[Space $\mathcal{H}$ of IRC]\label{def:RKHS}
 The space $\mathcal{H}(\R^d)$ of admissible IRC is defined as the set of vector-valued functions $G:\R^d \to \R^N$ such that 
\begin{equation}
   \|G\|_{\mathcal{H}(\R^d)}^2 = \alpha \int_{y\in \R^d} \sum_{k=1}^N w[k] \left|G(y)[k]\right|^2\,dy + (1-\alpha) \sum_{k=1}^N \left| G(\cdot)[k] \right|_{BL^s(\R^d)}^2<+\infty,
\end{equation}
where $\alpha \in [0,1)$ allows to balance the smoothness in each direction. 
\end{definition}
The following result is straightforward (the proof is similar to that of Lemma \ref{lem:cv_rate_tvir_wavelet_row}).
\begin{lemma}
  Operators in $\mathcal{E}^{r,s}(A_1,A_2)$ have an IRC belonging to $\mathcal{H}(\R^d)$.
\end{lemma}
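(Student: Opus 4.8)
The plan is to prove the inclusion directly: given an operator $H \in \mathcal{E}^{r,s}(A_1,A_2)$ with SVIR $S$, its IRC is the vector-valued function $F$ of \eqref{eq:IRC}, defined by $F(y)[k]=\langle S(\cdot,y),\phi_k\rangle$, and I would bound $\|F\|_{\mathcal{H}(\R^d)}^2$ by treating the two terms of Definition \ref{def:RKHS} separately. The weighted $L^2$ term will be controlled by the impulse response regularity \eqref{eq:regIR}, and the Beppo-Levi term by the smooth variations assumption \eqref{eq:varIR}.

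For the first term, I would fix $y$ and note that, since $w[k]>0$, truncating the series in \eqref{eq:ineqsobolev} gives
\[
\sum_{k=1}^N w[k]\,|F(y)[k]|^2 \le \sum_{k\in\N} w[k]\,|\langle S(\cdot,y),\phi_k\rangle|^2 = \|S(\cdot,y)\|_{\mathcal{E}^r(\R^d)}^2 .
\]
Integrating in $y$ and invoking \eqref{eq:regIR} then bounds the first term by $\alpha A_2<+\infty$.

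For the second term, I would first replace each Beppo-Levi semi-norm by the Sobolev semi-norm using the equivalence $|\cdot|_{BL^s(\R^d)}^2 \asymp |\cdot|_{H^s(\R^d)}^2$. The key computation is that differentiating under the integral sign yields, for every multi-index $\alpha$ with $|\alpha|=s$,
\[
\partial^\alpha \big(F(\cdot)[k]\big)(y) = \langle \partial^\alpha_y S(\cdot,y),\phi_k\rangle ,
\]
so that, for fixed $\alpha$, Bessel's inequality for the orthonormal system $(\phi_k)_{k\in\N}$ gives
\[
\sum_{k=1}^N \big\|\partial^\alpha (F(\cdot)[k])\big\|_{L^2(\R^d)}^2 = \int_{\R^d}\sum_{k=1}^N |\langle \partial^\alpha_y S(\cdot,y),\phi_k\rangle|^2\,dy \le \int_{\R^d} \|\partial^\alpha_y S(\cdot,y)\|_{L^2(\R^d)}^2\,dy .
\]
Summing over $|\alpha|=s$ and applying Fubini's theorem identifies the right-hand side with $\sum_{|\alpha|=s}\int_{\R^d}\int_{\R^d} |\partial^\alpha_y S(x,y)|^2\,dy\,dx$, which is at most $\int_{\R^d} \|S(x,\cdot)\|_{H^s(\R^d)}^2\,dx \le A_1$ by \eqref{eq:varIR}. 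Hence the second term is bounded by a constant multiple of $(1-\alpha)A_1$, and combining both estimates shows $\|F\|_{\mathcal{H}(\R^d)}^2<+\infty$, i.e. $F\in\mathcal{H}(\R^d)$.

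The main obstacle, and the only step requiring genuine care, is justifying the interchange of differentiation and integration, namely that $y\mapsto F(y)[k]$ admits weak derivatives up to order $s$ given by $\langle \partial^\alpha_y S(\cdot,y),\phi_k\rangle$ and lies in $H^s(\R^d)$. This is precisely the content of Lemma \ref{lem:invert}, which I would invoke; alternatively it follows from a dominated-convergence argument using the regularity encoded in \eqref{eq:varIR}. The remaining ingredients, truncation of the weight series, Bessel's inequality, and Fubini's theorem, are routine, so the whole argument runs parallel to the proof of Lemma \ref{lem:cv_rate_tvir_wavelet_row}.
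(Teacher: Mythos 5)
Your proposal is correct and follows essentially the same route as the paper: the paper's proof is a one-line deferral whose substance is exactly Lemma \ref{lem:invert} (the interchange identities $\partial^\alpha_y \langle S(\cdot,y),\phi_k\rangle = \langle \partial^\alpha_y S(\cdot,y),\phi_k\rangle$ and the Parseval/Tonelli computations), combined with assumptions \eqref{eq:varIR}--\eqref{eq:regIR} to bound the two terms of the $\mathcal{H}(\R^d)$-norm by $(1-\alpha)A_1$ and $\alpha A_2$ up to constants, just as you do. Your only deviations are cosmetic: you truncate the sums at $N$ and use Bessel's inequality where the paper states equalities over all $k\in\N$, which if anything makes the bound more direct.
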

To construct an estimator of $F$, we propose to define $\hat{F}_{\mu}$ as the minimizer of the following optimization problem:
\begin{equation} 
\hat{F}_{\mu} = \argmin_{F \in \HH(\R^d)} \frac{1}{n} \sum_{i=1}^{n} \| F_i^\epsilon - F(y_i) \|^{2}_{\R^{N}} + \mu \| F \|^{2}_{\HH(\R^d)}, \label{eq:defhatF}
\end{equation}
where $\mu > 0$ is a regularization parameter. 

\begin{remark}
The proposed formulation can be interpreted with the formalism of regression and smoothing in vector-valued Reproducing Kernel Hilbert Spaces (RKHS) \cite{Micchelli04,Micchelli05}.
The space $\mathcal{H}(\R^d)$ can be shown to be a vector-valued Reproducing Kernel Hilbert Space (RKHS).
The formalism of vector-valued RKHS has been developed for the purpose of multi-task learning, and its application to operator estimation appears to be novel. 
\end{remark}

\subsection{Mixed-Sobolev space interpretation}

The problem formulation \eqref{eq:defhatF} might seem abstract at first sight. 
In this section we show that it encompasses the formalism of mixed-Sobolev spaces \cite{lion1972non,opic1991estimates,zeiser2012wavelet} and that the proposed methodology can be interpreted in terms of SVIR instead of IRC. 

\begin{lemma}\label{lem:mixedsobolevvinterpretation}
Suppose $H \in \mathcal{E}^{r,s}(A_1,A_2)$.
In the specific case where $(\phi_k)_{k \in \N}$ is a wavelet or a Fourier basis and $N=+\infty$, 
The cost function in Problem \eqref{eq:defhatF} is equivalent to 
\begin{equation} \label{eq:hatT_optim_problem}
\frac{1}{n} \sum_{i=1}^n \left\| F_i^\epsilon - \left(\langle S(\cdot, y_i), \phi_{k} \rangle\right)_{1\leq k \leq N} \right\|_{2}^2  
+ \mu \left( \alpha \int_{\R^d} \| S( \cdot, y) \|_{H^r(\R^d)}^2 dy 
+ (1-\alpha) \int_{\R^d} | S(x, \cdot) |_{BL^s(\R^d)}^2 dx \right). 
\end{equation}
\end{lemma}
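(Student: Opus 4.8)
The data-fidelity terms in \eqref{eq:defhatF} and \eqref{eq:hatT_optim_problem} coincide exactly: by the definition \eqref{eq:IRC} of the IRC one has $F(y_i)[k]=\langle S(\cdot,y_i),\phi_k\rangle$, so the two discrepancy sums are literally the same. It therefore suffices to prove that the penalty $\|F\|_{\HH(\R^d)}^2$ is equivalent to $\alpha\int_{\R^d}\|S(\cdot,y)\|_{H^r(\R^d)}^2\,dy+(1-\alpha)\int_{\R^d}|S(x,\cdot)|_{BL^s(\R^d)}^2\,dx$. My plan is to treat the two summands of $\|F\|_{\HH(\R^d)}^2$ separately, matching the $\alpha$-weighted term against the $H^r$ integral and the $(1-\alpha)$-weighted term against the Beppo-Levi integral.

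For the first term, I would fix $y$ and use that $\langle S(\cdot,y),\phi_k\rangle=F(y)[k]$, so that by Definition \ref{def:Er} one has $\sum_{k}w[k]|F(y)[k]|^2=\|S(\cdot,y)\|_{\mathcal{E}^r(\R^d)}^2$. When $(\phi_k)$ is a wavelet or Fourier basis, the weight $w[k]\asymp(1+k^2)^{r/d}$ reproduces (up to constants) the Sobolev scaling recalled in the remark following Definition \ref{def:Er}, so that $\mathcal{E}^r(\R^d)=H^r(\R^d)$ with $\|\cdot\|_{\mathcal{E}^r(\R^d)}\asymp\|\cdot\|_{H^r(\R^d)}$. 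Hence $\sum_k w[k]|F(y)[k]|^2\asymp\|S(\cdot,y)\|_{H^r(\R^d)}^2$, and integrating over $y\in\R^d$ yields the claimed equivalence for the $\alpha$-weighted term.

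For the second term, I would first commute differentiation in $y$ with the projection onto $\phi_k$, writing $\partial^\alpha F(\cdot)[k]=\langle\partial^\alpha_y S(\cdot,y),\phi_k\rangle$ for every multi-index $\alpha$ with $|\alpha|=s$. Expanding the Beppo-Levi semi-norm and summing over $k$ then gives
\begin{equation}
\sum_{k}|F(\cdot)[k]|_{BL^s(\R^d)}^2=\sum_{|\alpha|=s}\frac{s!}{\alpha_1!\cdots\alpha_d!}\int_{\R^d}\sum_k\left|\langle\partial^\alpha_y S(\cdot,y),\phi_k\rangle\right|^2\,dy.
\end{equation}
Since $N=+\infty$ and $(\phi_k)$ is an orthonormal basis of $L^2(\R^d)$, Parseval's identity gives $\sum_k|\langle\partial^\alpha_y S(\cdot,y),\phi_k\rangle|^2=\|\partial^\alpha_y S(\cdot,y)\|_{L^2(\R^d)}^2$. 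Substituting and exchanging the order of integration by Fubini's theorem produces $\int_{\R^d}|S(x,\cdot)|_{BL^s(\R^d)}^2\,dx$ exactly, so this term is in fact an equality rather than a mere equivalence. Combining the two estimates with the identical data terms proves the claim.

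The main obstacle is the rigorous justification of the interchange $\partial^\alpha\langle S(\cdot,y),\phi_k\rangle=\langle\partial^\alpha_y S(\cdot,y),\phi_k\rangle$ and of the Parseval/Fubini manipulations, which rely on the summability afforded by the hypothesis $H\in\mathcal{E}^{r,s}(A_1,A_2)$: condition \eqref{eq:varIR} guarantees that $S(x,\cdot)\in H^s(\R^d)$ for a.e.\ $x$, so that the relevant weak derivatives exist and are square-integrable, while \eqref{eq:regIR} together with Lemma \ref{lem:invert} ensures that the coefficient functions $y\mapsto\langle S(\cdot,y),\phi_k\rangle$ are themselves in $H^s(\R^d)$, legitimizing differentiation under the inner product through a dominated-convergence argument.
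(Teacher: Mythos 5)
Your proposal is correct and follows essentially the same route as the paper: the paper's proof simply defers to Lemma \ref{lem:invert}, whose four assertions (commuting $\partial_y^\alpha$ with the projection onto $\phi_k$, membership of $F(\cdot)[k]$ in $H^s(\R^d)$, the Parseval/Tonelli identity for the Beppo-Levi terms, and the identity $\sum_k w[k]\|F(\cdot)[k]\|_{L^2(\R^d)}^2=\int\|S(\cdot,y)\|_{\mathcal{E}^r(\R^d)}^2\,dy$) are exactly the steps you reproduce inline, combined with the equivalence $\mathcal{E}^r(\R^d)=H^r(\R^d)$ for wavelet or Fourier bases. Your handling of the derivative interchange via the regularity hypotheses matches the paper's use of a generalized-function argument, so there is no substantive difference.
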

\begin{proof}
 The proof is straightforward once showing the results in Lemma \ref{lem:invert}.
\end{proof}

This formulation is quite intuitive: the data fidelity term allows finding a TVIR that is close to the observed data, the first regularization term allows smoothing the additive noise on the acquired PSFs and the second one interpolates the missing data.

%
%

\subsection{Numerical complexity}

Thanks to the results in \cite{Micchelli05}, computing $\hat{F}_{\mu}$ amounts to solving a \emph{finite-dimensional} system of  linear equations. 
However, for an  arbitrary orthonormal basis $( \phi_k )_{k \in \N}$, and without any further assumptions on the kernel of the RKHS $\HH(\R^d)$, evaluating $\hat{F}_{\mu}$  leads to the resolution of a \emph{full} $nN \times nN$ linear system, which is untractable for large $N$ and $n$.  

With the specific choice of norm introduced in Definition \ref{def:RKHS}, the problem simplifies to the resolution of $N$ systems of equations of size $n \times n$. This step is investigated in details in Section \ref{sec:numerical_algorithm}. 
In this paragraph we gather the results describing the numerical complexity of the method.

\begin{proposition}\label{prop:complexity1}
  The solution of \eqref{eq:defhatF} can be computed in no more than $O(Nn^3)$ operations for any choice of basis $( \phi_k )_{k \in \N}$.
\end{proposition}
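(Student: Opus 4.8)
The plan is to exploit the product structure that the norm of Definition~\ref{def:RKHS} imposes on the functional in \eqref{eq:defhatF}. The essential observation is that \emph{neither} the data-fidelity term \emph{nor} the regularizer couples distinct coordinates $k\in\{1,\dots,N\}$ of the vector-valued function $F$. Writing $g_k:=F(\cdot)[k]$, the fidelity term equals $\frac1n\sum_{i=1}^n\sum_{k=1}^N|F_i^\epsilon[k]-g_k(y_i)|^2$, while
\begin{equation*}
\|F\|_{\HH(\R^d)}^2=\sum_{k=1}^N\Big(\alpha\, w[k]\,\|g_k\|_{L^2(\R^d)}^2+(1-\alpha)\,|g_k|_{BL^s(\R^d)}^2\Big).
\end{equation*}
Hence the objective is an unconstrained sum of $N$ functionals, each depending on a single scalar function $g_k$, so minimizing \eqref{eq:defhatF} is equivalent to solving $N$ \emph{independent} scalar smoothing problems. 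First I would make this decoupling rigorous and identify, for each $k$, the scalar problem of minimizing $\frac1n\sum_{i=1}^n|F_i^\epsilon[k]-g(y_i)|^2+\mu\big(\alpha w[k]\|g\|_{L^2}^2+(1-\alpha)|g|_{BL^s}^2\big)$, a radial-basis/smoothing-spline regression in the native (semi-)Hilbert space attached to the weight $\alpha w[k]+(1-\alpha)|\cdot|_{BL^s}$.

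Next I would invoke the representer theorem — the scalar specialization of the results of \cite{Micchelli05}, or equivalently classical native-space theory — to collapse each of these problems to finite dimension. The hypothesis $s>\frac d2$ guarantees, through Sobolev embedding, that the point evaluations $g\mapsto g(y_i)$ are continuous linear functionals, so the theorem applies and the minimizer $\hat g_k$ is a combination of the $n$ kernel translates $\{k_k(\cdot,y_i)\}_{i=1}^n$ (augmented by a fixed, $n$-independent number $Q=\binom{s-1+d}{d}$ of polynomial terms in the degenerate case; see below). Its coefficients then solve a single dense symmetric $n\times n$ (respectively $(n+Q)\times(n+Q)$) linear system of kernel-ridge type, $(\mathbf{K}_k+n\mu\,\Id)c=(F_i^\epsilon[k])_{1\le i\le n}$, where $\mathbf{K}_k[i,j]=k_k(y_i,y_j)$.

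Finally I would tally the operation count. Assembling the $k$-th Gram matrix costs $O(n^2)$ kernel evaluations and solving the system by a direct factorization (Cholesky/LU) costs $O(n^3)$; doing this for each of the $N$ coordinates yields a total of $O(Nn^3)$, uniformly in the choice of basis $(\phi_k)_{k\in\N}$. Note that this is the very gain promised in the surrounding text: a naive treatment of $\HH(\R^d)$ as a generic vector-valued RKHS would produce one coupled $nN\times nN$ system at cost $O(n^3N^3)$, whereas separability removes a factor $N^2$.

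The main obstacle is the careful handling of the Beppo-Levi term, which is only a \emph{semi}-norm whose null space consists of the polynomials of degree $<s$. When $\alpha>0$ the term $\alpha w[k]\|g\|_{L^2}^2$ makes $\|\cdot\|_{\HH_k}$ equivalent to a genuine $H^s$ norm (recall $|g|_{BL^s}\asymp|g|_{H^s}$ and $s>\frac d2$), so the space is a true RKHS and the plain $n\times n$ system applies. When $\alpha=0$, however, the space is merely a conditionally positive definite native space, and the representer theorem must be stated in its augmented form; the delicate point is to verify that the augmentation dimension $Q$ is independent of $n$ and $N$, so that the per-coordinate cost remains $O(n^3)$. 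The precise shape of the reduced systems is deferred to Section~\ref{sec:numerical_algorithm}; for the complexity bound stated here it suffices that each coordinate reduces to an $O(n^3)$ linear solve.
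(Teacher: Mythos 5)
Your proposal is correct and follows essentially the same route as the paper: the decoupling into $N$ independent scalar smoothing problems is exactly Lemma \ref{lem:linebyline}, the reduction of each to an $n\times n$ kernel system is Theorem \ref{theo:RKHSG} combined with Proposition \ref{prop:rbf}, and the $O(n^2)$ assembly plus $O(n^3)$ solve per coordinate gives the same $O(Nn^3)$ tally as Algorithm \ref{alg:FISTAP}. Your extra care with the conditionally positive definite case (the polynomial null space of the Beppo-Levi semi-norm when $\alpha w[k]=0$, requiring an augmented system of fixed size $Q$) is a point the paper itself only addresses in a remark deferring to standard textbooks, so it does not change the approach.
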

\begin{proof}
  See Section \ref{sec:numerical_algorithm}.
\end{proof}

In addition, if the weight function $w$ is piecewise constant, some $n\times n$ matrices are identical, allowing to compute an LU factorization once for all and using it to solve many systems. This yields the following result.

\begin{proposition}\label{prop:complexity2}
In the specific case where $(\phi_k)_{k\in \N}$ is a wavelet basis, it is natural to set function $w$ as a constant over each wavelet subband \cite[Thm. 9.4]{Mallat-Book}.  
Then, the solution of \eqref{eq:defhatF} can be computed in no more than $O\left(\frac{\log(N)}{d}n^3 + Nn^2\right)$ operations.
\end{proposition}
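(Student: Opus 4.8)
The plan is to build directly on Proposition \ref{prop:complexity1} and on the reduction carried out in Section \ref{sec:numerical_algorithm}, exploiting the fact that the quadratic regularizer of Definition \ref{def:RKHS} \emph{decouples} across the coefficient index $k$. First I would observe that both terms of the cost in \eqref{eq:defhatF} split as sums over $k\in\{1,\dots,N\}$: the data-fidelity term reads $\frac{1}{n}\sum_i\sum_k (F_i^\epsilon[k]-F(y_i)[k])^2$, while $\|F\|_{\HH(\R^d)}^2=\sum_k \big(\alpha w[k]\int_{\R^d}|F(y)[k]|^2\,dy+(1-\alpha)|F(\cdot)[k]|_{BL^s(\R^d)}^2\big)$. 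Hence the minimization \eqref{eq:defhatF} separates into $N$ independent scalar smoothing problems, the $k$-th one being a penalized regression in a scalar RKHS whose norm is $\alpha w[k]\|\cdot\|_{L^2}^2+(1-\alpha)|\cdot|_{BL^s}^2$.

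Next I would invoke the representer theorem (as in Section \ref{sec:numerical_algorithm}) to write each scalar solution as $\hat F_\mu(\cdot)[k]=\sum_{j=1}^n c_j^{(k)}\kappa_k(\cdot,y_j)$, where the coefficients $c^{(k)}\in\R^n$ solve a linear system $M_k c^{(k)}=b^{(k)}$, with $b^{(k)}[i]=F_i^\epsilon[k]$, $M_k=\mathbf{G}_k+n\mu I_n$ and $\mathbf{G}_k[i,j]=\kappa_k(y_i,y_j)$ the Gram matrix of the reproducing kernel $\kappa_k$. The reduction to $N$ systems of size $n\times n$, giving the $O(Nn^3)$ bound, is exactly the content of Proposition \ref{prop:complexity1}. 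The crucial additional observation here is that $\kappa_k$ — hence $M_k$ — depends on the index $k$ \emph{only through the scalar} $w[k]$, since $w[k]$ enters the penalty only as a multiplicative constant on the $L^2$ part, leaving the $BL^s$ part untouched. Consequently $w[k]=w[k']$ forces $M_k=M_{k'}$.

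The gain then comes from counting distinct matrices. When $(\phi_k)$ is a wavelet basis and $w$ is set constant over each subband, $w$ takes only as many distinct values as there are subbands. In dimension $d$ with $N$ coefficients, the number of scales is $J\asymp\log(N)/d$ (since $2^{dJ}\asymp N$), and each scale carries $2^d-1$ wavelet subbands; thus the number of distinct matrices is $L=O(\log(N)/d)$ for fixed $d$. I would precompute one LU factorization per distinct matrix at cost $O(n^3)$ each, for a total of $O(Ln^3)=O(\frac{\log(N)}{d}n^3)$, and then reuse these factorizations to solve the $N$ systems by forward/backward substitution at cost $O(n^2)$ per right-hand side, for a total of $O(Nn^2)$. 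Summing yields the announced $O(\frac{\log(N)}{d}n^3+Nn^2)$.

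The main obstacle is the second step: rigorously establishing that the reproducing kernel $\kappa_k$ of the $k$-th scalar problem depends on $k$ solely through $w[k]$, which requires the explicit form of $\kappa_k$ derived in Section \ref{sec:numerical_algorithm}. Once this is in hand, the subband count and the reuse of the LU factorization are routine, the only bookkeeping care being the constant $2^d-1$ subbands per scale, which is absorbed into the $O(\cdot)$ for fixed $d$.
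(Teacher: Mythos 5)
Your proposal is correct and follows essentially the same route as the paper: the decoupling into $N$ scalar problems (Lemma \ref{lem:linebyline}), the representer-theorem reduction to $n\times n$ systems $(G+n\mu\Id)c=z$ (Theorem \ref{theo:RKHSG}), the key observation that the kernel $\rho_k$ depends on $k$ only through $w[k]$ (Proposition \ref{prop:rbf}, where $\mathcal{F}[\rho_k](\xi)=\left((1-\alpha)\|\xi\|^{2s}+\alpha w[k]\right)^{-1}$), and the reuse of one LU factorization per distinct weight value, exactly as in Algorithm \ref{alg:FISTAP}. The "obstacle" you flag is in fact immediate, since the RKHS norm itself depends on $k$ only through the scalar $w[k]$, so equal weights give identical Hilbert spaces and hence identical Gram matrices.
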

\begin{proof}
  See Section \ref{sec:numerical_algorithm}.
\end{proof}

Finally let us remark that for well chosen bases $(\phi_k)_{k\in \N}$ the impulse responses can be well approximated using a small number $N$ of atoms. Such instances of bases include Fourier bases, wavelet bases with appropriate properties and the basis formed with the principal components of the impulse responses. This makes the method tractable even in very large scale applications.

To conclude this paragraph, let us mention that the representation of an operator of type \eqref{eq:hat_T} can be used to evaluate matrix-vector products rapidly. We refer the interested reader to \cite{escande2016approximation} for more details.

\subsection{Convergence rates}

The convergence of the proposed estimator with respect to the number $n$ of observations is captured by the theorems of this section. We show that the approximation efficiency of our method depends on the geometry of the set of data locations, and - in particular - on the fill and separation distances defined below.

\begin{definition}[Fill distance] \label{def:fill_dist}
 The fill distance of $Y = \{ y_1, \ldots, y_n \} \subset \Omega$ is defined as:
 \begin{equation}
  h_{Y,\Omega} = \sup_{y \in \Omega} \min_{1 \leq j \leq n} \|y - y_j \|_2.
 \end{equation}
This is the distance for which any $y \in \Omega$ is at most at a distance $h_{Y,\Omega}$ of $Y$. It can also be interpreted as the radius of the largest ball with center in $\Omega$ that does not intersect $Y$.
\end{definition}

\begin{definition}[Separation distance] \label{def:speration_dist}
  The separation distance of $Y = \{ y_1, \ldots, y_n \} \subset \Omega$ is defined as:
 \begin{equation}
  q_{Y,\Omega} = \frac{1}{2} \min_{i \neq j} \|y_i - y_j \|_2.
 \end{equation}
 This quantity gives the maximal radius $r > 0$ such that all balls $\{y \in \R^d : \| y - y_j\|_2 < r \}$ are disjoint.
\end{definition}

The following condition plays a key role in our analysis \cite{narcowich1991norms,schaback1995error}.
\begin{definition}[Quasi-uniformity condition] \label{def:quasi-uniformity}
  A set of data locations $Y = \{ y_1, \ldots, y_n \} \subset \Omega$ is said to be quasi-uniform with respect to a constant $B > 0$ if
  \begin{equation}
    q_{Y,\Omega} \leq h_{Y,\Omega} \leq B q_{Y,\Omega}.
  \end{equation}
\end{definition}

\begin{remark}
Our main theorems will be stated under a quasi-uniformity condition of the sampling set. 
It is likely that this hypothesis can be refined using more stable reconstruction schemes as is commonly done in the reconstruction of bandlimited functions \cite{feichtinger1995efficient}.
\end{remark}

\begin{theorem} \label{thm:main_result}
Assume that $H \in \mathcal{E}^{r,s}(A_1, A_2)$ and that its SVIR $S$ is sampled using model \eqref{eq:datamodel} under the quasi-uniformity condition given in Definition \ref{def:quasi-uniformity}. Then the estimating operator $\hat H$ with SVIR $\hat S$ defined in equation \eqref{eq:hat_T} satisfies the following inequality
\begin{equation}\label{eq:main_result}
 \E \left( \| H - \hat H \|_{HS}^2 \right) \lesssim N^{-\frac{2r}{d}} + (N \sigma^2 n^{-1})^{\frac{2s}{2s+d}} (1-\alpha)^{-\frac{2s+2d}{2s+d}} ,
\end{equation}
for $\mu \propto (N \sigma^2 n^{-1})^{\frac{2s}{2s+d}} (1-\alpha)^{\frac{-d}{2s+d}}$.  This inequality holds uniformly on the ball $\mathcal{E}^{r,s}(A_{1},A_{2})$.
\end{theorem}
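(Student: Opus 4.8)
The plan is to bound the Hilbert--Schmidt error through a bias--variance decomposition, after reducing it to a family of independent scalar smoothing problems. First I would use the change of variables $x\mapsto x-y$ relating $K$ and $S$, which is measure preserving, to rewrite $\|\hat H-H\|_{HS}^2=\int_\Omega \|\hat S(\cdot,y)-S(\cdot,y)\|_{L^2(\R^d)}^2\,dy$. Writing $P_N S(\cdot,y)=\sum_{k=1}^N F(y)[k]\phi_k$, both $\hat S(\cdot,y)$ (by \eqref{eq:hat_T}) and $P_N S(\cdot,y)$ lie in $V_N$, so Pythagoras splits the error into an \emph{estimation} part $\int_\Omega\|\hat F_\mu(y)-F_N(y)\|_{\R^N}^2\,dy$, with $F_N$ the truncation of $F$ to its first $N$ coordinates, and a \emph{truncation} part $\int_\Omega \sum_{k>N}|F(y)[k]|^2\,dy$. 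The truncation part is purely deterministic: since $w[k]\gtrsim(1+k^2)^{r/d}\gtrsim N^{2r/d}$ for $k>N$ (Definition \ref{def:Er}), one has $\sum_{k>N}|F(y)[k]|^2\le (\min_{k>N}w[k])^{-1}\sum_{k>N}w[k]|F(y)[k]|^2$, and integrating over $\Omega$ together with assumption \eqref{eq:regIR} gives the first term $\lesssim A_2\,N^{-2r/d}$, uniformly on the ball $\mathcal{E}^{r,s}(A_1,A_2)$.

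The decisive structural observation is that both the data-fidelity term and the norm $\|\cdot\|_{\HH(\R^d)}$ of Definition \ref{def:RKHS} split as sums over the coordinate index $k$. Hence Problem \eqref{eq:defhatF} decouples into $N$ independent scalar smoothing-spline problems, the $k$-th one estimating $g_k:=F(\cdot)[k]$ from the noisy samples $F_i^\epsilon[k]=g_k(y_i)+\epsilon_i[k]$ with penalty $\mu R_k(g)$, where $R_k(g)=\alpha\,w[k]\|g\|_{L^2(\R^d)}^2+(1-\alpha)|g|_{BL^s(\R^d)}^2$. Each scalar estimator being a linear function of its data, I would write $\hat g_k=\bar g_k+v_k$, where $\bar g_k$ is the estimator fed with the noiseless samples $g_k(y_i)$ and $v_k$ is the same linear smoother applied to the noise $\epsilon[k]$. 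Since $\E[v_k]=0$ and the cross term vanishes, $\E\|\hat g_k-g_k\|_{L^2(\Omega)}^2=\|\bar g_k-g_k\|_{L^2(\Omega)}^2+\E\|v_k\|_{L^2(\Omega)}^2$, reducing the estimation error to a deterministic bias $\sum_k\|\bar g_k-g_k\|_{L^2(\Omega)}^2$ plus a stochastic variance $\sum_k\E\|v_k\|_{L^2(\Omega)}^2$.

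For the bias I would invoke the scattered-data sampling inequalities from radial basis function theory (\cite{arcangeli2007extension,narcowich1991norms,schaback1995error}): for $u\in H^s(\Omega)$ with $s>d/2$ and $h_{Y,\Omega}$ small enough, $\|u\|_{L^2(\Omega)}^2\lesssim h_{Y,\Omega}^{2s}|u|_{H^s(\Omega)}^2+h_{Y,\Omega}^{d}\|u|_Y\|_2^2$. Applying this to $\bar g_k-g_k$, combined with the minimality inequality $\frac1n\|(\bar g_k-g_k)|_Y\|_2^2+\mu R_k(\bar g_k)\le \mu R_k(g_k)$ and the quasi-uniformity relation $h_{Y,\Omega}^d\asymp n^{-1}$ (Definition \ref{def:quasi-uniformity}), bounds each bias term by $R_k(g_k)$ times $\bigl(\mu+h_{Y,\Omega}^{2s}(1-\alpha)^{-1}\bigr)$. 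Summing and using the two budgets $\sum_k|g_k|_{BL^s(\R^d)}^2\lesssim A_1$ (which follows from \eqref{eq:varIR} and Lemma \ref{lem:invert} through Parseval) and $\sum_k w[k]\|g_k\|_{L^2(\R^d)}^2\le A_2$ (which is exactly \eqref{eq:regIR}) then keeps the total bias bounded \emph{independently} of $N$, the budgets being shared across coordinates.

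The variance is the delicate part, and I expect it to be the main obstacle. The crude minimality inequality only yields the non-vanishing bound $\E\|v_k\|_{L^2(\Omega)}^2\lesssim\sigma^2$, so I would instead compute the expected squared norm of the linear smoother exactly, via a degrees-of-freedom (trace) argument. Diagonalising the Beppo--Levi reproducing kernel sampled at the quasi-uniform points and using that its eigenvalues grow like $j^{2s/d}$ (Weyl-type asymptotics, the resulting series converging precisely because $s>d/2$), the effective degrees of freedom scale like $(\mu(1-\alpha))^{-d/(2s)}$, giving $\sum_k\E\|v_k\|_{L^2(\Omega)}^2\lesssim N\sigma^2 n^{-1}(\mu(1-\alpha))^{-d/(2s)}$; here the factor $N$ reflects the additive per-coordinate noise, in contrast with the shared smoothness budget controlling the bias. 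Balancing bias against variance then selects $\mu\propto (N\sigma^2 n^{-1})^{\frac{2s}{2s+d}}(1-\alpha)^{\frac{-d}{2s+d}}$ and, after collecting all terms, produces \eqref{eq:main_result}. The genuine difficulties are establishing the spectral behaviour of the sampled Beppo--Levi kernel at scattered quasi-uniform locations, and the careful bookkeeping of the powers of $(1-\alpha)$, $N$ and $h_{Y,\Omega}$ so that the per-coordinate estimates recombine, uniformly over $\mathcal{E}^{r,s}(A_1,A_2)$, into the announced rate.
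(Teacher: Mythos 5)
Your proposal follows essentially the same route as the paper's proof: split off the discretization/truncation error $N^{-2r/d}$ using the weights $w[k]\gtrsim (1+k^2)^{r/d}$, decouple \eqref{eq:defhatF} into $N$ scalar smoothing problems, bound the bias via the sampling inequality of \cite{arcangeli2007extension} combined with the minimality inequality and the quasi-uniformity relation $h_{Y,\Omega}^d \asymp n^{-1}$, control the variance through a trace (degrees-of-freedom) analysis of the linear smoother with eigenvalue growth $i^{2s/d}$ (the paper does exactly this via Utreras' thin-plate spline spectral estimates and Weyl monotonicity), and finally sum the per-coordinate bounds against the shared budgets $A_1, A_2$ and balance to choose $\mu$. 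The one wrinkle is bookkeeping you yourself flag: the paper's variance bound carries an additional factor $(1-\alpha)^{-1}$, which arises when transferring the discrete norm $\frac{1}{n}\sum_i v_k(y_i)^2$ back to $\|v_k\|_{L^2(\Omega)}^2$ through the sampling inequality (the $h_{Y,\Omega}^{2s}|v_k|_{H^s}^2$ term), and it is precisely this factor that makes the stated choice of $\mu$ and the exponent $-\frac{2s+2d}{2s+d}$ of $(1-\alpha)$ in \eqref{eq:main_result} come out consistently.
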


\begin{proof}
 See Section \ref{sec:proofs}.
\end{proof}

In applications where the user can choose the number of observations $N$ (e.g. if it is sufficiently large), the upper-bound \eqref{eq:main_result} can be optimized with respect to $N$.
\begin{corollary} \label{thm:main_result2}
Assume that $H \in \mathcal{E}^{r,s}(A_1, A_2)$ and that its SVIR $S$ is sampled using model \eqref{eq:datamodel} under the quasi-uniformity condition given in Definition \ref{def:quasi-uniformity}. Then the estimator $\hat H$ with SVIR $\hat S$ defined in equation \eqref{eq:hat_T} satisfies the following inequality
\begin{equation}\label{eq:main_result2}
 \E \left( \| H - \hat H \|_{HS}^2 \right) \lesssim (\sigma^2 n^{-1} (1-\alpha)^{-\left(d/s+1\right)})^{\frac{2q}{2q+d}},
\end{equation}
with the relation $1/q = 1/r + 1/s$, for $\mu \propto (\sigma^2 n^{-1})^{\frac{2q}{2q+d}} (1-\alpha)^{\frac{-d}{2s+d}}$ and $N \propto (\sigma^{2} n^{-1})^{-\frac{dq}{r(2q+d)}} (1-\alpha)^{\frac{(d^2+sd)q}{rs(2q+d)}}$.
This inequality holds uniformly on the ball $\mathcal{E}^{r,s}(A_{1},A_{2})$.
\end{corollary}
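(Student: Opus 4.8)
The plan is to treat Corollary~\ref{thm:main_result2} as a purely deterministic optimization of the bound already furnished by Theorem~\ref{thm:main_result}; no new probabilistic input is needed. I would write the right-hand side of \eqref{eq:main_result} as a function of the free integer $N$,
\begin{equation*}
g(N) = c_1\, N^{-\frac{2r}{d}} + c_2\, N^{\frac{2s}{2s+d}}, \qquad c_2 = \left(\sigma^2 n^{-1}\right)^{\frac{2s}{2s+d}} (1-\alpha)^{-\frac{2s+2d}{2s+d}},
\end{equation*}
where $c_1$ collects the absolute constants hidden in $\lesssim$. Since $r,s>d/2>0$, the first term strictly decreases and the second strictly increases in $N$, and $g$ admits a unique interior minimizer over $N>0$; treating $N$ as continuous (and rounding at the end, which affects only the constants) I would locate it by equating the two contributions up to a constant factor, i.e. by solving $N^{-2r/d} \asymp c_2\, N^{2s/(2s+d)}$.

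First I would solve this balance for $N$. Collecting the powers of $N$ gives $N^{\,2r/d + 2s/(2s+d)} \asymp c_2^{-1}$, and a direct computation shows
\begin{equation*}
\frac{2r}{d} + \frac{2s}{2s+d} = \frac{2\left(2rs + rd + sd\right)}{d\,(2s+d)}.
\end{equation*}
The key algebraic simplification is the harmonic-mean relation $1/q = 1/r + 1/s$, equivalently $q = rs/(r+s)$, which yields the factorization $2rs + rd + sd = (2q+d)(r+s)$. Substituting this identity turns the exponent of $N$ into a clean expression and produces the announced value $N \propto (\sigma^{2} n^{-1})^{-\frac{dq}{r(2q+d)}} (1-\alpha)^{\frac{(d^2+sd)q}{rs(2q+d)}}$, after using $dq/r = ds/(r+s)$ to rewrite the $\sigma^2 n^{-1}$ exponent.

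Next I would substitute this optimal $N$ back into either term of $g$ --- the first, $N^{-2r/d}$, being the most economical. Using $2rs/(r+s) = 2q$, the exponent of $\sigma^2 n^{-1}$ collapses to $\tfrac{2q}{2q+d}$, and the identity $r/(r+s) = q/s$ turns the exponent of $(1-\alpha)$ into $-(d/s+1)\tfrac{2q}{2q+d}$, which is exactly the rate \eqref{eq:main_result2}. Finally, the regularization level is obtained by inserting the optimal $N$ into the prescription $\mu \propto (N\sigma^2 n^{-1})^{\frac{2s}{2s+d}}(1-\alpha)^{-\frac{d}{2s+d}}$ of Theorem~\ref{thm:main_result}, giving the stated $\mu$.

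I expect no genuine obstacle here: the argument is elementary calculus and the substance of the statement lies entirely in Theorem~\ref{thm:main_result}. The only points requiring care are the exponent bookkeeping --- which is tamed by systematically invoking $q=rs/(r+s)$ and the factorization $2rs+rd+sd=(2q+d)(r+s)$ --- and checking that the optimal $N$ is admissible, namely $N\geq 1$, which holds in the relevant asymptotic regime $\sigma^2 n^{-1}\to 0$; rounding $N$ to the nearest integer alters only the multiplicative constants absorbed in $\lesssim$.
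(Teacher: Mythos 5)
Your proposal is correct and takes essentially the same route as the paper: the paper's own proof likewise balances the two terms of the bound \eqref{eq:main_result} in $N$, solves for $N$ (obtaining $N \propto (\sigma^{-2} n)^{\frac{2sd}{4rs+2rd+2sd}}(1-\alpha)^{\frac{(2s+2d)d}{4rs+2rd+2sd}}$, which matches yours after using $2rs+rd+sd=(2q+d)(r+s)$), and substitutes back into $N^{-2r/d}$ to get the rate; your added care about uniqueness of the minimizer, rounding to an integer, and $N\geq 1$ only sharpens the bookkeeping. One minor caveat: plugging your optimal $N$, including its $(1-\alpha)$ factor, into the prescription $\mu \propto (N\sigma^2 n^{-1})^{\frac{2s}{2s+d}}(1-\alpha)^{-\frac{d}{2s+d}}$ of Theorem \ref{thm:main_result} actually produces an extra positive power of $(1-\alpha)$ relative to the corollary's stated $\mu$ (the stated $\mu$ corresponds to dropping the $(1-\alpha)$ factor in $N$), but this slight inconsistency sits in the paper's statement itself rather than in your argument.
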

\begin{proof}
 See Section \ref{sec:proofs}.
\end{proof}

Corollary \ref{thm:main_result2} gives some insights on the estimator behavior. In particular:
\begin{itemize}
 \item It provides an explicit way of choosing the value of the regularization parameter $\mu$: it should decrease as the number of observations increases.
 \item If the number of observations $n$ is small, it is unnecessary to project the impulse responses on a high dimensional basis (i.e. $N$ large). The basic reason is that not enough information has been collected to reconstruct the fine details of the kernel.
 \item The optimal value of $\alpha$ in the corollary is $\alpha=0$, suggesting that the best option is to not use the additional regularizer $\alpha \int_{y\in \R^d} \sum_{k=1}^N w[k] \left|G(y)[k]\right|^2\,dy$. 
 This phenomenon is due to a rough upper-bound in the proof. Unfortunately, we did not manage to obtain finer estimates of some eigenvalues in the proof.  From a practical perspective, we observed a good behavior of this additional term in our numerical experiments and therefore decided to present the theory including this regularizer.
\end{itemize}


Finally, to conclude this section on convergence rates, it is shown that, under mild assumptions on the basis $(\phi_{k})_{k \geq 1}$, the rate of convergence $(\sigma^2 n^{-1})^{\frac{2q}{2q+d}}$ in inequality \eqref{eq:main_result2}  is optimal  in the case of Gaussian noise and for the expected Hilbert-Schmidt norm $ \E \left\| H - \hat H \right\|_{HS}^2$.
Optimality of the rate of convergence  \eqref{eq:main_result2} has to be understood in the minimax sense as classically done in the literature on nonparametric statistics (we refer to \cite{MR2724359} for a detailed introduction to this topic). For simplicity, this optimality result is stated in the case where the domain  $\Omega = [0,1]^{d}$ is the d-dimensional hypercube.  

\begin{theorem} \label{thm:minimax}
Let $H$ be a linear operator belonging to $\mathcal{E}^{r,s}(A_{1},A_{2})$.
Define $q$ by $1/q = 1/r + 1/s$.  
Suppose that the weights in Definition \ref{def:Er} satisfy  $w[k] \leq c_{1} (1+k^2)^{r/d}$  for all $k \in \N$ and some constant $c_{1} > 0$. 
Assume that the PSF locations $y_{1},\ldots,y_{n}$ satisfy the quasi-uniformity condition given in Definition \ref{def:quasi-uniformity}.
Assume that the random values $(\epsilon_i[k])_{i,k}$ in the observation model \eqref{eq:datamodel} are iid Gaussian with zero mean and variance $\sigma^2$.

Then, there exists a constant $c_{0} > 0$ such that
\begin{equation}
\inf_{\hat H } \sup_{H \in \mathcal{E}^{r,s}(A_{1},A_{2})}\E \left\| \hat H - H \right\|_{HS}^2  \geq c_{0}  (\sigma^2 n^{-1})^{\frac{2q}{2q+d}}, \label{eq:lowbound}
\end{equation}
where the above infimum is taken over all possible estimators $\hat H$ (linear integral operators) with SVIR $\hat S \in L^2(\R^d \times \R^d)$ defined as a measurable function.
\end{theorem}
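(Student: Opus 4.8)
The plan is to prove \eqref{eq:lowbound} by Assouad's lemma \cite{MR2724359}, after reducing the operator estimation problem to a Gaussian nonparametric regression problem for the impulse-response coefficients. First I would rewrite the loss in terms of the SVIR: the change of variables $x \mapsto x-y$ together with $S(x,y)=K(x+y,y)$ gives $\|\hat H - H\|_{HS}^2 = \|\hat S - S\|_{L^2(\R^d\times\Omega)}^2$, and expanding in the orthonormal basis $(\phi_k)$ yields $\|\hat H - H\|_{HS}^2 = \sum_k \|\hat F[k]-F[k]\|_{L^2(\Omega)}^2$, where $F(y)[k]=\langle S(\cdot,y),\phi_k\rangle$. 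Parseval in the $x$ variable then converts the two membership conditions of the ball $\mathcal{E}^{r,s}(A_1,A_2)$ into coefficientwise constraints: condition \eqref{eq:regIR} becomes $\sum_k w[k]\,\|F[k]\|_{L^2(\Omega)}^2 \le A_2$ with $w[k]\asymp(1+k^2)^{r/d}$ (here both the lower bound from Definition \ref{def:Er} and the extra hypothesis $w[k]\le c_1(1+k^2)^{r/d}$ are used, the latter being essential to control the $\mathcal{E}^r$-budget of the instances I construct), while condition \eqref{eq:varIR} becomes $\sum_k \|F[k]\|_{H^s(\R^d)}^2 \le A_1$. The task is then to lower bound the minimax $L^2$-risk of estimating the family $(F[k])_k$ from the observations $F_i^\epsilon[k]=F(y_i)[k]+\epsilon_i[k]$.

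Next I would build a hypercube of hard instances indexed by $\omega\in\{0,1\}^M$. Fix a smooth compactly supported bump $\psi$, a width $h>0$, and place $\asymp h^{-d}$ translates $\psi_{h,j}(\cdot)=\psi((\cdot-z_j)/h)$ with \emph{disjoint supports} on a grid of centers $z_j\subset\Omega$. Fix also a frequency band $\mathcal{K}\subset\N$ with $|\mathcal{K}|\asymp m_x$ and $w[k]\asymp m_x^{2r/d}$ for $k\in\mathcal{K}$. To $\omega=(\omega_{k,j})$ I associate $F_\omega(y)[k]=\delta\sum_j \omega_{k,j}\,\psi_{h,j}(y)$ for $k\in\mathcal{K}$ (and $0$ otherwise), so that $M=m_x h^{-d}$. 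Three quantities govern the argument. The per-bit separation is exact because the bumps are disjoint and the modes orthogonal: flipping one bit changes the loss by $\delta^2 h^d\|\psi\|_{L^2}^2$. The two budgets are $\sum_k w[k]\|F_\omega[k]\|_{L^2}^2 \lesssim m_x^{1+2r/d}\delta^2$ and $\sum_k\|F_\omega[k]\|_{H^s}^2 \lesssim m_x\,\delta^2 h^{-2s}$ (using $\|\psi_{h,j}\|_{L^2}^2\asymp h^d$ and $\|\psi_{h,j}\|_{H^s}^2\asymp h^{d-2s}$). Finally, for two instances differing in a single bit, the Gaussian KL divergence is $\tfrac{1}{2\sigma^2}\sum_i\delta^2\psi_{h,j}(y_i)^2 \lesssim \delta^2 n h^d/\sigma^2$, where I invoke the quasi-uniformity of Definition \ref{def:quasi-uniformity}: on $\Omega=[0,1]^d$ it forces fill/separation distance $\asymp n^{-1/d}$, hence a design density $\asymp n$ and at most $\lesssim n h^d$ points in each bump.

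I would then calibrate $(\delta,h,m_x)$ so that both smoothness budgets are saturated (at a constant fraction of $A_1,A_2$, which also certifies $S_\omega\in\mathcal{E}^{r,s}(A_1,A_2)$) and the per-bit KL stays below a fixed constant $c<2$. Setting the three relations $m_x^{1+2r/d}\delta^2\asymp 1$, $m_x\delta^2 h^{-2s}\asymp 1$, $\delta^2 n h^d/\sigma^2\asymp 1$ to equality and solving gives $\delta^2\asymp m_x^{-1-2r/d}$, $h^d\asymp m_x^{-r/s}$, and $m_x\asymp(n/\sigma^2)^{1/(1+2r/d+r/s)}$. With a bounded per-bit KL, Assouad's lemma yields a lower bound of order $M\cdot\delta^2 h^d = m_x\delta^2 \asymp m_x^{-2r/d}$, and substituting the value of $m_x$ gives
\[
\inf_{\hat H}\sup_{H\in\mathcal{E}^{r,s}(A_1,A_2)}\E\|\hat H - H\|_{HS}^2 \gtrsim (\sigma^2 n^{-1})^{\frac{2r/d}{\,1+2r/d+r/s\,}} = (\sigma^2 n^{-1})^{\frac{2q}{2q+d}},
\]
where the last equality is the algebraic identity $\tfrac{2r/d}{1+2r/d+r/s}=\tfrac{2rs}{2rs+dr+ds}=\tfrac{2q}{2q+d}$ with $1/q=1/r+1/s$.

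The main obstacle is the \emph{simultaneous} calibration of the three scales: the rate is governed by the combined exponent $q$ only because the construction activates the $x$-regularity budget (controlled by $r$) and the $y$-regularity budget (controlled by $s$) at the same time; using bumps that saturate only one of them would give the slower single-direction exponent. A secondary technical point is verifying that the instances genuinely lie in the ball $\mathcal{E}^{r,s}(A_1,A_2)$ — this is exactly where the upper bound $w[k]\le c_1(1+k^2)^{r/d}$ enters — and checking the consistency conditions $h\gtrsim n^{-1/d}$ (so each bump captures $\gtrsim 1$ design points, which one verifies reduces to $1+2r/d\ge 0$) and $N\gtrsim m_x$ (the perturbed coefficients must be observed; note $m_x$ has exactly the order of the optimal $N$ in Corollary \ref{thm:main_result2}). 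An equivalent route via Varshamov--Gilbert packing and Fano's inequality is possible, but Assouad is cleanest here since the disjoint supports make the per-instance separation exactly additive over the bits.
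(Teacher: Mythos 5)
Your proposal is correct and follows essentially the same route as the paper's proof: a reduction of the Hilbert--Schmidt risk to coefficientwise regression for the IRC, an Assouad cube of separable perturbations $\phi_k(x)$ times compactly supported bumps in $y$, the quasi-uniformity condition used to bound the per-bit likelihood ratio through the number of design points per bump, and the same two-budget calibration producing the exponent $\frac{2q}{2q+d}$ via the identity $\frac{2rs}{2rs+dr+ds}=\frac{2q}{2q+d}$. The only differences are cosmetic: the paper perturbs all frequencies $k \leq k_1$ and a full multiscale wavelet family at scales $j < j_1$ with an explicit Girsanov computation, whereas you use a single frequency band and a single scale of disjoint bumps together with a citation of Assouad's lemma.
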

\begin{proof}
 See Section \ref{sec:proofs}.
\end{proof}


\begin{remark}
In this paper we only study the robustness of the method towards perturbations over the discretization of the impulse responses. 
It is also of great interest to study the behavior of the method with respect to  to jitter errors, i.e. what happens if the impulse responses are sampled at perturbed positions $y_i'$ instead of the exact $y_i$? 
This question is left aside in this paper but the analysis in \cite{Feichtinger2004} suggests than one can also prove some robustness of the method with respect to  those type of perturbations.
\end{remark}

\subsection{Illustrations and numerical experiments}

In this section, we highlight the main ideas of the paper through two numerical experiments.

\paragraph{A 1D estimation problem}

In the first experiment, we wish to reconstruct the operator $H$ with kernel $K(x,y)=\frac{1}{\sqrt{|2\pi \Sigma(y)|}} \exp\left( - \langle \Sigma(y)^{-1}(y-x),y-x\rangle\right)$, with diagonal covariance matrices 
$\Sigma(y) = \sigma(y) \Id$ where  $\sigma(y) = 1 + 2\max{(1-y,y)}$ for $y \in [0,1]$.
The SVIR (Space Varying Impulse Response) and the IRC (Impulse Response Coefficients) of this kernel are shown in Fig. \ref{fig:xp_interp1D} (a) and (b). 
Here, we projected the impulse responses on a discrete orthogonal wavelet basis. Notice how the information is compacted,  in (b) compared to (a): most of the information is concentrated on just a fews rows.

In Fig. \ref{fig:xp_interp1D} (c), we show the $7$ impulse responses that are used to estimate the kernel. 
In Fig. \ref{fig:xp_interp1D} (d), we show their projection on the orthogonal wavelet basis. 
The problem studied in this paper is to estimate the SVIR in (a) from the data in (d).
Given the noisy dataset, the proposed algorithm simultaneously interpolates along rows and denoises along columns to obtain the results in Figure \ref{fig:xp_interp1D} (e-h). Notice how the regularization in the vertical direction ($\alpha>0$) allows improving the estimator: the result in (g) is very similar to (a).

 \begin{figure}[htp]
	\centering
	\begin{subfigure}[b]{0.20\textwidth}
		\includegraphics[width=\textwidth]{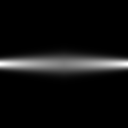}
		\caption{Exact SVIR $S$}
	\end{subfigure}
	\quad	
	\begin{subfigure}[b]{0.2\textwidth}
		\includegraphics[width=\textwidth]{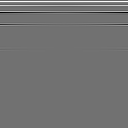}
		\caption{Exact IRC $F$}
	\end{subfigure}
	\quad	
	\begin{subfigure}[b]{0.2\textwidth}
		\includegraphics[width=\textwidth]{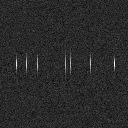}
		\caption{Observed $S(\cdot,y_i)$}
	\end{subfigure}
	\quad	
	\begin{subfigure}[b]{0.2\textwidth}
		\includegraphics[width=\textwidth]{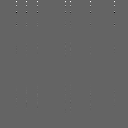}
		\caption{The data set $F_i^\epsilon$}
	\end{subfigure}

	\begin{subfigure}[b]{0.2\textwidth}
		\includegraphics[width=\textwidth]{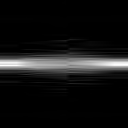}
		\caption{Estimated SVIR $\widehat{S}$-- without denoising ($\alpha = 0$)}
	\end{subfigure}
	\quad
	\begin{subfigure}[b]{0.2\textwidth}
		\includegraphics[width=\textwidth]{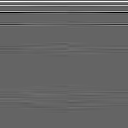}
		\caption{Estimated IRC $\widehat{F}$-- without denoising ($\alpha = 0$)}
	\end{subfigure}
	\quad	
	\begin{subfigure}[b]{0.2\textwidth}
		\includegraphics[width=\textwidth]{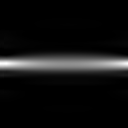}
		\caption{Estimated SVIR $\widehat{S}$-- with denoising ($\alpha = 0.3$)}
	\end{subfigure}
	\quad
	\begin{subfigure}[b]{0.2\textwidth}
		\includegraphics[width=\textwidth]{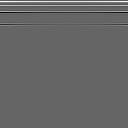}
		\caption{Estimated IRC $\widehat{F }$-- with denoising ($\alpha = 0.3$)}
	\end{subfigure}
	\caption{Illustration of the methodology and its results on a 1D estimation problem.} \label{fig:xp_interp1D}
\end{figure}

\paragraph{A 2D deblurring problem}

In this experiment, we show how the proposed ideas allow estimating a blur operator in imaging and then use this estimate to deblur images. The results are displayed in Fig. \ref{fig:xp_interp2D}.
In Fig. \ref{fig:xp_interp2D} (a), an operator $H$ is applied to a 2D Dirac comb, providing an idea of the operator's shape: each impulse response is an isotropic Gaussian with variance $\sigma(y_1,y_2)$ varying along the vertical direction only (namely $\sigma(y_1,y_2) = 1 + 2\max{(1-y_1,y_1)}$ for $(y_1,y_2) \in [0,1]^2$).
In Fig. \ref{fig:xp_interp2D} (b), we show a set of noisy impulse responses that will be used to perform the estimation. Since the impulse response are near compactly supported, we can isolate each of them in the image to perform the estimation. Here the projection basis $(\phi_k)$ is simply the canonical basis.
In Fig. \ref{fig:xp_interp2D} (c), we show the estimated operator $\widehat{H}$ through its action on a Dirac comb. The estimation seems faithful to the exact operator in (a).

To validate the findings, we perform a deblurring experiment. An sharp image in (d) is blurred with the exact operator $H$ in (a), and some white Gaussian noise is added. Then, using the operator $\widehat{H}$ estimated in (c), we deblur the image with a total variation regularized inverse problem \cite{starck2002deconvolution}. As can be seen, the image is significantly sharper, despite some ringing appearing in the bottom.

 \begin{figure}[htp]
	\centering
	\begin{subfigure}[b]{0.3\textwidth}
		\includegraphics[width=\textwidth]{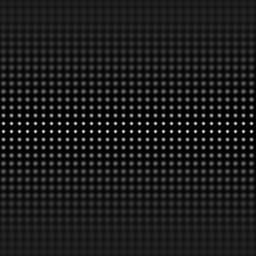}
		\caption{Exact operator}
	\end{subfigure}
	\quad	
	\begin{subfigure}[b]{0.3\textwidth}
		\includegraphics[width=\textwidth]{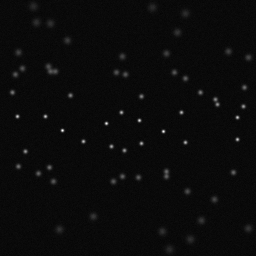}
		\caption{The data set}
	\end{subfigure}
	\quad 	
	\begin{subfigure}[b]{0.3\textwidth}
		\includegraphics[width=\textwidth]{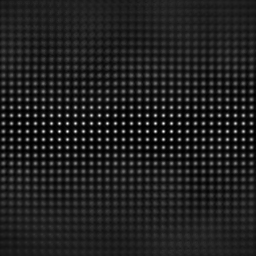}
		\caption{Estimated operator}
	\end{subfigure}
	
	\begin{subfigure}[b]{0.3\textwidth}
		\includegraphics[width=\textwidth]{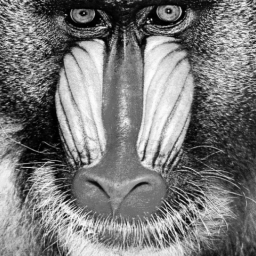}
		\caption{Sharp image \\ $256 \times 256$}
	\end{subfigure}
	\quad	
	\begin{subfigure}[b]{0.3\textwidth}
		\includegraphics[width=\textwidth]{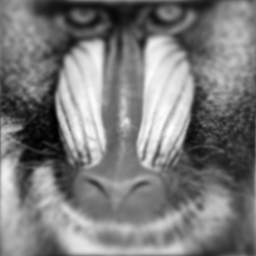}
		\caption{Degraded image \\ pSNR = 19.17dB}
	\end{subfigure}
	\quad 	
	\begin{subfigure}[b]{0.3\textwidth}
		\includegraphics[width=\textwidth]{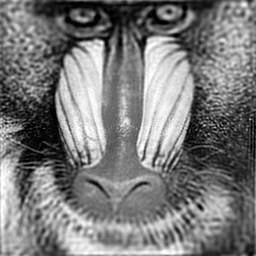}
		\caption{Restored image \\ pSNR = 21.20dB}
	\end{subfigure}	
	
	\caption{
	A 2D estimation used to deblur images.} \label{fig:xp_interp2D}
\end{figure}

\section{Radial basis functions implementation}\label{sec:numerical_algorithm}

The objective of this section is to provide a fast algorithm to solve Problem \eqref{eq:defhatF} and to prove Propositions \ref{prop:complexity1} and \ref{prop:complexity2}. A key observation is provided below.

\begin{lemma}\label{lem:linebyline}
 For $k\in \{1,\hdots, N\}$, the function $\hat F(\cdot)[k]$ is the solution of the following variational problem:
 \begin{equation}\label{eq:main_numerical}
  \min_{f\in H^s(\R^d)} \frac{1}{n} \sum_{i=1}^{n} (F_i^\epsilon[k] - f(y_i))^{2} + \mu\left( \alpha w[k] \|f\|_{L^2(\R^d)}^2 + (1-\alpha) |f|^2_{BL^s(\R^d)}\right).
 \end{equation}
\end{lemma}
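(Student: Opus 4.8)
The plan is to exploit the fact that the objective functional in \eqref{eq:defhatF} is \emph{separable} across the $N$ coordinates of $F$. Writing $f_k := F(\cdot)[k]$ for $k \in \{1,\dots,N\}$, I would first expand the data-fidelity term coordinatewise,
\[
\frac{1}{n}\sum_{i=1}^n \|F_i^\epsilon - F(y_i)\|_{\R^N}^2 = \sum_{k=1}^N \frac{1}{n}\sum_{i=1}^n \bigl(F_i^\epsilon[k] - f_k(y_i)\bigr)^2,
\]
and then rewrite the penalty using Definition \ref{def:RKHS}, interchanging the finite sum over $k$ with the integral over $y$ in the first term:
\[
\|F\|_{\HH(\R^d)}^2 = \sum_{k=1}^N \Bigl( \alpha\, w[k]\, \|f_k\|_{L^2(\R^d)}^2 + (1-\alpha)\, |f_k|_{BL^s(\R^d)}^2 \Bigr).
\]
Adding these shows that the cost in \eqref{eq:defhatF} equals $\sum_{k=1}^N J_k(f_k)$, where $J_k$ is exactly the functional appearing in \eqref{eq:main_numerical}.

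Next I would argue that the minimization decouples. The displayed identity for $\|F\|_{\HH(\R^d)}^2$ exhibits $\HH(\R^d)$ as an orthogonal direct sum $\bigoplus_{k=1}^N \HH_k$, where $\HH_k$ is the scalar space of functions $f$ with $\alpha\, w[k]\,\|f\|_{L^2(\R^d)}^2 + (1-\alpha)\,|f|_{BL^s(\R^d)}^2 < +\infty$. Using the stated equivalence of the Beppo-Levi and Sobolev semi-norms and, for $\alpha>0$, the standard interpolation estimate $\|f\|_{L^2(\R^d)}^2 + |f|_{H^s(\R^d)}^2 \asymp \|f\|_{H^s(\R^d)}^2$, one identifies $\HH_k$ with $H^s(\R^d)$ endowed with an equivalent norm. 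Since the coordinates $(f_1,\dots,f_N)$ of an element $F \in \HH(\R^d)$ may be chosen freely and independently in $H^s(\R^d)$, minimizing the separable sum $\sum_k J_k(f_k)$ over the product space is equivalent to minimizing each $J_k$ separately over $H^s(\R^d)$. Hence $\hat F(\cdot)[k]$ is the minimizer of \eqref{eq:main_numerical}, as claimed.

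The only genuinely delicate points, which I would treat with care, are the identification of the coordinate space $\HH_k$ with $H^s(\R^d)$ and the well-posedness of each scalar problem. For the former I must check that finiteness of $\|F\|_{\HH(\R^d)}$ is equivalent to $f_k \in H^s(\R^d)$ for every $k$ (the case $\alpha = 0$ deserving a separate comment, since the penalty then reduces to a semi-norm and the natural space is the homogeneous Beppo-Levi space). For the latter I would verify that each $J_k$ is strictly convex, continuous and coercive on $H^s(\R^d)$, guaranteeing a unique minimizer $\hat f_k$; assembling $(\hat f_1,\dots,\hat f_N)$ then yields the unique minimizer of \eqref{eq:defhatF}, which by separability must coincide componentwise with the solutions of \eqref{eq:main_numerical}. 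I expect the main obstacle to be this last well-posedness check rather than the separability argument itself, which is essentially bookkeeping.
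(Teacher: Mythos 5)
Your proposal is correct and follows essentially the same route as the paper, whose entire proof is the one-line observation that Problem \eqref{eq:defhatF} decomposes into $N$ independent sub-problems; you simply spell out the coordinatewise separation of the data-fidelity term and of the norm from Definition \ref{def:RKHS}, plus the well-posedness details the paper leaves implicit. No discrepancy in substance.
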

\begin{proof}
 It suffices to remark that Problem \eqref{eq:defhatF} consists of solving $N$ independent sub-problems.
\end{proof}

We now focus on the resolution of Sub-problem \eqref{eq:main_numerical} which completely fits the framework of radial basis function approximation.
In the sequel, we gather a few important results related to radial basis functions that will be used to construct the algorithm. 

\subsection{Standard approximation results in RKHS}

A nice way to introduce radial basis functions is through the theory of reproducible kernel Hilbert spaces (RKHS). 
We recall the basic definitions and a few key results regarding RKHS.
Most of them can be found in the book of Wendland \cite{wendland2004scattered}.

\begin{definition}[Positive definite function]
 A continuous function $ \rho :\R^d \to \C$ is called positive semi-definite if, for all $n\in \N$, all sets of pairwise distinct centers $Y=\{y_1,\hdots,y_n\}\subset \R^d$, and all $\alpha \in \C^n$, the quadratic form
 \begin{equation}\label{eq:quadratic}
  \sum_{j=1}^n\sum_{k=1}^n \alpha_j \bar \alpha_k \rho (y_j-y_k)
 \end{equation}
 is nonnegative. It is called positive definite if \eqref{eq:quadratic} is positive for all $\alpha\neq 0$ and all sets of pairwise distinct locations $Y$.
\end{definition}

\begin{definition}[{Reproducing kernel}]
Let  $\mathcal{G}$ denote a Hilbert space of real-valued functions $f:\R^d\to \R$ endowed with a scalar product $\langle \cdot,\cdot\rangle_{\mathcal{G}}$. 
A function $\Phi:\R^d\times \R^d\to \R$ is called reproducing kernel for $\mathcal{G}$ if
\begin{enumerate}
 \item $\Phi(\cdot,y)\in \mathcal{G}, \quad \forall y\in \R^d$,
 \item $f(y) = \langle f, \Phi(\cdot, y)\rangle_{\mathcal{G}}$, for all $f\in \mathcal{G}$ and all $y\in \R^d$.
\end{enumerate}
\end{definition}

\begin{theorem}[RKHS]
Suppose that $\mathcal{G}$ is a Hilbert space of functions $f:\R^d\to \R$. Then the following statements are equivalent:
\begin{enumerate}
 \item the point evaluations functionals $\delta_y$ are continuous for all $y\in \R^d$.
 \item $\mathcal{G}$ has a reproducing kernel.
\end{enumerate}
A Hilbert space satisfying the properties above is called a Reproducing Kernel Hilbert Space (RKHS).
\end{theorem}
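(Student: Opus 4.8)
The plan is to prove the two implications separately, with the Riesz representation theorem serving as the central tool for the nontrivial direction.

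For the implication $(2) \Rightarrow (1)$, I would start from the assumption that $\mathcal{G}$ possesses a reproducing kernel $\Phi$. Fixing $y \in \R^d$, the second defining property gives $f(y) = \langle f, \Phi(\cdot, y)\rangle_{\mathcal{G}}$ for every $f \in \mathcal{G}$, so the point evaluation functional is $\delta_y(f) = \langle f, \Phi(\cdot, y)\rangle_{\mathcal{G}}$. Applying the Cauchy--Schwarz inequality then yields $|\delta_y(f)| = |f(y)| \leq \|\Phi(\cdot, y)\|_{\mathcal{G}} \, \|f\|_{\mathcal{G}}$, which shows that $\delta_y$ is bounded, hence continuous, with operator norm at most $\|\Phi(\cdot, y)\|_{\mathcal{G}}$. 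Since $\Phi(\cdot, y) \in \mathcal{G}$ by the first defining property of a reproducing kernel, this bound is finite, which settles this direction.

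For the implication $(1) \Rightarrow (2)$, I would fix $y \in \R^d$ and use the hypothesis that $\delta_y$ is a continuous linear functional on the Hilbert space $\mathcal{G}$. The Riesz representation theorem then provides a unique element, call it $R_y \in \mathcal{G}$, such that $\delta_y(f) = \langle f, R_y \rangle_{\mathcal{G}}$ for all $f \in \mathcal{G}$; equivalently $f(y) = \langle f, R_y\rangle_{\mathcal{G}}$. I would then define a candidate kernel by setting $\Phi(x, y) := R_y(x)$ and verify the two defining properties directly: the first, $\Phi(\cdot, y) = R_y \in \mathcal{G}$, holds by construction, and the second, $f(y) = \langle f, \Phi(\cdot, y)\rangle_{\mathcal{G}}$, is exactly the Riesz identity. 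This establishes that $\Phi$ is a reproducing kernel for $\mathcal{G}$.

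I do not expect a genuine obstacle here, as this is the classical Moore--Aronszajn characterization. The only point requiring slight care is the direction $(1) \Rightarrow (2)$: one must observe that the Riesz representatives $R_y$ obtained for each fixed $y$ assemble into a well-defined function $\Phi$ of two variables whose sections $\Phi(\cdot, y)$ lie in $\mathcal{G}$, which is immediate from the pointwise construction. If desired, one could further remark that the kernel is unique, since two representatives of the same continuous functional must coincide in a Hilbert space, and that $\Phi$ is symmetric and positive definite; neither property, however, is required by the statement as given.
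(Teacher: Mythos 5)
Your proof is correct and is the standard argument: Cauchy--Schwarz for the direction that a reproducing kernel implies continuity of the evaluation functionals $\delta_y$, and the Riesz representation theorem to build the kernel $\Phi(x,y) = R_y(x)$ in the converse direction. The paper does not prove this theorem itself---it recalls it as a classical fact from Wendland's book---and your argument coincides with that textbook proof, so there is nothing to flag.
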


The Fourier transform of a function $f \in L^1(\R^d)$ is defined by
\begin{equation}
  \mathcal{F}[f](\xi) = (2 \pi)^{-d/2} \int_{x \in \R^d} f(x) e^{-i \langle x, \xi\rangle}dx,
\end{equation}
and the inverse transform by
\begin{equation}
  \mathcal{F}^{-1}[\widehat{f}](x) = (2 \pi)^{-d/2} \int_{\xi \in \R^d} \widehat{f}(\xi) e^{i \langle x, \xi\rangle }d\xi.
\end{equation}
The Fourier transform can be extended to $L^2(\R^d)$ and to $\mathcal{S}'(\R^d)$ the space of tempered distributions.

\begin{theorem}[{\cite[Theorem 10.12]{wendland2004scattered}}] \label{thm:kernel_hilbert}
 Suppose that $\rho \in C(\R^d) \cap L^1(\R^d)$ is a real-valued positive definite function. Define $\mathcal{G} = \left\{ f \in L^2(\R^d) \cap C(\R^d) : \mathcal{F}[f] / \sqrt{\mathcal{F}[\rho]} \in L^2(\R^d) \right\}$ equipped with
 \begin{equation} \label{eq:inner}
    \langle f,g\rangle_{\mathcal{G}} = (2\pi)^{-d/2} \int_{\R^d} \frac{\mathcal{F}[f](\xi) \overline{\mathcal{F}[g](\xi)}}{ \mathcal{F}[\rho](\xi)} d\xi.
 \end{equation}
Then $\mathcal{G}$ is a real Hilbert space with inner-product $\langle\cdot,\cdot\rangle_{\mathcal{G}}$ and reproducing kernel $\Phi$ defined as $\Phi(x,y) = \rho(x - y)$ for all $x,y \in \R^d$. 
\end{theorem}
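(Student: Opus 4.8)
The plan is to recognize the bilinear form as a weighted $L^2$ inner product on the Fourier side and to transfer the Hilbert-space structure of $L^2(\R^d)$ through the essentially isometric map $f \mapsto \mathcal{F}[f]/\sqrt{\mathcal{F}[\rho]}$. First I would record the spectral facts that make everything run. Since $\rho \in C(\R^d)\cap L^1(\R^d)$ is real and positive definite, Bochner's theorem gives that $\mathcal{F}[\rho]$ is real-valued, continuous and nonnegative; the positive-definiteness hypothesis is exactly what guarantees $\mathcal{F}[\rho](\xi) > 0$, so that division by $\mathcal{F}[\rho]$ is meaningful. Riemann--Lebesgue shows $\mathcal{F}[\rho]$ is bounded and vanishes at infinity, while Fourier inversion applied to the nonnegative function $\mathcal{F}[\rho]$ yields $\int_{\R^d}\mathcal{F}[\rho] = (2\pi)^{d/2}\rho(0) < +\infty$, i.e. $\mathcal{F}[\rho] \in L^1(\R^d)$. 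The two facts $\mathcal{F}[\rho]\in L^\infty$ and $\mathcal{F}[\rho]\in L^1$ are the load-bearing ingredients for the rest of the argument.

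Next I would check that $\langle\cdot,\cdot\rangle_{\mathcal{G}}$ is a genuine inner product. Bilinearity is inherited from the linearity of $\mathcal{F}$; symmetry and real-valuedness on real functions follow from $\mathcal{F}[\rho]$ being real and even together with the Hermitian symmetry $\mathcal{F}[f](-\xi) = \overline{\mathcal{F}[f](\xi)}$ valid for real $f$; and positivity is clear since $\langle f,f\rangle_{\mathcal{G}} = (2\pi)^{-d/2}\int_{\R^d} |\mathcal{F}[f]|^2/\mathcal{F}[\rho]\,d\xi \geq 0$, vanishing only when $\mathcal{F}[f] = 0$ a.e., hence $f = 0$. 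In particular the map $T : f \mapsto \mathcal{F}[f]/\sqrt{\mathcal{F}[\rho]}$ is, up to the constant $(2\pi)^{-d/4}$, an isometry of $(\mathcal{G}, \|\cdot\|_{\mathcal{G}})$ into $L^2(\R^d)$.

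The main work is completeness, and it is here that the integrability and boundedness of $\mathcal{F}[\rho]$ are used. Given a Cauchy sequence $(f_n)$ in $\mathcal{G}$, the images $g_n := T f_n$ form a Cauchy sequence in $L^2(\R^d)$ and thus converge to some $g \in L^2(\R^d)$. I would then define the candidate limit through its Fourier transform $\widehat{f} := g\sqrt{\mathcal{F}[\rho]}$ and set $f := \mathcal{F}^{-1}[\widehat{f}]$. Because $\sqrt{\mathcal{F}[\rho]}$ is bounded, $\widehat{f}\in L^2(\R^d)$, so $f \in L^2(\R^d)$ by Plancherel; because $\sqrt{\mathcal{F}[\rho]} \in L^2(\R^d)$ (as $\mathcal{F}[\rho]\in L^1$), Cauchy--Schwarz gives $\widehat{f}\in L^1(\R^d)$, whence $f$ is continuous and the inversion used to define it is legitimate pointwise. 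Since $\mathcal{F}[f]/\sqrt{\mathcal{F}[\rho]} = g \in L^2(\R^d)$, we conclude $f \in \mathcal{G}$ and $\|f_n - f\|_{\mathcal{G}} = (2\pi)^{-d/4}\|g_n - g\|_{L^2} \to 0$. This shows $\mathcal{G}$ is a real Hilbert space. The delicate point, which I expect to be the main obstacle, is verifying that the limit object genuinely lands back in $\mathcal{G}$ — both its $L^2(\R^d) \cap C(\R^d)$ membership and the finiteness of its $\mathcal{G}$-norm — which is precisely where $\mathcal{F}[\rho]\in L^1 \cap L^\infty$ intervenes.

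Finally I would verify the reproducing property, which simultaneously shows that point evaluations are continuous (so $\mathcal{G}$ is an RKHS in the sense of the theorem above) and identifies the kernel. For fixed $y$, the shift property gives $\mathcal{F}[\rho(\cdot - y)](\xi) = e^{-i\langle y,\xi\rangle}\mathcal{F}[\rho](\xi)$, so $\mathcal{F}[\Phi(\cdot,y)]/\sqrt{\mathcal{F}[\rho]} = e^{-i\langle y,\cdot\rangle}\sqrt{\mathcal{F}[\rho]} \in L^2(\R^d)$ and hence $\Phi(\cdot,y) = \rho(\cdot - y) \in \mathcal{G}$. For any $f \in \mathcal{G}$ I would then compute, using that $\mathcal{F}[\rho]$ is real,
\[
\langle f, \Phi(\cdot,y)\rangle_{\mathcal{G}} = (2\pi)^{-d/2}\int_{\R^d} \frac{\mathcal{F}[f](\xi)\, e^{i\langle y,\xi\rangle}\mathcal{F}[\rho](\xi)}{\mathcal{F}[\rho](\xi)}\,d\xi = (2\pi)^{-d/2}\int_{\R^d}\mathcal{F}[f](\xi)e^{i\langle y,\xi\rangle}\,d\xi = f(y),
\]
the last equality being Fourier inversion, valid pointwise because $\mathcal{F}[f] = (\mathcal{F}[f]/\sqrt{\mathcal{F}[\rho]})\cdot\sqrt{\mathcal{F}[\rho]}$ is a product of two $L^2$ functions and therefore lies in $L^1(\R^d)$. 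This proves that $\Phi(x,y) = \rho(x-y)$ is the reproducing kernel of $\mathcal{G}$ and completes the argument.
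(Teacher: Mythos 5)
The paper never proves this statement: it is imported verbatim from Wendland's book (Theorem 10.12), and the only in-text comment is the remark following it. Your proof is, in essence, the textbook proof itself --- the Fourier-side isometry $f \mapsto \mathcal{F}[f]/\sqrt{\mathcal{F}[\rho]}$, completeness transferred from $L^2(\R^d)$ via the two facts $\mathcal{F}[\rho] \in L^1 \cap L^\infty$, and the reproducing identity by Fourier inversion with $\mathcal{F}[f] \in L^1$ as a product of two $L^2$ functions. The architecture is sound and all the load-bearing steps (why the limit lands back in $L^2 \cap C$, why pointwise inversion is legitimate) are correctly identified.

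One assertion in your opening paragraph is wrong as stated: positive definiteness does \emph{not} guarantee $\mathcal{F}[\rho](\xi) > 0$ for every $\xi$. For $\rho \in C(\R^d) \cap L^1(\R^d)$, (strict) positive definiteness is equivalent to $\mathcal{F}[\rho] \geq 0$, bounded, and not identically zero (Wendland, Theorem 6.11); strictness of the quadratic form comes from the fact that a nonzero exponential sum $\sum_j \alpha_j e^{-i\langle y_j,\xi\rangle}$ vanishes only on a null set, not from everywhere-positivity of $\mathcal{F}[\rho]$. A band-limited kernel with $\mathcal{F}[\rho] \in C_c^\infty$, $\mathcal{F}[\rho] \geq 0$, is a counterexample: positive definite, yet its transform vanishes on a set of full exterior measure. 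Your proof survives with a local patch: the membership condition $\mathcal{F}[f]/\sqrt{\mathcal{F}[\rho]} \in L^2(\R^d)$ must be read as forcing $\mathcal{F}[f] = 0$ a.e.\ on $\{\mathcal{F}[\rho] = 0\}$, which restores definiteness of the inner product, keeps your completeness argument intact (each $g_n$ vanishes a.e.\ on that set, hence so does the limit $g$), and makes the reproducing computation correct since both $\mathcal{F}[f]$ and $\mathcal{F}[\Phi(\cdot,y)]$ vanish there. Two smaller points: your derivation of $\mathcal{F}[\rho] \in L^1$ by ``Fourier inversion'' is circular as phrased (inversion presupposes the integrability you are proving); the correct route is the classical lemma that a nonnegative transform of a function continuous at $0$ is integrable with $\int \mathcal{F}[\rho] = (2\pi)^{d/2}\rho(0)$, e.g.\ by Gaussian mollification and Fatou. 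And you should record explicitly that $\Phi(\cdot,y) = \rho(\cdot - y) \in L^2(\R^d) \cap C(\R^d)$, which follows since $|\rho| \leq \rho(0)$ for positive definite $\rho$ and $\rho \in L^1$.
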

This theorem is a consequence of Sobolev embedding theorems \cite{adams2003sobolev}.
In the following, we will make the abuse to call $\rho : \R^d \to \R$ the reproducing kernel of an Hilbert space $\mathcal{G}$. It should be understood as: the reproducing kernel of $\mathcal{G}$ is $\Phi:\R^d \times \R^d \to \R$ defined as $\Phi(x,y) = \rho(x - y)$ for all $x,y \in \R^d$. 

\begin{theorem} \label{theo:RKHSG}
Let $\mathcal{G}$ be an RKHS with positive definite reproducing kernel $\rho:\R^d \to \R$.
Let $(y_1,\hdots,y_n)$ denote a set of points in $\R^d$ and $z\in \R^n$ denote a set of altitudes.
The solution of the following approximation problem
 \begin{equation}
  \min_{u\in \mathcal{G}} \frac{1}{n}\sum_{i=1}^n (u(y_i) - z[i])^2 + \frac{\mu}{2} \|u\|_{\mathcal{G}}^2
 \end{equation}
 can be written as:
 \begin{equation}\label{eq:reconstructu}
  u(x) = \sum_{i=1}^n c[i] \rho(x-y_i),
 \end{equation}
 where vector $c\in \R^n$ is the unique solution of the following linear system of equations
 \begin{equation}\label{eq:defG}
  (G + n\mu \Id)c = z \mbox{ with } G[i,j]=\rho(y_i-y_j).
 \end{equation}
\end{theorem}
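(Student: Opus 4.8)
The plan is to prove this classical representer-type result in two stages: first establish the representation formula \eqref{eq:reconstructu}, then derive the linear system \eqref{eq:defG}. For the representation, I would argue that any minimizer must lie in the finite-dimensional subspace $\mathcal{S} = \mathrm{span}\{\rho(\cdot - y_i) : 1 \leq i \leq n\}$. Write an arbitrary $u \in \mathcal{G}$ as $u = u_{\parallel} + u_{\perp}$, where $u_{\parallel}$ is the orthogonal projection of $u$ onto $\mathcal{S}$ and $u_{\perp} \in \mathcal{S}^{\perp}$. The key is the reproducing property: since $\rho(\cdot - y_i) = \Phi(\cdot, y_i)$ is the reproducing kernel evaluated at $y_i$, we have $u_{\perp}(y_i) = \langle u_{\perp}, \Phi(\cdot, y_i)\rangle_{\mathcal{G}} = 0$ for every $i$, because $u_{\perp}$ is orthogonal to each $\Phi(\cdot, y_i)$. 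Hence the data-fidelity term $\frac{1}{n}\sum_i (u(y_i) - z[i])^2$ depends only on $u_{\parallel}$. Meanwhile $\|u\|_{\mathcal{G}}^2 = \|u_{\parallel}\|_{\mathcal{G}}^2 + \|u_{\perp}\|_{\mathcal{G}}^2 \geq \|u_{\parallel}\|_{\mathcal{G}}^2$, with equality iff $u_{\perp} = 0$. Since $\mu > 0$, any minimizer must satisfy $u_{\perp} = 0$, so $u = u_{\parallel} \in \mathcal{S}$, which is exactly the form \eqref{eq:reconstructu}.

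Having reduced the problem to $\mathcal{S}$, I would substitute $u(x) = \sum_{i=1}^n c[i]\,\rho(x - y_i)$ into the objective. Using the reproducing property twice gives $u(y_j) = \sum_i c[i]\,\langle \Phi(\cdot, y_i), \Phi(\cdot, y_j)\rangle_{\mathcal{G}} = \sum_i c[i]\,\rho(y_j - y_i) = (Gc)[j]$, and similarly $\|u\|_{\mathcal{G}}^2 = \sum_{i,j} c[i]c[j]\,\rho(y_i - y_j) = c^{\top} G c$. The objective becomes the finite-dimensional quadratic function $\frac{1}{n}\|Gc - z\|_2^2 + \frac{\mu}{2}\, c^{\top} G c$. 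Setting its gradient to zero yields $\frac{2}{n} G(Gc - z) + \mu G c = 0$, i.e. $G\left((G + \tfrac{n\mu}{2}\Id)c - z\right) = 0$ (up to the scaling convention in the statement). Factoring out $G$ and using positive definiteness to cancel it gives the claimed system $(G + n\mu\,\Id)c = z$.

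The main obstacle, and the point requiring the most care, is the \emph{uniqueness} of the solution $c$, which hinges on the invertibility of $G + n\mu\,\Id$. Here I would invoke the positive-definiteness hypothesis on $\rho$: by Definition of a positive definite function, for pairwise distinct centers $Y = \{y_1,\dots,y_n\}$ the Gram matrix $G$ with $G[i,j] = \rho(y_i - y_j)$ is symmetric positive definite, hence all its eigenvalues are strictly positive. Consequently $G + n\mu\,\Id$ has eigenvalues bounded below by $n\mu > 0$ and is invertible, giving the unique $c = (G + n\mu\,\Id)^{-1} z$. This is also why the factorization step above is legitimate: positive definiteness of $G$ lets me pass from $G(\,\cdots\,) = 0$ to $(\,\cdots\,) = 0$. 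I would note that the distinctness of the centers is exactly what the positive-definiteness definition requires, so quasi-uniformity or the separation distance $q_{Y,\Omega} > 0$ guarantees it. A minor point to address is existence of a minimizer before characterizing it; since the reduced problem on $\mathcal{S}$ is a strictly convex coercive quadratic (strict convexity of $c \mapsto c^\top G c$ following again from positive definiteness of $G$), existence and uniqueness follow simultaneously, and the stationarity condition is both necessary and sufficient.
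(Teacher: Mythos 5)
Your proof is correct, but there is nothing in the paper to compare it against: the paper states Theorem \ref{theo:RKHSG} without proof, as a standard result recalled in the subsection on RKHS approximation (the surrounding results are attributed to Wendland's book \cite{wendland2004scattered}). What you have written is the classical representer-theorem argument, and it is complete: the orthogonal decomposition $u = u_{\parallel} + u_{\perp}$ with respect to $\mathcal{S} = \mathrm{span}\{\rho(\cdot - y_i)\}$, the observation that the reproducing property kills $u_{\perp}$ in the data term while Pythagoras penalizes it in the norm, the reduction to a finite-dimensional strictly convex quadratic, and the invertibility of $G + n\mu\,\Id$ from positive definiteness of $G$ (which does require the centers to be pairwise distinct, a point you correctly tie to the separation distance). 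One detail deserves to be made explicit rather than waved at: with the objective exactly as written in the statement, i.e. with regularizer $\frac{\mu}{2}\|u\|_{\mathcal{G}}^2$, your stationarity computation yields $\left(G + \tfrac{n\mu}{2}\Id\right)c = z$, not $(G + n\mu\,\Id)c = z$. This is an inconsistency internal to the paper's statement, not a flaw in your derivation: the system \eqref{eq:defG} is the one corresponding to the regularizer $\mu\|u\|_{\mathcal{G}}^2$, which is the convention actually used where the theorem is applied (Problem \eqref{eq:main_numerical} and Algorithm \ref{alg:FISTAP}). Your parenthetical ``up to the scaling convention in the statement'' identifies this correctly; in a final write-up you should state plainly that the factor $\tfrac12$ in the objective should be dropped (or the system changed to $G + \tfrac{n\mu}{2}\Id$) for the two displays to be consistent, and then prove the version without the $\tfrac12$, since that is the one the rest of the paper relies on.
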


It is shown in \cite{wendland2004scattered}, that the condition number of $G$ depends on the ratio $h_{Y,\Omega}/q_{Y,\Omega}$.
For numerical reasons it might therefore be useful to implement thinning methods in order to discard locations that are too close to each other without creating larger gaps, if possible \cite{dyn2002adaptive,iske2004multiresolution,dyn2008meshfree}.

\subsection{Application to our problem}

Let us now show how the above results help solving Problem \eqref{eq:main_numerical}.

\begin{proposition} \label{prop:rbf}
  Let $\mathcal{G}_k$ be the Hilbert space of functions $f:\R^d \to \R$ such that $|f|_{BL^s(\R^d)}^2 + \| f\|_{L^2(\R^d)}^2 < + \infty$, equipped with the inner product:
  \begin{equation}
    \langle f,g\rangle_{\mathcal{G}_k} = (1-\alpha) \left\langle f,g\right\rangle_{BL^s(\R^d)} + \alpha w[k] \langle f, g\rangle_{L^2(\R^d)}^2.
  \end{equation}
Then $\mathcal{G}_k$ is an RKHS and its scalar product reads
\begin{equation}
\langle f,g\rangle_{\mathcal{G}_k} = (2\pi)^{-d/2} \int_{\R^d} \frac{\mathcal{F}[f](\xi) \overline{\mathcal{F}[g](\xi)}}{ \mathcal{F}[\rho_k](\xi)} d\xi,
\end{equation}
where the reproducing kernel $\rho_k$, is defined by:
\begin{equation} \label{eq:kernel_rbf_definition}
 \mathcal{F}[\rho_k](\xi) = \left((1-\alpha)\|\xi\|^{2s} + \alpha w[k] \right)^{-1}.
\end{equation} 
\end{proposition}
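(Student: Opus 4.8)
The plan is to reduce the whole statement to an application of Theorem \ref{thm:kernel_hilbert} (Wendland's characterisation of native spaces) by passing to the Fourier side. First I would rewrite the two ingredients of $\langle\cdot,\cdot\rangle_{\mathcal{G}_k}$ in frequency variables. Using $\mathcal{F}[\partial^\alpha f](\xi)=(i\xi)^\alpha\mathcal{F}[f](\xi)$ together with Parseval's identity, each top-order term of the Beppo--Levi inner product becomes $\langle\partial^\alpha f,\partial^\alpha g\rangle_{L^2(\R^d)}=\int_{\R^d}\xi^{2\alpha}\mathcal{F}[f](\xi)\overline{\mathcal{F}[g](\xi)}\,d\xi$, since $|(i\xi)^\alpha|^2=\xi^{2\alpha}$. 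Summing against the multinomial weights and invoking the multinomial theorem $\sum_{|\alpha|=s}\frac{s!}{\alpha_1!\cdots\alpha_d!}\xi^{2\alpha}=\big(\xi_1^2+\cdots+\xi_d^2\big)^s=\|\xi\|^{2s}$ collapses the sum into a single factor, giving $\langle f,g\rangle_{BL^s(\R^d)}=\int_{\R^d}\|\xi\|^{2s}\mathcal{F}[f]\overline{\mathcal{F}[g]}\,d\xi$. Combined with $\langle f,g\rangle_{L^2(\R^d)}=\int_{\R^d}\mathcal{F}[f]\overline{\mathcal{F}[g]}\,d\xi$, this yields
\begin{equation}
\langle f,g\rangle_{\mathcal{G}_k}=\int_{\R^d}\Big((1-\alpha)\|\xi\|^{2s}+\alpha w[k]\Big)\,\mathcal{F}[f](\xi)\overline{\mathcal{F}[g](\xi)}\,d\xi ,
\end{equation}
which is exactly the inner product of Theorem \ref{thm:kernel_hilbert} once one sets $\mathcal{F}[\rho_k](\xi)=\big((1-\alpha)\|\xi\|^{2s}+\alpha w[k]\big)^{-1}$ (up to the normalisation constant $(2\pi)^{-d/2}$ carried by the convention of that theorem).

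Next I would check that $\rho_k$ satisfies the hypotheses of Theorem \ref{thm:kernel_hilbert}. For $\alpha\in(0,1)$ the symbol $(1-\alpha)\|\xi\|^{2s}+\alpha w[k]$ is a strictly positive polynomial in $\xi$ (recall $s\in\N$), bounded below by $\alpha w[k]>0$; hence $\mathcal{F}[\rho_k]$ is a positive, continuous function. Crucially, as $\|\xi\|\to\infty$ one has $\mathcal{F}[\rho_k](\xi)\asymp\|\xi\|^{-2s}$, and the assumption $s>d/2$ gives $2s>d$, so $\mathcal{F}[\rho_k]\in L^1(\R^d)$. By Bochner's theorem $\rho_k=\mathcal{F}^{-1}[\mathcal{F}[\rho_k]]$ is then a continuous, real-valued (the symbol is even), positive definite function. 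I would then identify the two spaces: by definition of $\mathcal{G}$ in Theorem \ref{thm:kernel_hilbert}, $f$ lies in that space iff $\mathcal{F}[f]/\sqrt{\mathcal{F}[\rho_k]}\in L^2(\R^d)$, i.e. iff $\int_{\R^d}\big((1-\alpha)\|\xi\|^{2s}+\alpha w[k]\big)|\mathcal{F}[f](\xi)|^2\,d\xi<+\infty$. For $\alpha\in(0,1)$ this integral is equivalent to $|f|_{BL^s(\R^d)}^2+\|f\|_{L^2(\R^d)}^2$ by the same Parseval computation, so the native space of $\rho_k$ coincides with $\mathcal{G}_k$ as a set and carries the claimed norm. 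Theorem \ref{thm:kernel_hilbert} then delivers at once that $\mathcal{G}_k$ is an RKHS, the Fourier expression for its inner product, and the identification of $\rho_k$ as its reproducing kernel.

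The main obstacle is the verification that $\rho_k\in L^1(\R^d)$, as required to apply Theorem \ref{thm:kernel_hilbert} in the stated form: knowing $\mathcal{F}[\rho_k]\in L^1$ only yields that $\rho_k$ is continuous and bounded. Establishing integrability of $\rho_k$ itself amounts to a decay estimate on the Green's function of the elliptic operator $(1-\alpha)(-\Delta)^s+\alpha w[k]\,\Id$; since its symbol is a positive polynomial with no real zeros, $\rho_k$ is a generalised Mat\'ern / Sobolev-spline kernel and decays fast enough to be integrable, but making this quantitative is the one genuinely technical point. A secondary caveat is the degenerate endpoint $\alpha=0$: there $\langle\cdot,\cdot\rangle_{\mathcal{G}_k}$ collapses to the Beppo--Levi semi-inner product, whose kernel contains the polynomials of degree $<s$, and $\mathcal{F}[\rho_k]=\|\xi\|^{-2s}$ is singular at the origin; this is the polyharmonic-spline case where $\rho_k$ is only conditionally positive definite and $\mathcal{G}_k$ a semi-Hilbert space, so the clean RKHS statement should be read for $\alpha\in(0,1)$ (or handled through the conditionally positive definite native-space theory).
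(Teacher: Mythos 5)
Your proof is correct and takes essentially the same route as the paper: the paper's own proof is a single sentence ("a direct application of the different results stated previously"), i.e.\ exactly the Parseval/multinomial reduction to Theorem \ref{thm:kernel_hilbert} that you carry out explicitly, and your identification of the symbol $\left((1-\alpha)\|\xi\|^{2s}+\alpha w[k]\right)^{-1}$ (modulo the $(2\pi)^{-d/2}$ normalisation) matches \eqref{eq:kernel_rbf_definition}. The two caveats you flag are genuine rather than artifacts of your argument --- the paper silently assumes $\rho_k \in C(\R^d)\cap L^1(\R^d)$ as required by Theorem \ref{thm:kernel_hilbert}, and it only addresses the degenerate case $\alpha w[k]=0$ (where the kernel is merely conditionally positive definite) in the remark following the proposition --- so your write-up is, if anything, more complete than the paper's.
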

\begin{proof}
 The proof is a direct application of the different results stated previously.
\end{proof}

The Fourier transform $\mathcal{F}[\rho_{k}]$ is radial, so that $\rho_{k}$ is radial too and the resolution of \eqref{eq:main_numerical} fits the formalism of radial basis functions interpolation/approximation \cite{buhmann2003radial}.

\begin{remark}
For some applications, it makes sense to set $w[k]=0$ for some values of $k$. 
For instance, if $(\phi_k)_{k\in \N}$ is a wavelet basis, then it is usually good to set $w[k]=0$ when $k$ is the index of a scaling wavelet.
In that case, the theory of conditionally positive definite kernels should be used instead of the one above. 
We do not detail this aspect since it is well described in standard textbooks \cite{wendland2004scattered,buhmann2003radial}.
\end{remark}

The whole procedure computing $\widehat{F}$ is presented in Algorithm \ref{alg:FISTAP}. 
The principle of the algorithm is derived from Lemma \ref{lem:linebyline} showing that computing $\widehat{F}$ solution of \eqref{eq:defhatF}, boils down to solving $N$ independent sub-systems. Each sub-system computes $\widehat{F}(\cdot)[k]$ and according to Proposition \ref{prop:rbf} it falls in the formalism of RKHS with an explicit definition of the kernel. Therefore, in virtue of Theorem \ref{theo:RKHSG} each function $\widehat{F}(\cdot)[k]$ can be computed by solving a $n \times n$ linear system. 
The resolution of the linear systems can accelerated using LU decompositions. Hence, it starts with a preprocessing step.

The associated $\widehat{S}$ can be recovered for all $(x,y) \in \R^d \times \R^d$ through
\begin{equation}
 \widehat{S}(x,y) = \sum_{k=1}^N \widehat{F}(y)[k] \phi_k(x) = \sum_{k=1}^N \sum_{i=1}^n c_{k}[i] \rho_k(y - y_i) \phi_k(x).
\end{equation}
Before being able to use $\widehat{S}$ for subsequent numerical algorithms, the IRC $\widehat{F}$ might have to be discretized or sampled. The complexity of this step is not comprised in Proposition \ref{prop:complexity1} and depends on the discretization procedure. 
In many cases, $\widehat{F}(y)[k]$ has to be evaluated on a Cartesian grid. This step can be performed efficiently by using nonuniform fast Fourier transforms or multipole methods \cite{wendland2004scattered}.

\begin{algorithm}
\caption{Computation of $\widehat{F}$}
\label{alg:FISTAP}
\begin{algorithmic}[1]
\INPUT $\quad$
\begin{itemize} \itemsep0em 
 \item[] Weight vector $w\in \R^N$
 \item[] Regularity $s\in \N$
 \item[] PSF locations $Y=\{y_1,\hdots, y_n\}\in \R^{d\times n}$
 \item[] Observed data $(F_i^\epsilon)_{1\leq i \leq n}$, where $F_i^\epsilon\in \R^N$
\end{itemize}

\OUTPUT $\quad$
  \begin{itemize} \itemsep0em
   \item[] The IRC estimator $\widehat{F}$ 
  \end{itemize}
\BEGIN
\State Identify the $m\leq N$ weights of identical values in vector $w\in \R^N$. \Comment{$O(N)$}
\For{Each unique weight $\omega$} \Comment{$O(mn^3)$}
\State Compute matrix $G$ from formula \eqref{eq:defG} with $\rho_\omega$ defined in \eqref{eq:kernel_rbf_definition}.
\State Compute a LU decomposition of $M_\omega = (G+n\mu \Id) = L_\omega U_\omega$.
\EndFor
\For{$k=1$ to $N$} \Comment{$O(N n^2)$}
\State Identify the value $\omega$ such that $w[k]=\omega$.
\State Set $z=(F_i^\epsilon[k])_{1\leq i \leq n}$.
\State Solve the linear system $L_\omega U_\omega c_k = z$.
\State Possibly reconstruct $\hat F$ by (see equation \eqref{eq:reconstructu})
\begin{equation*}
\hat F(y)[k] = \sum_{i=1}^n c_k[i] \rho_\omega(y-y_i).
\end{equation*}
\EndFor
\end{algorithmic}
\end{algorithm}

\section{Proofs of the main results}
\label{sec:proofs}

First we prove Theorem \ref{thm:main_result} about the convergence rate of the quadratic risk $\E \| H - \hat{H} \|_{HS}^2$. 

\subsection{Operator norm risk}

To analyse the theoretical properties of a given estimator of the operator $H$, we introduce the quadratic risk defined as:
\begin{equation} \label{eq:risk_operator}
 R(\hat H, H) = \E \left\| \hat H - H \right\|_{HS}^2,
\end{equation}
where $\hat H$ is the operator associated to the SVIR $\hat{S}$ defined in \eqref{eq:hat_T}. The above expectation is taken with respect to the distribution of the observations in \eqref{eq:datamodel}. 
Notice that $\| H \|_{HS} = \| K \|_{L^2(\R^d\times \Omega)} = \| S \|_{L^2(\R^d \times \Omega)}$. From this observation we get that:
\begin{align}
 R(\hat H, H) &= \E \left\| \hat H - H \right\|_{HS}^2 \nonumber \\
 & \leq 2\left( \left\| H - H_N \right\|_{HS}^2 +  \E \left\| H_N - \hat{H} \right\|_{HS}^2  \right) \nonumber  \\
 & = 2\left( \underbrace{\left\| S - S_N \right\|_{L^2(\R^d \times \Omega)}^2}_{\epsilon_d(N)} +  \underbrace{\E \left\| S_N - \hat{S} \right\|_{L^2(\R^d \times \Omega)}^2}_{\epsilon_{e}(n)}  \right) \label{eq:risk_separation},
\end{align}
where $H_N$ is the operator associated to the SVIR $S_N$ defined by $S_N(x,y) = \sum_{k=1}^N F(y)[k] \phi_{k}(x)$ and $\hat{H}$ the estimating operator associated to the SVIR $\hat{S}$ as in \eqref{eq:hat_T}. 

In equation \eqref{eq:risk_separation}, the risk is decomposed as the sum of two terms $\epsilon_e(n)$ and $\epsilon_d(N)$ (standard bias/variance decomposition in statistics). The first one $\epsilon_d(N)$ is the error introduced by the discretization step.
The second term $\epsilon_e(N)$ is the quadratic risk between $S_N$ and the estimator $\hat{S}$.
In the next sections, we provide upper-bounds for $\epsilon_d(N)$ and $\epsilon_e(n)$.

\subsection{Discretization error \texorpdfstring{$\epsilon_d$}{}}

The discretization error $\epsilon_d(N)$ can be controlled using the standard approximation result below (see e.g. \cite[Theorem 9.1, p. 503]{mallat1999wavelet}).
\begin{theorem}\label{thm:mallat}
There exists a universal constant $c>0$ such that for all $f \in \mathcal{E}^r(\R^d)$ the following estimate holds
\begin{equation}
 \|f-f_N\|_2^2 \leq c \|f\|_{\mathcal{E}^r(\R^d)}^2 N^{-2r/d},
\end{equation}
with $f_N=\sum_{k=1}^N \langle f,\phi_k\rangle \phi_k$.
\end{theorem}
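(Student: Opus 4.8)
The plan is to exploit the fact that $(\phi_k)_{k\in\N}$ is an orthonormal basis of $L^2(\R^d)$, so that the linear approximation error is simply the tail of the squared coefficient sequence. By Parseval's identity, since $f_N$ retains exactly the first $N$ coefficients,
\begin{equation*}
\|f - f_N\|_2^2 = \sum_{k > N} |\langle f, \phi_k\rangle|^2.
\end{equation*}
The whole argument then reduces to converting this bare tail sum into the weighted sum that defines $\|f\|_{\mathcal{E}^r(\R^d)}^2$ in Definition \ref{def:Er}.

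First I would insert the weights artificially, writing each tail term as $|\langle f,\phi_k\rangle|^2 = w[k]^{-1}\, w[k]\,|\langle f,\phi_k\rangle|^2$ and factoring out the largest inverse weight over the tail:
\begin{equation*}
\sum_{k > N} |\langle f, \phi_k\rangle|^2 \leq \Big(\sup_{k > N} w[k]^{-1}\Big) \sum_{k > N} w[k]\,|\langle f, \phi_k\rangle|^2 \leq \Big(\inf_{k > N} w[k]\Big)^{-1} \|f\|_{\mathcal{E}^r(\R^d)}^2,
\end{equation*}
where the last inequality uses that the truncated weighted tail is dominated by the full weighted sum, i.e. by the squared $\mathcal{E}^r$-norm. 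All quantities are nonnegative, so no cancellation issues arise.

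The only quantitative input left is a lower bound on $\inf_{k>N} w[k]$. Using the assumption $w[k]\gtrsim (1+k^2)^{r/d}$ together with the monotonicity of $k \mapsto (1+k^2)^{r/d}$ (valid since $r>0$), the infimum over $k \geq N+1$ is controlled by its value at the smallest admissible index, giving $\inf_{k>N} w[k] \gtrsim (1+(N+1)^2)^{r/d} \gtrsim N^{2r/d}$. Substituting this bound yields $\|f-f_N\|_2^2 \lesssim \|f\|_{\mathcal{E}^r(\R^d)}^2\, N^{-2r/d}$, with a constant $c$ depending only on $r$, $d$ and the implicit constant in the weight assumption, but independent of $f$ and of $N$ — hence universal in the sense required by the statement.

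I do not anticipate a genuine obstacle here: the estimate is an elementary consequence of Parseval's identity and the polynomial growth of the weights. The only point demanding a little care is the index bookkeeping around $k=N$ and the verification that $(1+(N+1)^2)^{r/d} \asymp N^{2r/d}$, so that the stated rate $N^{-2r/d}$ comes out clean; this is precisely where one confirms that the suppressed constant does not depend on $f$ or $N$.
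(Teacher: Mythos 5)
Your proof is correct, and it is essentially the standard argument behind this result: the paper does not prove it but cites it as a classical linear-approximation estimate (Mallat's book, Theorem 9.1), whose proof is exactly your combination of Parseval's identity with a Chebyshev-type tail bound using the weight growth $w[k]\gtrsim(1+k^2)^{r/d}$. The only care needed is the one you already flag — the constant absorbs the implicit constant of the weight assumption and depends only on $r$ and $d$, not on $f$ or $N$ — so nothing is missing.
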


\begin{corollary}\label{cor:epsilond}
  Under the assumption $H \in \mathcal{E}^{r,s}(A_1,A_2)$, the discretization error satisfies:
 \begin{equation}
  \epsilon_d(N) \lesssim N^{-2r/d}.
 \end{equation}
\end{corollary}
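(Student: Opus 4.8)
The plan is to reduce the statement to a pointwise (in the variable $y$) application of Theorem \ref{thm:mallat}, followed by an integration in $y$. First I would rewrite the discretization error by peeling off the $x$-integration: by definition of the norm on $L^2(\R^d \times \Omega)$ and Fubini's theorem,
\begin{equation*}
 \epsilon_d(N) = \|S - S_N\|_{L^2(\R^d \times \Omega)}^2 = \int_{\Omega} \|S(\cdot,y) - S_N(\cdot,y)\|_{L^2(\R^d)}^2\, dy.
\end{equation*}
The key structural remark is that, for fixed $y$, the function $S_N(\cdot,y) = \sum_{k=1}^N \langle S(\cdot,y), \phi_k\rangle \phi_k$ is exactly the $N$-term truncation $f_N$ appearing in Theorem \ref{thm:mallat}, applied to $f = S(\cdot,y)$.

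Next I would apply Theorem \ref{thm:mallat} for almost every $y$. Assumption \eqref{eq:regIR} guarantees that $S(\cdot,y) \in \mathcal{E}^r(\R^d)$ for a.e.\ $y$, so the theorem yields
\begin{equation*}
 \|S(\cdot,y) - S_N(\cdot,y)\|_{L^2(\R^d)}^2 \leq c\, \|S(\cdot,y)\|_{\mathcal{E}^r(\R^d)}^2\, N^{-2r/d},
\end{equation*}
with $c$ the universal constant of Theorem \ref{thm:mallat}, independent of both $y$ and $N$.

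Finally I would integrate this bound over $\Omega$, and use nonnegativity of the integrand together with $\Omega \subseteq \R^d$ to enlarge the integration domain to $\R^d$:
\begin{equation*}
 \epsilon_d(N) \leq c\, N^{-2r/d} \int_{\Omega} \|S(\cdot,y)\|_{\mathcal{E}^r(\R^d)}^2\, dy \leq c\, N^{-2r/d} \int_{\R^d} \|S(\cdot,y)\|_{\mathcal{E}^r(\R^d)}^2\, dy \leq c\, A_2\, N^{-2r/d},
\end{equation*}
where the last inequality is precisely the impulse response regularity assumption \eqref{eq:regIR}. Since $c A_2$ is a constant independent of $N$, this establishes $\epsilon_d(N) \lesssim N^{-2r/d}$.

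There is no serious obstacle in this argument; it is essentially a bookkeeping exercise built on the separable structure of $S_N$. The only points requiring minor care are the measurability of $y \mapsto \|S(\cdot,y)\|_{\mathcal{E}^r(\R^d)}^2$ and the justification of the Fubini interchange between the $x$- and $y$-integrations, both of which follow from $S \in L^2(\R^d \times \R^d)$ together with assumption \eqref{eq:regIR}.
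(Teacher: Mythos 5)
Your proposal is correct and follows essentially the same route as the paper's own proof: apply Theorem \ref{thm:mallat} pointwise in $y$ (valid for a.e.\ $y$ by assumption \eqref{eq:regIR}), then integrate over $\Omega$ and bound the integral by $A_2$. The only difference is cosmetic—you make explicit the Fubini step and the enlargement of the integration domain from $\Omega$ to $\R^d$, which the paper leaves implicit.
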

\begin{proof}
By assumption \eqref{eq:regIR}, $S(\cdot,y)\in \mathcal{E}^r(\R^d)$ for almost every $y\in \Omega$.
Therefore, by Theorem \ref{thm:mallat}:
 \begin{align}
  \|S(\cdot,y)-S_N(\cdot,y) \|_{L^2(\R^d)}^2 \leq cN^{-2r/d} \|S(\cdot,y) \|_{\mathcal{E}^r(\R^d)}^2
 \end{align}
 Finally:
 \begin{align*}
  \|S-S_N\|_{L^2(\R^d \times \Omega)}^2 &= \int_{y\in \Omega} \|S(\cdot,y) - S_N(\cdot,y)\|_{L^2(\R^d)}^2 \,dy\\ 
  &\leq c \left(\int_{y\in \Omega} \|S(\cdot,y)\|_{\mathcal{E}^r(\R^d)}^2 dy \right) N^{-2r/d} \\
  & \leq c A_2 N^{-2r/d} \\
 \end{align*}
\end{proof}

\subsection{Estimation error \texorpdfstring{$\epsilon_e$}{}}

This section provides an upper-bound on the estimation error
\begin{equation}
\epsilon_e(n) = \E \left\| S_N - \hat{S} \right\|_{L^2(\R^d \times \Omega)}^2.
\end{equation}
This part is significantly harder than the rest of the paper.
Let us begin with a simple remark.
\begin{lemma}
 The estimation error satisfies 
 \begin{equation}
  \epsilon_e(n) = \E \left\| F - \hat{F} \right\|_{\R^N \times L^2(\Omega)}^2.
 \end{equation}
\end{lemma}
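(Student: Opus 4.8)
The plan is to unfold the definition of the $L^2(\R^d \times \Omega)$ norm appearing in $\epsilon_e(n)$ and to exploit the orthonormality of the basis $(\phi_k)_{k \in \N}$ to collapse the integral over $x \in \R^d$ into a finite sum. First I would subtract the two SVIR representations: using the definition $S_N(x,y) = \sum_{k=1}^N F(y)[k]\phi_k(x)$ together with \eqref{eq:hat_T}, one has, for every $(x,y) \in \R^d \times \Omega$ and every realization of the noise,
\[
S_N(x,y) - \hat{S}(x,y) = \sum_{k=1}^N \big( F(y)[k] - \hat{F}(y)[k] \big)\, \phi_k(x).
\]

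Next I would write the squared $L^2(\R^d \times \Omega)$ norm as an iterated integral, integrating first in $x$ at fixed $y$. Since $(\phi_k)_{k \in \N}$ is an orthonormal basis of $L^2(\R^d)$, Parseval's identity applies to the finite combination above, giving for each fixed $y$
\[
\int_{\R^d} \big| S_N(x,y) - \hat{S}(x,y) \big|^2 \, dx = \sum_{k=1}^N \big| F(y)[k] - \hat{F}(y)[k] \big|^2 = \big\| F(y) - \hat{F}(y) \big\|_{\R^N}^2.
\]
Integrating this identity over $y \in \Omega$ produces exactly $\| F - \hat{F} \|_{\R^N \times L^2(\Omega)}^2$, by the definition of that norm as $\int_\Omega \| F(y) - \hat{F}(y) \|_{\R^N}^2 \, dy$. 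It then remains to take expectations on both sides and to match with the definition of $\epsilon_e(n)$.

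There is no genuine obstacle here; the only point requiring a word of care is the bookkeeping of the expectation, since $F$ is deterministic while $\hat{F}$ is random through the noise in \eqref{eq:datamodel}. Interchanging $\E$ with the spatial integrals over $\R^d \times \Omega$ is legitimate by Tonelli's theorem, the integrand being nonnegative and jointly measurable. The claim thus follows from orthonormality alone, and this lemma simply recasts the estimation error on the SVIR as an estimation error on the impulse-response coefficients, which is the quantity directly controlled by the optimization problem \eqref{eq:defhatF}.
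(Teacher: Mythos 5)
Your proof is correct and follows essentially the same route as the paper: expand $S_N - \hat{S}$ in the basis $(\phi_k)$, apply Parseval's identity at fixed $y$ to collapse the $x$-integral into the finite sum $\sum_{k=1}^N |F(y)[k]-\hat{F}(y)[k]|^2$, integrate over $\Omega$, and take expectations. The only difference is your explicit appeal to Tonelli for the expectation, which is not really needed since the identity holds pathwise for every noise realization, so equality of expectations is immediate.
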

\begin{proof}
Since $(\phi_{k})_{1 \leq k \leq N}$ is an orthonormal basis, Parseval's theorem gives
\begin{align}
 \| S_N - \hat{S} \|_{L^2(\R^d \times \Omega)}^2 &= \int_{\Omega}\int_{\R^d} \left( S_N(x,y) - \hat{S}(x,y) \right)^2 dx dy \nonumber \\
 & = \int_{\Omega}\int_{\R^d} \left( \sum_{k=1}^N (F(y)[k] - \hat{F}(y)[k])\phi_k(x)\right)^2 dx dy \nonumber \\
 & = \int_{\Omega} \sum_{k=1}^N (F(y)[k] - \hat{F}(y)[k])^2 dy  \nonumber \\
 & = \sum_{k=1}^N \| F(\cdot)[k] - \hat{F}(\cdot)[k] \|_{L^2(\Omega)}^2  =: \| F - \hat{F} \|_{\R^N \times L^2(\Omega)}^2. \label{eq:risk}
\end{align}
\end{proof}

By Lemma \ref{lem:linebyline} the estimator defined in \eqref{eq:defhatF} can be decomposed as $N$ independent estimators. 
Lemma \ref{lem:cv_rate_tvir_wavelet_row} below provides a convergence rate for each of them. 
This result is strongly related to the work in \cite{utreras1988convergence} on smoothing splines. 
Unfortunately, we cannot directly apply the results in  \cite{utreras1988convergence} to our setting since the kernel defined in  \eqref{eq:kernel_rbf_definition} is not that of a thin-plate smoothing spline.

\begin{lemma} \label{lem:cv_rate_tvir_wavelet_row}
 Suppose that $\Omega \subset \R^d$ is a bounded connected open set in $\R^d$ with Lipschitz continuous boundary.
 Let $Y = \{ y_1, \ldots, y_n\} \subset \Omega$ be a quasi-uniform sampling set of PSF locations.
 Recall that $\| f \|_{\mathcal{G}_k(\R^d)} = (1-\alpha) |f|_{BL^s(\R^d)} + \alpha w[k] \| f \|_{L^2(\R^d)}$, for all $k \geq 1$.
 Then, each function $\hat{F}(\cdot)[k]$ solution of Problem \eqref{eq:main_numerical} satisfies:
 \begin{equation}
  \E \| \hat{F}(\cdot)[k] - F(\cdot)[k] \|_{L^2(\Omega)}^2 \lesssim \mu (1-\alpha)^{-1} \| F(\cdot)[k] \|_{\mathcal{G}_k(\R^d)}^2 + n^{-1} \sigma^2 \left[ (1-\alpha) \mu\right]^{-\frac{d}{2s}} (1-\alpha)^{-1}, \label{eq:upperbound1}
 \end{equation}
 provided that $n \mu^{d/2s} \geq 1$.
\end{lemma}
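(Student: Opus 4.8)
The plan is to split the error into a deterministic approximation (bias) part and a stochastic (variance) part, and to control each by combining a scattered-data sampling inequality with the explicit radial basis structure of Problem \eqref{eq:main_numerical}. Throughout write $f^{\ast}:=F(\cdot)[k]$, $\hat f:=\hat F(\cdot)[k]$, and let $\bar f$ denote the minimizer of \eqref{eq:main_numerical} in which the noisy data $F_i^\epsilon[k]$ are replaced by the exact values $f^{\ast}(y_i)$. Since the solution map of \eqref{eq:main_numerical} is linear in the data (Theorem \ref{theo:RKHSG}), $v:=\hat f-\bar f$ is exactly the minimizer driven by the noise vector $(\epsilon_i[k])_{1\le i\le n}$ alone. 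I would first record that, because $Y$ is quasi-uniform, the fill distance obeys $h_{Y,\Omega}\asymp q_{Y,\Omega}\asymp n^{-1/d}$, so that the hypothesis $n\mu^{d/2s}\ge1$ reads $h_{Y,\Omega}^{2s}\lesssim\mu$. The central deterministic tool is a sampling inequality of Arcangeli et al. \cite{arcangeli2007extension}: for every $g\in H^s(\Omega)$ on the Lipschitz domain $\Omega$,
\begin{equation*}
\|g\|_{L^2(\Omega)}^2\lesssim h_{Y,\Omega}^{2s}\,|g|_{H^s(\Omega)}^2+\frac1n\sum_{i=1}^n g(y_i)^2 ,
\end{equation*}
where the factor $\tfrac1n$ in front of the discrete term is legitimate since $n\asymp h_{Y,\Omega}^{-d}$. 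Then $\E\|\hat f-f^{\ast}\|_{L^2(\Omega)}^2\le 2\|\bar f-f^{\ast}\|_{L^2(\Omega)}^2+2\,\E\|v\|_{L^2(\Omega)}^2$, and it remains to treat the two terms.

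For the bias, I would apply the sampling inequality to $g=\bar f-f^{\ast}$. Minimality of $\bar f$ against the competitor $f^{\ast}$ (which has zero data-fit in the noiseless problem) gives simultaneously $\|\bar f\|_{\mathcal{G}_k}^2\le\|f^{\ast}\|_{\mathcal{G}_k}^2$ and $\tfrac1n\sum_i(\bar f(y_i)-f^{\ast}(y_i))^2\le\mu\|f^{\ast}\|_{\mathcal{G}_k}^2$. Using the equivalence $|\cdot|_{H^s(\Omega)}\asymp|\cdot|_{BL^s(\R^d)}$ and $(1-\alpha)|g|_{BL^s(\R^d)}^2\le\|g\|_{\mathcal{G}_k}^2$, the seminorm of $g$ is controlled by $(1-\alpha)^{-1}\|f^{\ast}\|_{\mathcal{G}_k}^2$, so the sampling inequality yields
\begin{equation*}
\|\bar f-f^{\ast}\|_{L^2(\Omega)}^2\lesssim h_{Y,\Omega}^{2s}(1-\alpha)^{-1}\|f^{\ast}\|_{\mathcal{G}_k}^2+\mu\|f^{\ast}\|_{\mathcal{G}_k}^2 .
\end{equation*}
Invoking $h_{Y,\Omega}^{2s}\lesssim\mu$ and $(1-\alpha)^{-1}\ge1$ collapses this to $\mu(1-\alpha)^{-1}\|f^{\ast}\|_{\mathcal{G}_k}^2$, the first term of \eqref{eq:upperbound1}.

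For the variance, Theorem \ref{theo:RKHSG} writes $v=\sum_i c[i]\rho_k(\cdot-y_i)$ with $c=(G+n\mu\,\Id)^{-1}(\epsilon_i[k])_{1\le i\le n}$ and $G[i,j]=\rho_k(y_i-y_j)$ (constants absorbed in $\lesssim$). Applying the sampling inequality to the random function $v$ and taking expectations, I would reduce the bound to two traces, namely $\E\,\tfrac1n\sum_i v(y_i)^2=\tfrac{\sigma^2}{n}\operatorname{tr}\!\big(G^2(G+n\mu\Id)^{-2}\big)$ and $\E\,|v|_{H^s(\Omega)}^2\lesssim(1-\alpha)^{-1}\sigma^2\operatorname{tr}\!\big(G(G+n\mu\Id)^{-2}\big)$, the latter via $(1-\alpha)|v|_{BL^s(\R^d)}^2\le\|v\|_{\mathcal{G}_k}^2=c^{\top}Gc$. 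The decisive ingredient is a spectral comparison: the eigenvalues $\lambda_j$ of $G$ satisfy $\lambda_j\asymp n\nu_j$, where $\nu_j\asymp\big((1-\alpha)j^{2s/d}+\alpha w[k]\big)^{-1}$ are the eigenvalues of the integral operator on $L^2(\Omega)$ with kernel $\rho_k$; these follow from the Weyl asymptotics $\rho_j\asymp j^{2s/d}$ of the penalty form $|\cdot|_{BL^s(\R^d)}^2$ relative to $\|\cdot\|_{L^2(\Omega)}^2$. Writing $\lambda:=(1-\alpha)\mu$ and $a:=1+\mu\alpha w[k]\ge1$, one gets $\lambda_j/(\lambda_j+n\mu)=(a+\lambda j^{2s/d})^{-1}$, and comparing the resulting sums to integrals (convergent because $d/2s<1$ since $s>d/2$) gives $\tfrac{\sigma^2}{n}\operatorname{tr}(G^2(G+n\mu\Id)^{-2})\lesssim\tfrac{\sigma^2}{n}\lambda^{-d/2s}$ and $\sigma^2\operatorname{tr}(G(G+n\mu\Id)^{-2})\lesssim\tfrac{\sigma^2}{n\mu}\lambda^{-d/2s}$. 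Inserting these together with $h_{Y,\Omega}^{2s}\lesssim\mu$ and the identity $\mu^{-1}(1-\alpha)^{-1}=\lambda^{-1}$ produces $\E\|v\|_{L^2(\Omega)}^2\lesssim n^{-1}\sigma^2\lambda^{-d/2s}(1-\alpha)^{-1}$, which is the second term of \eqref{eq:upperbound1}.

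The main obstacle is the spectral comparison $\lambda_j\asymp n\nu_j$ of the last paragraph. Unlike the thin-plate setting of \cite{utreras1988convergence}, the kernel $\rho_k$ carries the extra zero-order term $\alpha w[k]$, so Utreras's estimates cannot be quoted verbatim; moreover $G$ is a Gram matrix on the scattered, merely quasi-uniform set $Y$, and a direct computation of $\E\|v\|_{L^2(\Omega)}^2$ would involve the restricted Gram matrix $\Gamma$ with $\Gamma[i,j]=\langle\rho_k(\cdot-y_i),\rho_k(\cdot-y_j)\rangle_{L^2(\Omega)}$, which must be compared with the whole-space kernel. I would make the comparison rigorous not by matching eigenvalues one at a time, but through Courant--Fischer together with the sampling inequality relating $\tfrac1n\sum_i f(y_i)^2$ to $\|f\|_{L^2(\Omega)}^2$ up to a seminorm correction, which transfers the continuous Weyl asymptotics to the discrete spectrum of $G$; the eigenvalue bounds of \cite{narcowich1991norms,schaback1995error} expressed through $q_{Y,\Omega}$ and $h_{Y,\Omega}$ then supply the two-sided control needed at the spectral edges.
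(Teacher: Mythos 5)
Your overall architecture matches the paper's: a bias/variance split via the noise-free minimizer, the Arcangeli sampling inequality plus the competitor argument for the bias (your direct use of $f^{\ast}$ as competitor is a harmless simplification of the paper's use of the interpolant $f_0$; both yield $E(\bar f)\leq\mu\|f^{\ast}\|_{\mathcal{G}_k}^2$ and $\|\bar f\|_{\mathcal{G}_k}\leq\|f^{\ast}\|_{\mathcal{G}_k}$), and a reduction of the variance to traces of resolvent-type matrices. The bias part is complete and correct. The gap is exactly where you flag it: the spectral claim $\lambda_j(G)\asymp n\big((1-\alpha)j^{2s/d}+\alpha w[k]\big)^{-1}$ is asserted, not proved, and your proposed repair is unlikely to close it. The references \cite{narcowich1991norms,schaback1995error} give \emph{edge} bounds (essentially lower bounds on $\lambda_{\min}(G)$ in terms of $q_{Y,\Omega}$), whereas the trace bounds $\mathrm{tr}\big(G^2(G+n\mu\Id)^{-2}\big)\lesssim[(1-\alpha)\mu]^{-d/2s}$ require control of the full eigenvalue \emph{counting function}; edge bounds alone can only give the useless estimate $n\big(1+n\mu/\lambda_{\max}\big)^{-2}$. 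Establishing Weyl-type distribution asymptotics for a scattered-data Gram matrix with a modified kernel would amount to redoing Utreras's analysis from scratch, which is precisely the "quite cumbersome" work the paper avoids.

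The device you are missing is a one-sided comparison of quadratic forms. Define, as the paper does, $\tilde\Gamma$ by $\langle\tilde\Gamma z,z\rangle=\min\{(1-\alpha)|u|_{H^s(\R^d)}^2+\alpha w[k]\|u\|_{L^2(\R^d)}^2 : u(y_i)=z[i]\}$ and $\Gamma$ the analogous thin-plate matrix with penalty $|u|_{H^s(\R^d)}^2$ only (these are the inverses of your Gram matrices, so $E_\mu=(\Id+n\mu\tilde\Gamma)^{-1}=G(G+n\mu\Id)^{-1}$ and the trace quantities coincide with yours). Then trivially $\langle\tilde\Gamma z,z\rangle\geq(1-\alpha)\langle\Gamma z,z\rangle$ for every $z$, so Weyl's monotonicity theorem gives $\lambda_i(\tilde\Gamma)\geq(1-\alpha)\lambda_i(\Gamma)$ for every $i$. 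Since $t\mapsto(1+n\mu t)^{-1}$ and $t\mapsto(1+n\mu t)^{-2}$ are decreasing, the traces of $E_\mu$ and $E_\mu^2$ are bounded by the corresponding sums with $(1-\alpha)\lambda_i(\Gamma)$ in place of $\lambda_i(\tilde\Gamma)$, and Utreras's already-established eigenvalue estimates for the thin-plate matrix under quasi-uniformity ($\gamma$ null eigenvalues, the rest $\asymp i^{2s/d}n^{-1}$) then yield $\mathrm{tr}(E_\mu),\mathrm{tr}(E_\mu^2)\lesssim[(1-\alpha)\mu]^{-d/2s}$. No two-sided asymptotics for the modified kernel are needed: the extra zero-order term $\alpha w[k]$ only pushes eigenvalues of $\tilde\Gamma$ up, which only helps. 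With this substitution your variance computation goes through verbatim and recovers \eqref{eq:upperbound1}.
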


\begin{proof}
 In order to prove the upper-bound \eqref{eq:upperbound1}, we first decompose the expected squared error  $\E \|  \hat{F}(\cdot)[k] - F(\cdot)[k] \|_{L^2(\Omega)}^2$ into bias and variance terms:
 \begin{equation}
    \E \| \hat{F}(\cdot)[k] - F(\cdot)[k] \|_{L^2(\Omega)}^2 \leq 2\left( \underbrace{\| \hat{F}^0(\cdot)[k] - F(\cdot)[k] \|_{L^2(\Omega)}^2}_{\textrm{Bias term}} + \underbrace{\E \| \hat{F}^0(\cdot)[k] - \hat{F}(\cdot)[k] \|_{L^2(\Omega)}^2 }_{\textrm{Variance term}}\right),
 \end{equation}
 where $\hat{F}^0(\cdot)[k]$ is the solution of the noise-free problem
 \begin{equation} 
  \hat{F}^0(\cdot)[k] = \argmin_{f \in H^s(\R^d)} \frac{1}{n} \sum_{i=1}^{n} (F(y_i)[k] - f(y_i))^{2} + \mu \left( \alpha w[k] \|f\|_{L^2(\R^d)}^2 + (1-\alpha) |f|^2_{BL^s(\R^d)}\right).
\end{equation}
We then treat the bias and variance terms separately. 

\paragraph{Control of the bias}

The bias control relies on sampling inequalities in Sobolev spaces. 
They first appeared in \cite{duchon1978erreur} to control the norm of functions in Sobolev spaces with scattered zeros. 
They have been generalized in different ways, see e.g. \cite{wendland2005approximate} and \cite{arcangeli2007extension}.  In this paper, we will use the following result from \cite{arcangeli2007extension}.

\begin{theorem}[{\cite[Theorem 4.1]{arcangeli2007extension}}] \label{thm:sampling_ineq_sobolev}
 Let $\Omega \subset \R^d$ be a bounded connected open set with Lipschitz continuous boundary, 
 and $p,q,x \in [1,+\infty]$ be given. Let $s$ be a real number such that $s \geq d$ if $p = 1$, $s > d/p$ if $1 < p < \infty$ or $s \in \N^*$ if $p = \infty$. Furthermore, let $l_0 = s - d(1/p - 1/q)_+$ and $\gamma = \max(p,q,x)$ where $(\cdot)_+ = \max(0,\cdot)$. 
 
 Then there exist two positive constants $\eta_s$ (depending on $\Omega$ and $s$) and $C$ (depending on $\Omega$, $n, s, p, q$ and $x$) satisfying the following property: for any finite set $Y \subset \bar{\Omega}$ (or $Y \subset \Omega$ if $p=1$ and $s=d$) such that $h_{Y,\Omega} \leq \eta_s$, for any $u \in W^{s,p}(\Omega)$ and for any $l = 0, \ldots, \lceil l_0 \rceil -1$, we have
 \begin{equation}
    \| u \|_{W^{l,q}(\Omega)} \leq C \left( h_{Y, \Omega}^{s - l - d(1/p - 1/q)_+} |u|_{W^{s,p}(\Omega)} + h_{Y, \Omega}^{d/\gamma - l} \| u|_{Y} \|_x \right),
 \end{equation}
 where $\| u|_{Y} \|_x = \left( \sum_{i=1}^n u(y_i)^x \right)^{1/x}$. If $s \in \N^*$ this bound also holds with $l = l_0$ when either $p < q < \infty$ and $l_0 \in \N$ or $(p,q) = (1,\infty)$ or $p \geq q$.
\end{theorem}

The above theorem is the key to obtain Proposition \ref{prop:cv_rate_function_kernel_noise_free} below. 


\begin{proposition} \label{prop:cv_rate_function_kernel_noise_free}
  Set $0 \leq \alpha < 1$ and let $\mathcal{G}_k(\Omega)$ be the RKHS with norm defined by $\| \cdot\|_{\mathcal{G}_k(\Omega)}^2 = (1-\alpha) |\cdot|_{BL^s(\Omega)}^2 + \alpha w[k] \| \cdot \|_{L^2(\Omega)}^2$.
  Let $u \in H^s(\Omega)$ denote a target function and $Y = \{ y_1, \ldots, y_n \} \subset \Omega$ a data site set.
  Let $f_\mu$ denote the solution of the following variational problem
  \begin{equation}\label{eq:deffmu}
   f_\mu = \argmin_{f \in \mathcal{G}(\R^d)} \frac{1}{n} \sum_{i=1}^n ( u(y_j) - f(y_j) )^2 + \mu \|  f \|_{\mathcal{G}_k(\R^d)}^2.
  \end{equation}
  Then
  \begin{equation}
    \| f_\mu -  u \|_{L^2(\Omega)} \leq C\left( (1-\alpha)^{-1/2} h_{Y,\Omega}^s  + h_{Y,\Omega}^{d/2} \sqrt{ n \mu} \right) \| u \|_{\mathcal{G}_k(\R^d)},
  \end{equation}
  where $C$ is a constant depending only on $\Omega$ and $s$ and $h_{Y,\Omega}$ is the fill distance defined in \ref{def:fill_dist}.
\end{proposition}
\begin{proof}
By applying the Sobolev sampling inequality of Theorem \ref{thm:sampling_ineq_sobolev} for $p = q = x = 2$, $l = 0$, we get
 \begin{equation}
  \| v \|_{L^2(\Omega)} \leq C \left( h_{Y,\Omega}^s |v|_{H^s(\Omega)} + h_{Y, \Omega}^{d/2} \left(\sum_{i=1}^n v(y_i)^2 \right)^{1/2} \right),
 \end{equation}
 for all  $v\in H^s$.
 This inequality applied to function $v=f_\mu-u$ yields
 \begin{equation}\label{eq:fmu_uSamplingInequality}
  \| f_\mu-u \|_{L^2(\Omega)} \leq C \left( h_{Y,\Omega}^s |f_\mu-u|_{H^s(\Omega)} + h_{Y, \Omega}^{d/2} \left(\sum_{i=1}^n (f_\mu(y_i)-u(y_i)^2 \right)^{1/2} \right).
 \end{equation}

The remaining task is to bound $|f_\mu-u|_{H^s(\Omega)}$ and $\left(\sum_{i=1}^n (f_\mu(y_i)-u(y_i))^2 \right)^{1/2}$ by $\|u\|_{\mathcal{G}_k(\R^d)}$. To this end, let us define two functionals $f \mapsto E(f) = \frac{1}{n} \sum_{i=1}^n ( u(y_j) - f(y_j) )^2$ and $f \mapsto J(f) =  \|  f \|_{\mathcal{G}_k(\R^d)}^2$. 
We will use the function $f_0 : \Omega \to \R$ defined as the solution of
\begin{equation}
 f_0 = \argmin_{ \substack{f \in \mathcal{G}_k(\R^d) \\ f(y_j) = u(y_j) } } \|  f \|_{\mathcal{G}_k(\R^d)}^2,
\end{equation}
We notice that the set $\{ f \in \mathcal{G}_k(\R^d) \, | \, \forall 1 \leq j \leq n, \, f(y_j) = u(y_j)  \}$ is non-empty, convex and closed. Furthermore, the squared norm $\|\cdot\|^2_{\mathcal{G}_k(\R^d)}$ is strictly convex. Those two facts imply that the function $f_0$ is uniquely determined. 

Since $f_\mu$ is the minimizer of \eqref{eq:deffmu}, it satisfies 
\begin{equation}
E(f_\mu) + \mu J(f_\mu) \leq E(f_0) + \mu J(f_0). 
\end{equation}
In addition $E(f_0) = 0$ and $J(f_0) \leq  J(u)$. Therefore we have the following sequence of inequalities:
\begin{equation}
    E(f_\mu) + \mu J(f_\mu) \leq E(f_0) + \mu J(f_0) = \mu J(f_0) \leq \mu J(u).
\end{equation}
Hence, 
\begin{equation}\label{eq:boundEfmu}
  E(f_\mu) = \frac{1}{n} \sum_{i=1}^n ( u(y_j) - f_\mu(y_j) )^2 \leq \mu \| u \|_{\mathcal{G}_k(\R^d)}^2.
\end{equation}

To finish, the triangle inequality yields $|f_\mu-u|_{H^s(\Omega)}\leq |f_\mu|_{H^s(\Omega)} + |u|_{H^s(\Omega)}$. The equivalence between the Sobolev semi-norm $|\cdot|_{H^s}$ and the Beppo-Levi semi-norm $|\cdot|_{BL^s}$ yields
\begin{equation} \label{eq:boundfmu_u}
  \begin{split}
 (1-\alpha)|f_\mu|_{H^s(\Omega)}^2  & \lesssim \|f_\mu\|_{\mathcal{G}_k(\R^d)}^2 \leq \| u \|_{\mathcal{G}_k(\R^d)}^2, \\
 (1-\alpha)|u|_{H^s(\Omega)}^2  & \lesssim \| u \|_{\mathcal{G}_k(\R^d)}^2.
  \end{split}
\end{equation}
Replacing bounds \eqref{eq:boundfmu_u} and \eqref{eq:boundEfmu} in the sampling inequality \eqref{eq:fmu_uSamplingInequality} completes the proof of Proposition \ref{prop:cv_rate_function_kernel_noise_free}.
\end{proof}


Applying Proposition \ref{prop:cv_rate_function_kernel_noise_free} to $\hat{F}^0(\cdot)[k]$, we get
\begin{equation}
  \| \hat{F}^0(\cdot)[k] - F(\cdot)[k] \|_{L^2(\Omega)}^2 \leq C\left(  (1-\alpha)^{-1/2} h_{Y,\Omega}^s + \sqrt{\mu n} h_{Y,\Omega}^{d/2} \right)^2 \|F(\cdot)[k]\|_{\mathcal{G}_k(\R^d)}^2.
\end{equation} 

The trick is now to use the quasi-uniformity condition to control $h_{Y,\Omega}^s$ and 
$\sqrt{\mu n} h_{Y,\Omega}^{d/2}$. This is achieved using the following proposition.
\begin{proposition}[{\cite[Proposition 14.1]{wendland2004scattered} or \cite{utreras1988convergence}}] \label{prop:equiv_fill_distance}
 Let $Y = \{ y_1, \ldots, y_n \} \subset \Omega$ be a quasi-uniform set with respect to $B$. Then, there exist constants $c > 0$ and $C > 0$ depending only on $d$, $\Omega$ and $B$ such that,
 \begin{equation}\label{eq:conditionhy}
  c n^{-1} \leq h_{Y, \Omega}^d \leq C n^{-1}.
 \end{equation}
\end{proposition}
Condition $n \mu^{d/2s} \geq 1$ combined with the right-hand-side of \eqref{eq:conditionhy} yields $h_{Y,\Omega}^d \leq C \mu^{d/2s}$, so that $h_{Y,\Omega}^s \lesssim \sqrt{\mu}$. Similarly, the right-hand-side of \eqref{eq:conditionhy} yields $\sqrt{\mu n} h_{Y,\Omega}^{d/2}\lesssim \sqrt{\mu}$.
Hence
\begin{equation}
  \begin{split}
  \| \hat{F}^0(\cdot)[k] - F(\cdot)[k] \|_{L^2(\Omega)}^2 & \lesssim (1-\alpha)^{-1} \mu \| F(\cdot)[k]\|_{\mathcal{G}_k(\R^d)}^2.
  \end{split}
\end{equation}

\paragraph{Control of the variance}

The variance term is treated following arguments similar to those in \cite{utreras1988convergence}. However, the change of kernel  needs additional treatment. First of all, note that due to the linearity of the estimators of Problem \eqref{eq:main_numerical} (that can be seen from equation \eqref{eq:defG}), we have $\hat{F}_{\mu}^0(\cdot)[k] - \hat{F}_{\mu}(\cdot)[k] = f_{k}^{\eta}$ with $\eta \in \R^n$ defined as $\eta[i] = \epsilon_i[k]$ and
\begin{equation} \label{eq:noise_smoothing}
 f_{k}^{\eta} =  \argmin_{f \in H^s(\R^d)} \frac{1}{n} \sum_{i=1}^{n} \left( f(y_i) - \eta[i] \right)^2 + \mu \left( \alpha w[k] \|f\|_{L^2(\R^d)}^2 + (1-\alpha) |f|^2_{BL^s(\R^d)}\right).
\end{equation}
We therefore need to estimate $\E \| f_{k}^{\eta}\|_{L^2(\Omega)}^2$. From Theorem \ref{thm:sampling_ineq_sobolev} applied with $p = q = x = 2$ and $l = 0$ we obtain that for $u \in H^{s}(\Omega)$
$$
 \| u \|_{L^2(\Omega)} \leq C \left( h_{Y,\Omega}^s |u|_{H^s(\Omega)} + h_{Y,\Omega}^{d/2} \| u|_Y \|_2\right).
$$
Using the above inequality together with Proposition \ref{prop:equiv_fill_distance}, we get that
$$
 \|f_{k}^{\eta} \|_{L^2(\Omega)}^2 \leq 2 C \left( h_{Y,\Omega}^{2s} | f_{k}^{\eta} |_{H^s(\Omega)}^2 + n^{-1} \sum_{i=1}^n f_{k}^{\eta}(y_i)^2 \right).
$$
As in \cite{utreras1988convergence}, let us define the $n \times n$ symmetric matrix $ \tilde{\Gamma}$ such that
\begin{equation} \label{eq:gamma_tilde_matrix}
 \langle \tilde{\Gamma} z, z \rangle = \min_{ \substack{u \in BL^s(\R^d) \\ u(y_i) = z[i] }} (1 - \alpha) | u |_{H^s(\R^d)}^2 + \alpha w[k] \|u\|_{L^2(\R^d)}^2.
\end{equation}
The solution of Problem \eqref{eq:gamma_tilde_matrix} is a spline interpolating the data $(y_i, z[i])_{i=1}^n$. Using this matrix, we can write \eqref{eq:noise_smoothing} as:
$$
  \min_{z \in \R^n} \frac{1}{n} \sum_{i=1}^n (z[i] - \eta[i])^2 + \mu \langle \tilde{\Gamma} z, z \rangle,
$$
see \cite{utreras1988convergence,utreras1979cross,wahba1979convergence} for details. Thus, the solution $\hat{z} = (f_{k}^{\eta}(y_i))_{i=1}^n$ is obtained by:
$$
 \hat{z} = ( \Id + n\mu \tilde{\Gamma})^{-1} \eta.
$$
By letting $E_\mu = (\Id + n\mu \tilde{\Gamma})^{-1}$, we obtain
$$
  n^{-1} \sum_{i=1}^n f_{k}^{\eta}(y_i)^2 = n^{-1} \sum_{i=1}^n \hat{z}[i]^2 = n^{-1} \eta^{T} E_\mu^2 \eta
$$
and
\begin{align*}
 (1-\alpha) | f_{k}^{\eta} |_{H^s(\R^d)}^2 + \alpha w[k] \|f_{k}^{\eta}\|_{L^2(\R^d)}^2 &= \hat{z}^T \tilde{\Gamma} \hat{z} = \eta^{T} E_\mu \tilde{\Gamma} E_\mu \eta \\
 & = (n\mu)^{-1}\eta^{T} E_\mu ( E_\mu^{-1} - \Id) E_\mu  \eta  \\
 & = (n\mu)^{-1}\eta^{T} (E_\mu - E_\mu^2) \eta.
\end{align*}
Thus
$$
  | f_{k}^{\eta} |_{H^s(\Omega)}^2 \leq | f_{k}^{\eta} |_{H^s(\R^d)}^2 \leq (n\mu(1 -\alpha))^{-1}\eta^{T} ( E_\mu - E_\mu^2) \eta.
$$
Using the fact that $\eta$ has i.i.d. components with zero mean and variance $\sigma^2$, we get that,
$$
 \E \left[ n^{-1} \sum_{i=1}^n f_{\lambda}^{\eta}(y_i)^2 \right] = n^{-1} \sigma^2 \textrm{Tr}(E_\mu^2),
$$
and on the other hand
\begin{align*}
 \E | f_{k}^{\eta} |_{H^s(\Omega)}^2 &\leq (n\mu (1 -\alpha))^{-1} \sigma^2 (\textrm{Tr}(E_\mu) - \textrm{Tr}(E_\mu^2))\\
 & \leq (n\mu(1 -\alpha))^{-1} \sigma^2 \textrm{Tr}(E_\mu).
\end{align*}

We now have to focus on the estimation of both $\textrm{Tr}(E_\mu) = \sum_{i=1}^n (1 + n \mu \lambda_{i}(\tilde{\Gamma}) )^{-1}$ and $\textrm{Tr}(E_\mu^2) = \sum_{i=1}^n (1 + n \mu \lambda_{i}(\tilde{\Gamma}) )^{-2}$, where $\lambda_i(\tilde{\Gamma})$ is the $i$-th eigenvalue of $\tilde{\Gamma}$. This will be achieved by analyzing the eigenvalues of the matrix $\tilde{\Gamma}$. 
This step is quite cumbersome. Fortunately, we can rely on the work of Utreras who analyzed the eigenvalues of the matrix $\Gamma$ associated to thin-plate splines in \cite{utreras1988convergence}. 
Matrix $\Gamma$ is defined in a similar way as \eqref{eq:gamma_tilde_matrix}:
\begin{equation} \label{eq:gamma_matrix}
 \langle \Gamma z, z \rangle = \min_{ \substack{u \in BL^s(\R^d) \\ u(y_i) = z[i] }} | u |_{H^s(\R^d)}^2.
\end{equation}
One therefore has that $(1-\alpha) z^T \Gamma z \leq z^T \tilde{\Gamma} z$ for all $z \in \R^N$. Therefore the matrix $\tilde{\Gamma} - (1-\alpha)\Gamma$ is semi-definite positive. By virtue of Weyl Monotonicity Theorem  \cite{weyl1912asymptotische}, we get that $(1-\alpha)\lambda_{i}(\Gamma) \leq \lambda_{i}(\tilde{\Gamma})$. Hence we can bound the traces of the matrices $E_\mu$ and $E_\mu^2$ as follows
\begin{align*}
 \textrm{Tr}(E_\mu) & \leq \sum_{i=1}^n (1 + (1-\alpha)n \mu \lambda_{i}(\Gamma) )^{-1},\\
 \textrm{Tr}(E_\mu^2) & \leq \sum_{i=1}^n (1 + (1-\alpha)n \mu \lambda_{i}(\Gamma) )^{-2}. \\
\end{align*}
It is shown in \cite{utreras1988convergence}, that $\gamma = \begin{pmatrix} s-1 + d \\ s-1\end{pmatrix}$ eigenvalues $\lambda_{i}(\Gamma)$ are null and the others satisfy $i^{2s/d} n^{-1} \lesssim \lambda_{i}(\Gamma) \lesssim i^{2s/d} n^{-1}$ for $\gamma+1 \leq i \leq n$. Following \cite{utreras1988convergence}, it can be shown that both traces are bounded by quantities proportional to $\left[ (1-\alpha) \mu\right]^{-d/2s}$. Thus one has that
$$
 \E\|f_{k}^{\eta} \|_{L^2(\Omega)}^2 \lesssim \sigma^2( n^{-1}\left[ (1-\alpha) \mu\right]^{-d/2s} + n^{-1} h_{Y,\Omega}^{2s} \mu^{-1} \left[ (1-\alpha) \mu\right]^{1-d/2s}).
$$
Since $\mu^{d/2s} n \geq 1$ and using Proposition \ref{prop:equiv_fill_distance} that gives $n \lesssim h_{Y,\Omega}^{-d}$ we obtain that $h_{Y,\Omega}^{2s} \mu^{-1} \lesssim 1$. Hence
$$
 \E\|f_{k}^{\eta} \|_{L^2(\Omega)}^2  \lesssim \sigma^2 n^{-1}\left[ (1-\alpha) \mu\right]^{-d/2s}\left( 1 + h_{Y,\Omega}^{2s} [(1-\alpha)\mu]^{-1}\right) \lesssim \sigma^2 n^{-1}\left[ (1-\alpha) \mu\right]^{-d/2s} (1-\alpha)^{-1},
$$
which completes the proof of Lemma \ref{lem:cv_rate_tvir_wavelet_row}.
\end{proof}

Finally, we will need the following technical Lemma.

\begin{lemma} \label{lem:invert}
  Let $H$ be an operator in $\mathcal{E}^{r,s}(A_1,A_2)$ with SVIR $S$ \eqref{eq:defTVIR} and IRC $F$ \eqref{eq:IRC}. We have
  \begin{align}
    \partial_y^\alpha \langle S(\cdot,y), \phi_k \rangle &= \langle \partial_y^\alpha S(\cdot,y), \phi_k \rangle \quad \forall |\alpha| \leq s, \textrm{and for a.e. } y \in \Omega, \\
    F(\cdot)[k] : y \mapsto \langle S(\cdot,y) , \phi_k \rangle & \in H^s(\R^d) \quad \forall k \in \N, \\
    \sum_{k \in \N} | F(\cdot)[k]|_{BL^s(\R^d)}^2 &= \int_{x \in \R^d} |S(x,\cdot)|_{BL^s(\R^d)}^2 dx, \\
    \sum_{k \in \N} w[k] \left\| F(\cdot)[k] \right\|_{L^2(\R^d)}^2 &= \int_{\R^d} \left\| S(\cdot, y) \right\|_{\mathcal{E}^r(\R^d)}^2 dy.
  \end{align}
\end{lemma}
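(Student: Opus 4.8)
The plan is to prove the four assertions in the order they are stated, since each relies on the previous ones. The first identity---commutation of the weak $y$-derivative with the projection onto $\phi_k$---is the only delicate point; the three remaining identities then follow from Parseval's theorem applied in the $x$ variable together with Tonelli's theorem for nonnegative integrands.

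\emph{Commutation (first identity).} Fix $k\in\N$ and a multi-index $\alpha$ with $|\alpha|\leq s$. Assumption \eqref{eq:varIR} guarantees that $S(x,\cdot)\in H^s(\R^d)$ for a.e. $x$, so that the weak derivative $\partial_y^\alpha S(x,\cdot)$ exists for a.e. $x$ and $\int_{\R^d}\|\partial_y^\alpha S(x,\cdot)\|_{L^2(\R^d)}^2\,dx<+\infty$. To identify the weak $y$-derivative of $F(\cdot)[k]=\langle S(\cdot,y),\phi_k\rangle$, I would test against an arbitrary $\psi\in C_c^\infty(\R^d)$ and compute $\int_{\R^d}F(y)[k]\,\partial^\alpha\psi(y)\,dy$. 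Using Fubini's theorem---justified because $S\in L^2(\R^d\times\R^d)$, $\phi_k\in L^2(\R^d)$ and $\psi$ has compact support, so that the integrand is absolutely integrable by Cauchy--Schwarz---the $x$-integration can be moved outside; applying the definition of the weak $y$-derivative of $S(x,\cdot)$ for a.e. $x$ produces a factor $(-1)^{|\alpha|}$ and transfers $\partial^\alpha\psi$ onto $\partial_y^\alpha S(x,\cdot)$; a second application of Fubini (with the integrability bound now supplied by $\int_{\R^d}\|\partial_y^\alpha S(x,\cdot)\|_{L^2(\R^d)}^2\,dx<+\infty$) yields $(-1)^{|\alpha|}\int_{\R^d}\langle\partial_y^\alpha S(\cdot,y),\phi_k\rangle\,\psi(y)\,dy$. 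Since $\psi$ is arbitrary, this shows that $\langle\partial_y^\alpha S(\cdot,y),\phi_k\rangle$ is precisely the weak derivative $\partial_y^\alpha F(y)[k]$.

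\emph{Sobolev membership (second identity).} Given the commutation identity, it remains to bound the $L^2(\R^d)$ norm of each $\partial^\alpha F(\cdot)[k]$ for $|\alpha|\leq s$. Cauchy--Schwarz with $\|\phi_k\|_{L^2(\R^d)}=1$ gives $|\langle\partial_y^\alpha S(\cdot,y),\phi_k\rangle|^2\leq\|\partial_y^\alpha S(\cdot,y)\|_{L^2(\R^d)}^2$, so that after integrating in $y$ and applying Fubini, $\|\partial^\alpha F(\cdot)[k]\|_{L^2(\R^d)}^2\leq\int_{\R^d}\|\partial_y^\alpha S(x,\cdot)\|_{L^2(\R^d)}^2\,dx$. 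Summing over $|\alpha|\leq s$ and invoking \eqref{eq:varIR} bounds $\|F(\cdot)[k]\|_{H^s(\R^d)}^2$ by $A_1$, hence $F(\cdot)[k]\in H^s(\R^d)$.

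\emph{Parseval identities (third and fourth).} For the Beppo--Levi identity I would substitute the commutation formula into the definition of $|F(\cdot)[k]|_{BL^s(\R^d)}^2$, exchange the sum over $k$ with the integral over $y$ by Tonelli (all terms being nonnegative), and apply Parseval for the orthonormal basis $(\phi_k)_{k\in\N}$ of $L^2(\R^d)$ in the $x$ variable to get $\sum_k|\langle\partial_y^\alpha S(\cdot,y),\phi_k\rangle|^2=\|\partial_y^\alpha S(\cdot,y)\|_{L^2(\R^d)}^2$; a final use of Fubini to swap the $x$ and $y$ integrations reconstructs $\int_{\R^d}|S(x,\cdot)|_{BL^s(\R^d)}^2\,dx$. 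The last identity is even more direct: by Tonelli, $\sum_k w[k]\|F(\cdot)[k]\|_{L^2(\R^d)}^2=\int_{\R^d}\sum_k w[k]|\langle S(\cdot,y),\phi_k\rangle|^2\,dy$, and the inner sum is exactly $\|S(\cdot,y)\|_{\mathcal{E}^r(\R^d)}^2$ by Definition \ref{def:Er}. The main obstacle is the first step: carefully justifying the two applications of Fubini and the passage to the weak derivative under the integral sign; once this commutation is secured, everything else reduces to Parseval's theorem and interchanges of sums and integrals of nonnegative quantities.
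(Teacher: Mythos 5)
Your proposal is correct and follows essentially the same structure as the paper's proof: establish the commutation identity first, upgrade it to an a.e./$L^2$ statement via the bound from \eqref{eq:varIR}, and then obtain the remaining three identities by Cauchy--Schwarz, Tonelli and Parseval for the orthonormal basis $(\phi_k)_{k\in\N}$. The only difference is that where the paper justifies the commutation step by citing a theorem on generalized functions (Jones, Theorem 7.40), you verify it directly by testing against $C_c^\infty$ functions with two applications of Fubini, which makes the argument self-contained at no extra cost.
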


\begin{proof}
 
 The first point is derived using a result in \cite[Theorem 7.40]{jones1982theory}: since $S \in L^2(\R^d \times \R^d)$, it defines a generalized function. We obtain that $\partial_y^\alpha \langle S(\cdot,y), \phi_k \rangle = \langle \partial_y^\alpha S(\cdot,y), \phi_k \rangle$ in the sense of generalized functions. Moreover,
 \begin{equation*}
  \begin{split}
    \int_{\R^d} \left| \langle \partial_y^\alpha S(\cdot,y), \phi_k \rangle \right|^2 dy & \leq \int_{\R^d} \left\| \partial_y^\alpha S(\cdot,y) \right\|_{L^2(\R^d)}^2 dy \\ 
    & \leq A_1,
   \end{split}
 \end{equation*}
 since $S \in \mathcal{E}^{r,s}(A_1,A_2)$. 
 Thus the equality is also valid in $L^2(\R^d)$ and $ \partial_y^\alpha \langle S(\cdot,y), \phi_k \rangle = \langle \partial_y^\alpha S(\cdot,y), \phi_k \rangle$ almost everywhere. 
 The second point is shown by observing that
\begin{equation*}
 \begin{split}
  \| y \mapsto \partial_y^\alpha \langle S(\cdot,y), \phi_k) \rangle \|_{L^2(\R^d)}^2 \\
  & = \int_{\R^d} | \langle \partial_y^\alpha  S(\cdot,y), \phi_k) \rangle |^2 dy \\
  & \leq \sum_{k \in \N} \int_{\R^d} | \langle \partial_y^\alpha  S(\cdot,y), \phi_k) \rangle |^2 dy \\
  & = \int_{\R^d} \sum_{k \in \N}  | \langle \partial_y^\alpha  S(\cdot,y), \phi_k) \rangle |^2 dy \\
  & = \int_{\R^d} \| \partial_y^\alpha  S(\cdot,y)\|_{L^2(\R^d)}^2 dy \\
  & = \int_{\R^d} \| \partial_y^\alpha  S(x,\cdot)\|_{L^2(\R^d)}^2 dx.
 \end{split}
\end{equation*}
We used the Tonelli Theorem to switch the sum with the integrals and then the two integrals. Therefore
\begin{equation*}
  \begin{split}
    \| y \mapsto \langle S(\cdot,y), \phi_k) \rangle \|_{H^s(\R^d)}^2 & \leq \sum_{|\alpha| \leq s} \| y \mapsto \partial_y^\alpha \langle S(\cdot,y), \phi_k) \rangle \|_{L^2(\R^d)}^2  \\
    & \leq \int_{\R^d} \sum_{|\alpha| \leq s}  \| \partial_y^\alpha  S(x,\cdot)\|_{L^2(\R^d)}^2 dx \\
    & = \int_{\R^d} \| S(x,\cdot)\|_{H^s(\R^d)}^2 dx \\
    & \leq A_2.
  \end{split}
\end{equation*}

 The third one is straightforward once the following is shown
 \begin{equation*}
  \begin{split}
    \sum_{k \in \N} \| \partial_y^\alpha F(\cdot)[k] \|_{L^2(\R^d)}^2 & = \int_{\R^d} \sum_{k \in \N} | \partial_y^\alpha F(y)[k] |^2 dy \\  
    & = \int_{y \in \R^d} \| \partial_y^\alpha S(\cdot,y) \|^2_{L^2(\R^d)} dy \\
    & = \int_{x \in \R^d} \| \partial_y^\alpha S(x,\cdot) \|^2_{L^2(\R^d)} dx.
  \end{split}
 \end{equation*}
 Note that we switched the sum with the integral then the two integrals using the Tonelli Theorem.
 The last point goes as follows:
 \begin{equation*}
  \begin{split}
     \sum_{k \in \N} w[k] \left\| F(\cdot)[k] \right\|_{L^2(\R^d)}^2 &= \int_{\R^d} \sum_{k \in \N} w[k]  \left| \langle \partial_y^\alpha S(\cdot, y), \phi_k \rangle \right|^2 dy \\
      & = \int_{\R^d} \left\| S(\cdot, y) \right\|_{\mathcal{E}^r(\R^d)}^2 dy.
   \end{split}
  \end{equation*}
Note that we switched the sum and integral using the Tonelli Theorem.
\end{proof}

\subsection{Proof of the main results}

\paragraph{Proof of Theorem \ref{thm:main_result}}

\begin{proof}
By equation \eqref{eq:risk_separation}:
\begin{equation}
 \E \|\hat{H} - H\|_{HS}^2 \leq 2(\epsilon_d(N) + \epsilon_e(n)).
\end{equation}

By Corollary \eqref{cor:epsilond}
\begin{equation}
 \epsilon_d(N) \lesssim N^{-2r/d}.
\end{equation}

Now, let us control $\epsilon_e$. 
\begin{align}
  \epsilon_e(n) &= \E \| \hat{F} - F \|_{\R^N \times L^2(\Omega)}^2 \\
 &= \sum_{k=1}^N \E \| \hat{F}(\cdot)[k] - F(\cdot)[k] \|_{ L^2(\Omega)}^2 \\
  &\stackrel{\eqref{eq:upperbound1}}{\lesssim}  \sum_{k=1}^N \left( \mu (1-\alpha)^{-1}  \| F(\cdot)[k] \|_{\mathcal{G}_k(\R^d)}^2 + n^{-1} \sigma^2 \left[ (1-\alpha) \mu\right]^{-d/2s} (1-\alpha)^{-1} \right) \\
  &= \mu (1-\alpha)^{-1} \sum_{k=1}^N \| F(\cdot)[k] \|_{\mathcal{G}_k(\R^d)}^2 + N n^{-1} \sigma^2 \left[ (1-\alpha) \mu\right]^{-d/2s} (1-\alpha)^{-1}
\end{align}
Further calculations give
\begin{align}
 \sum_{k=1}^N \| F(\cdot)[k] \|_{\mathcal{G}_k(\R^d)}^2 &= (1-\alpha)\sum_{k=1}^N |F(\cdot)[k]|_{BL^s(\R^d)}^2 + \alpha \sum_{k=1}^N w[k] \|F(\cdot)[k]\|_{L^2(\R^d)}^2 \\
 & \leq (1-\alpha) A_1 + \alpha A_2.
\end{align}
where the last inequality is derived using Lemma \eqref{lem:invert}. Hence,
\begin{equation*}
   \epsilon_e(n) \leq \mu (1-\alpha)^{-1} (A_1+A_2) + N n^{-1} \sigma^2 \left[ (1-\alpha) \mu\right]^{-d/2s} (1-\alpha)^{-1}.
\end{equation*}

This upper bound allows to set the value of the regularization parameter $\mu$ by balancing the two terms $ (1-\alpha)^{-1} \mu $ and $N n^{-1} \sigma^2 \left[ (1-\alpha) \mu\right]^{-d/2s} (1-\alpha)^{-1}$:
 \begin{equation}
  (1-\alpha)^{-1} \mu  \propto N n^{-1} \sigma^2 \left[ (1-\alpha) \mu\right]^{-d/2s} (1-\alpha)^{-1}.
 \end{equation}
This yields  
\begin{equation}
\mu \propto \left( N \sigma^2 n^{-1} \right)^{\frac{2s}{2s+d}} (1-\alpha)^{\frac{-d}{2s+d}}. 
\end{equation}
Plugging this value in the upper-bound of $\epsilon_e(n)$ gives
\begin{equation}
    \begin{split}
   \mu (1-\alpha)^{-1} & \propto \left( N \sigma^2 n^{-1} \right)^{\frac{2s}{2s+d}} (1-\alpha)^{-1}(1-\alpha)^{\frac{-d}{2s+d}}.
  \end{split}
\end{equation}
Hence,
\begin{equation}
 \epsilon_e(n) \lesssim (N \sigma^2 n^{-1})^{\frac{2s}{2s+d}} (1-\alpha)^{-\frac{2s+2d}{2s+d}}.
\end{equation}
\end{proof}


\paragraph{Proof of Corollary \ref{thm:main_result2}}

\begin{proof}
 To obtain this bound we use Theorem \ref{thm:main_result} and we balance the two terms so that:
 \begin{equation}
  N^{-2r/d} \propto (N \sigma^2 n^{-1})^{\frac{2s}{2s+d}} (1-\alpha)^{-\frac{2s+2d}{2s+d}}
 \end{equation}
This gives the choice $N \propto (\sigma^{-2} n)^{\frac{2sd}{4rs + 2rd + 2sd}} (1-\alpha)^{\frac{(2s+2d)d}{4rs + 2rd + 2sd}} $. Replacing $N$ by this value in bound \eqref{eq:main_result} gives
\begin{equation}
  \begin{split}
    N^{-2r/d} & \propto (\sigma^{2} n^{-1})^{\frac{4rs}{4rs + 2rd + 2sd}} (1-\alpha)^{-\frac{4rd + 4rs}{4rs + 2rd + 2sd}} , \\
    &= \left(\sigma^{2} n^{-1} (1-\alpha)^{-\left(d/s+1\right)} \right)^{\frac{2q}{2q+d}}.
 \end{split}
\end{equation}
\end{proof}

\paragraph{Proof of Theorem \ref{thm:minimax}}

\begin{proof}
We first need to define an appropriate wavelet basis to characterize the fact that a function belongs to the Sobolev ball (for some constant $A > 0$)
$$
H^s(\Omega,A) = \left\{ u\in L^2(\Omega), \; \|u\|_{H^s(\Omega)}^{2}  \leq A\right\},
$$
through its wavelet coefficients.
 The scaling and wavelet functions at scale $j$ (that is at resolution level $2^{j}$) will be denoted by $\phi_{\lambda}$ and $\psi_{\lambda}$, respectively, where the index $\lambda$ summarizes both the usual scale and space parameters $j$ and $k$. In other words, for $d=1$, we set $\lambda = (j,k)$ and denote $\phi_{j,k} (\cdot)= 2^{j/2} \phi(2^{j}\cdot -k)$ and $\psi_{j,k}(\cdot) = 2^{j/2} \psi(2^{j}\cdot -k)$. For $d \geq 2$, the notation $\psi_{\lambda}$ stands for the adaptation of scaling and wavelet functions to $\Omega = [0,1]^{d}$ (see  \cite{cohen2003numerical}, Chapter 2). The notation $|\lambda| = j$ will be used to denote a wavelet at scale $j$,  where $j_{0}$ denotes the coarse level of approximation. In order to simplify the notation, as it is commonly used, we take $j_{0} = 0$, and we write $(\psi_{\lambda})_{|\lambda| = -1}$ for $(\phi_{\lambda})_{|\lambda| = 0}$. Finally, $|\lambda| < j_{1}$ denotes all wavelets at scales $j$, with $-1 \leq j < j_{1}$, and we use the notation $\tilde{\psi}_{\lambda}$ to denote the dual wavelet basis of $\psi_{\lambda}$. Now, assume that a function $u \in L^{2}(\Omega)$ admits the wavelet decomposition
\begin{equation}
u(y) =  \sum_{j = -1}^{+\infty} \sum_{|\lambda| = j} c[\lambda] \psi_{\lambda}(y) \label{eq:decompu}
\end{equation}
where the $c[\lambda]$'s are real coefficients satisfying $c[\lambda] = \langle u, \tilde{\psi}_{\lambda} \rangle_{L^{2}(\Omega)}$. It is well known that wavelet coefficients may be used to characterize the smoothness of functions. For instance, by Theorem 3.10.5 in \cite{cohen2003numerical} (on the equivalence of norms between Besov and sequence of wavelet coefficients spaces) and  using the fact that the Besov space $B^{s}_{2,2}(\Omega)$ is equal to the Sobolev space $H^s(\Omega)$ (see e.g.\ Remark 3.2.4 in \cite{cohen2003numerical}), it follows that, under appropriate assumptions on the scaling function $\phi$ and its dual version  (see e.g.\ those of Theorem 3.10.5 in \cite{cohen2003numerical}),  there exist two  constants $C_1(s,\Omega) > 0$ and $C_2(s,\Omega) > 0$ (depending only on $s$ and $\Omega$) such that, for any $u$ admitting the decomposition \eqref{eq:decompu},
\begin{equation}
C_1(s,\Omega) \sum_{j = -1}^{+\infty} \sum_{|\lambda| = j} 2^{2js} |c[\lambda]|^{2} \leq \|u\|_{H^s(\Omega)}^{2} \leq C_2(s,\Omega) \sum_{j = -1}^{+\infty} \sum_{|\lambda| = j} 2^{2js} |c[\lambda]|^{2}. \label{eq:condSobolevBall}
\end{equation}
Throughout the proof, it is assumed that the bi-orthogonal wavelet basis is chosen such that the wavelet characterization of Sobolev norms \eqref{eq:condSobolevBall} is satisfied.
In particular, we assume that $\psi$ possesses $s+1$ vanishing moments.

The arguments to prove the lower bound \eqref{eq:lowbound} are based on the standard Assouad's cube technique (see, e.g.\ \cite{MR2724359}, Chapter 2, Section 2.7.2). 
Assuming that the wavelets $\psi_{\lambda}$ are extended outside $\Omega$ using the convention that $\psi_{\lambda}(y) = 0$ for $y \notin \Omega$, we consider the following SVIR test functions 
$$
S_{v}(x,y) = \mu_{k_{1},j_{1}} \sum_{k=1}^{k_{1}}   \sum_{|\lambda| < j_{1}}  v[k,\lambda] \phi_{k}(x) \psi_{\lambda}(y) , \quad \forall (x,y) \in \R^d \times \R^d,
$$
where $v = \left( v[k,\lambda] \right)_{k \leq k_{1}, |\lambda| < j_{1} }  \in \mathcal{V} := \{1,-1\}^{k_{1} 2^{j_{1} d}}$,  and $\mu_{k_{1},j_{1}}$ is a positive sequence of reals satisfying the condition
\begin{equation}
 \mu_{k_{1},j_{1}}  =  c k_{1}^{-1/2} 2^{-j_{1}d/2} \min \left( k_{1}^{-r / d}, 2^{-j_{1} s} \right),\label{eq:condmu}
\end{equation}
for some constant $c > 0$  not depending on $k_{1}$ and $j_{1}$.  
Note that, for all $x \in  \R^d$, $S_{v}(x, \cdot)$ is compactly supported in $\Omega$.

Let us first discuss the choice of the constant $c$ in \eqref{eq:condmu}. It has to be chosen such that $S_{v}$ is the SVIR of an operator in $\mathcal{E}^{r,s}(A_1,A_2)$.

First, for any $v \in \mathcal{V}$ one has that:
\begin{eqnarray*}
\mu_{k_1,j_1}^2 \int_{\R^d} \sum_{j = -1}^{j_{1}-1} \sum_{|\lambda| = j} 2^{2js} \left( \sum_{k=1}^{k_1} v[k,\lambda] \phi_k(x) \right)^2 dx
 & = &   \mu_{k_1,j_1}^2 \sum_{j = -1}^{j_{1}-1} \sum_{|\lambda| = j} 2^{2js}  \int_{\R^d} \sum_{k,l=1}^{k_1} v[k,\lambda] v[l,\lambda] \phi_k(x) \phi_l(x) dx \\
 & = &   \mu_{k_1,j_1}^2  \sum_{j = -1}^{j_{1}-1} 2^{2js}  \sum_{|\lambda| = j}  \sum_{k=1}^{k_1} v[k,\lambda]^2 \| \phi_k \|_{L^2(\R^d)}^2 \\ 
 & = &  \mu_{k_1,j_1}^2 \sum_{j=-1}^{j_1-1} 2^{j(2s+d)} k_1  \leq    \mu_{k_1,j_1}^2 k_1 2^{j_1(2s+d)}.
\end{eqnarray*}
where the above equalities use the orthonormality of the basis $(\phi_k)$, the definition $v[k,\lambda] = \pm 1$ and the fact that the number of wavelets at scale $j$ is $2^{jd}$. Now, using that $\int_{\R^d} \| S_{v}(x,\cdot)\|_{H^s(\R^d)}^2 dx = \int_{\R^d} \| S_{v}(x,\cdot)\|_{H^s(\Omega)}^2 dx$, the wavelet characterization of Sobolev norms \eqref{eq:condSobolevBall} and by the condition \eqref{eq:condmu} on $\mu_{k_{1},j_{1}}$, it follows that if $c^{2} \leq C_2^{-1}(s,\Omega) A_{1}$ then
$$
\int_{x \in \R^d} \|S_{v}(x,\cdot)\|_{H^s(\R^d)}^{2} dx \leq A_{1},
$$
for any $v \in \mathcal{V}$. 

We now proceed to the other inequality. A key element is again that $S_{v}(x, \cdot)$ is compactly supported in $\Omega$ for any  $x \in \R^{d}$.
For any $v \in \mathcal{V}$:
\begin{equation}
  \begin{split}
      \int_{y \in \R^d} \|S_{v}(\cdot,y) \|_{\mathcal{E}^r(\R^d)}^2 dy & = \int_{y \in \Omega} \|S_{v}(\cdot,y) \|_{\mathcal{E}^r(\R^d)}^2 dy  \\
      & = \int_{y \in \Omega} \sum_{k\in \N} w[k] \left|\langle S_{v}(\cdot,y) , \phi_k \rangle \right|^2 dy \\
      & = \int_{y \in \Omega} \sum_{k=1}^{k_1} w[k] \mu_{k_1,j_1}^2 \left( \sum_{j = -1}^{j_{1}-1} \sum_{|\lambda| = j}  v[k,\lambda] \psi_{\lambda}(y) \right)^2 dy \\
      & \leq c_1 \mu_{k_1,j_1}^2  \int_{y \in \Omega} \sum_{k=1}^{k_1}(1+k^2)^{r/d} \left( \sum_{j = -1}^{j_{1}-1} \sum_{|\lambda| = j}  v[k,\lambda] \psi_{\lambda}(y) \right)^2 dy, \label{eq:bound1}
        \end{split}
\end{equation}
using the assumption that  $w[k] \leq c_{1}  (1+k^2)^{r/d}$. Let us define the set
 $$
 I_{j}(y) = \{ \lambda \; : \;  |\lambda| = j \mbox{ and }  \psi_{\lambda}(y) \neq 0 \}.
 $$
 Due to the compact support of $\psi_{\lambda}$ the cardinality of $I_{j}(y)$ is bounded by a  constant $D_s > 0$ that is independent of $j$ and $y$. Thus using the relation $ \|\psi_{\lambda} \|_{\infty} \leq  C_{\infty} 2^{j d /2}$ (for some constant $C_{\infty}$ > 0) for any $\lambda$ at scale $j$, we obtain from inequality \eqref{eq:bound1} and the fact that $\left(  \int_{y \in \Omega} dy \right) = 1$ (since $\Omega = [0,1]^{d}$ in this proof), 
 \begin{eqnarray*}
\int_{y \in \R^d} \|S_{v}(\cdot,y) \|_{\mathcal{E}^r(\R^d)}^2 dy & \leq &  c_{1} \mu_{k_{1},j_{1}}^{2}  \int_{y \in \Omega}  \sum_{k=1}^{k_{1}}  (1+k^2)^{r/d} \left(  \sum_{j = -1}^{j_{1}-1} \sum_{\lambda \in I_{j}(y)}  |\psi_{\lambda}(y)|  \right)^{2} dy \\
 & \leq & c_{1} D_s^{2} C_{\infty}^{2}  \mu_{k_{1},j_{1}}^{2} \sum_{k=1}^{k_{1}}  (1+k^2)^{r/d} \left(  \sum_{j = -1}^{j_{1}-1}  2^{j d /2} \right)^{2}, \\
 & \leq & c_{1} D_s^{2} C_{\infty}^{2} \mu_{k_{1},j_{1}}^{2} k_{1}^{2r/d+1}   2^{j_{1} d}.
 \end{eqnarray*}
Hence, by the condition \eqref{eq:condmu} on $\mu_{k_{1},j_{1}}$  it follows that if 
$c^{2} \leq A_{2} c_{1}^{-1} D_s^{-2} C_{\infty}^{-2} $, then
$$
 \int_{y \in \R^d} \|S_{v}(\cdot,y) \|_{\mathcal{E}^r(\R^d)}^2 dy \leq A_{2},
$$
for any $v \in \mathcal{V}$. Thus we have shown that if the constant $c$ in  \eqref{eq:condmu}  is chosen sufficiently small, then the operator  $H_{v}$ with SVIR  function $S_{v}$ belongs to the ball $\mathcal{E}^{r,s}(A_{1},A_{2})$ for any $v \in \mathcal{V}$. In the rest of the proof, the constant $c$ is assumed to be chosen in such a manner.

In what follows, we use the notation $\E_{H_{v}}$ to denote expectation with respect to the distribution $\P_{H_{v}}$ of the random process $F^\epsilon = (F_{1}^\epsilon,\ldots,F_{n}^\epsilon)$ obtained from model \eqref{eq:datamodel}  under the hypothesis that $S = S_{v}$ where $S_{v}$ is the SVIR function of the operator $H_{v}$.

The minimax risk
$$
\mathcal{R}_{\sigma^2,n}:= \inf_{\hat H } \sup_{H \in \mathcal{E}^{r,s}(A_{1},A_{2})}\E \left\| \hat H - H \right\|_{HS}^2
$$
can be bounded from below as follows  
$$
\mathcal{R}_{\sigma^2,n}  \geq  \inf_{\hat H } \sup_{v \in  \mathcal{V}} \E_{H_{v}} \left\| \hat H - H_{v} \right\|_{HS}^2.
$$
Since $\left\| \hat H - H_{v} \right\|_{HS}^2 = \left\| \hat{S} - S_v \right\|_{L^2(\R^d \times \Omega)}^2$ it follows from orthonormality of the bases $(\phi_{k})_{k \geq 1}$ and the Riesz stability property for bi-orthogonal wavelet bases (see e.g.\ inequality (7.156) in \cite{Mallat-Book}), that there exists a constant $c_{\psi} > 0$ such that
$$
 \left\| \hat H - H_v \right\|_{HS}^2 \geq c_{\psi} \sum_{k=1}^{k_{1}} \sum_{|\lambda| < j_{1}} |\hat{\alpha}[k,\lambda] - \mu_{k_{1},j_{1}}   v[k,\lambda]|^2 \mbox{ where } \hat{\alpha}[k,\lambda] = \int_{\R^d \times \Omega} \hat{S}(x,y)  \phi_{k}(x) \psi_{\lambda}(y) dx dy.
$$
Therefore, the minimax risk satisfies the following inequality
$$
\mathcal{R}_{\sigma^2,n}  \geq  \inf_{\hat H } \sup_{v \in  \mathcal{V}} c_{\psi}  \sum_{k=1}^{k_{1}} \sum_{|\lambda| < j_{1}} \E_{H_{v}} \left|\hat{\alpha}[k,\lambda] - \mu_{k_{1},j_{1}}   v[k,\lambda] \right|^2.
$$
Then, define
$$
 \hat{v}[k,\lambda] := \argmin_{v \in \{-1,1\}} \left| \hat{\alpha}[k,\lambda]  - \mu_{k_{1},j_{1}}  v \right|,
$$
and  remark that the triangular inequality and the definition of $ \hat{v}[k,\lambda]$ imply that
$$
\mu_{k_{1},j_{1}} \left| \hat{v}[k,\lambda] -    v[k,\lambda] \right| \leq 2 \left| \hat{\alpha}[k,\lambda]  -  \mu_{k_{1},j_{1}}   v[k,\lambda] \right|,
$$
which yields
\begin{eqnarray*}
\mathcal{R}_{\sigma^2,n} & \geq &  \inf_{\hat{H}  } \sup_{v \in  \mathcal{V}}  \frac{c_{\psi} \mu_{k_{1},j_{1}}^2}{4}   \sum_{k=1}^{k_{1}} \sum_{|\lambda| < j_{1}} \E_{H_{v}} \left|  \hat{v}[k,\lambda]  -      v[k,\lambda] \right|^2 \\
& \geq &  \inf_{\hat{H}  }   \frac{ c_{\psi} \mu_{k_{1},j_{1}}^2}{4}  \frac{1}{\# \mathcal{V}} \sum_{v \in  \mathcal{V}} \sum_{k=1}^{k_{1}} \sum_{|\lambda| < j_{1}}   \E_{H_{v}} \left|  \hat{v}[k,\lambda]  -      v[k,\lambda] \right|^2,
\end{eqnarray*}
where  $\#  \mathcal{V}$ denotes the cardinality of the finite set $ \mathcal{V}$. 

For a given pair $[k,\lambda]$ of indices and any $v \in \mathcal{V}$, we define the vector $v^{(k,\lambda)} \in \mathcal{V}$ having all its components equal to $v$ except the $[k,\lambda]$-th element. Moreover, to simplify the notation,  we let $\sum_{k,\lambda}$ denote the summation $ \sum_{k=1}^{k_{1}} \sum_{|\lambda| < j_{1}}$. Then
\begin{eqnarray*}
\mathcal{R}_{\sigma^2,n} & \geq &  \inf_{\hat{H}  }    \frac{ c_{\psi} \mu_{k_{1},j_{1}}^2}{4}  \frac{1}{\# \mathcal{V}} \sum_{k,\lambda}  \sum_{v \in  \mathcal{V} \; : \;  v[k,\lambda]  = 1} \left(  \E_{H_{v}} \left|  \hat{v}[k,\lambda]  -       v[k,\lambda] \right|^2 +  \E_{H_{v^{(k,\lambda)}}} \left|  \hat{v}[k,\lambda]  -      v^{(k,\lambda)}[k,\lambda] \right|^2 \right)  \\
 & \geq &  \inf_{\hat{H}  }    \frac{ c_{\psi} \mu_{k_{1},j_{1}}^2}{4}  \frac{1}{\# \mathcal{V}} \sum_{k,\lambda}  \sum_{v \in  \mathcal{V} \; : \;  v[k,\lambda]  = 1}  \E_{H_{v}} \left(  \left|  \hat{v}[k,\lambda]  -       v[k,\lambda] \right|^2 +   \left|  \hat{v}[k,\lambda]  -       v^{(k,\lambda)}[k,\lambda] \right|^2 \frac{d \P_{H_{v^{(k,\lambda)}}}}{d \P_{H_{v}}} (F^\epsilon)  \right).
\end{eqnarray*}
where  $\frac{d \P_{H_{v^{(k,\lambda)}}}}{d \P_{H_{v}}}(F^\epsilon)$ is the log-likelihood ratio between the hypothesis $H_{v^{(k,\lambda)}} : S = S_{v^{(k,\lambda)}}$ and the hypothesis $H_{v} : S = S_{v}$ in model \eqref{eq:datamodel}.

Since $ v^{(k,\lambda)}[k,\lambda] = - v[k,\lambda] $ and $\hat{v}[k,\lambda]  \in \{-1,1\}$, one has that, for any $0 < \delta < 1$,
\begin{eqnarray}
\mathcal{R}_{\sigma^2,n} & \geq &  4 c_{\psi} \mu_{k_{1},j_{1}}^2   \frac{1}{\# \mathcal{V}} \sum_{k,\lambda}  \sum_{v \in  \mathcal{V} \; : \;  v[k,\lambda]  = 1}  \;\E_{H_{v}} \left(  \min\left(1,   \frac{d \P_{H_{v^{(k,\lambda)}}}}{d \P_{H_{v}}} (F^\epsilon) \right)   \right) \nonumber \\
 & \geq &  4 \delta c_{\psi} \mu_{k_{1},j_{1}}^2   \frac{1}{\# \mathcal{V}} \sum_{k,\lambda}  \sum_{v \in  \mathcal{V} \; : \;  v[k,\lambda]  = 1}\;  \P_{H_{v}} \left(   \frac{d \P_{H_{v^{(k,\lambda)}}}}{d \P_{H_{v}}} (F^\epsilon) > \delta  \right), \label{eq:lowerdelta}
\end{eqnarray}
by Markov's inequality. Thanks to the Girsanov's formula (see e.g.\ Lemma A.5 in \cite{MR2724359}), one has that, under the hypothesis that $S = S_{v}$  in model \eqref{eq:datamodel}:
$$
\log \left(  \frac{d \P_{H_{v^{(k,\lambda)}}}}{d \P_{H_{v}}} (F^\epsilon) \right) = \sum_{i = 1}^{n} \sum_{\ell=1}^{+ \infty} \left( \sigma^{-1}  \langle S_{v^{(k,\lambda)}}(\cdot, y_i) - S_{v}(\cdot, y_i), \phi_{\ell} \rangle \eta_{i,\ell} - \frac{\sigma^{-2}}{2} \left| \langle S_{v^{(k,\lambda)}}(\cdot, y_i) - S_{v}(\cdot, y_i), \phi_l \rangle \right|^2 \right)
$$
where  the $\eta_{i,\ell}$'s are iid standard Gaussian variables. By definition of $v^{(k,\lambda)}$ and for $v[k,\lambda] = 1$ one has that, for each $1 \leq i \leq n$ and $1 \leq k \leq k_{1}$,
$$
\langle  S_{v^{(k,\lambda)}}(\cdot, y_i) - S_{v}(\cdot, y_i) , \phi_{\ell} \rangle =
\left\{
\begin{array}{cl}
- 2 \mu_{k_{1},j_{1}}  \psi_{\lambda}(y_i) & \mbox{ if } \ell = k, \\
0 & \mbox{ otherwise. } 
\end{array}
\right.  
$$
Therefore, the random variable 
$
Z_{k,\lambda} : = \log \left(  \frac{d \P_{H_{v^{(k,\lambda)}}}}{d \P_{H_{v}}} (F^\epsilon) \right)
$
is Gaussian with mean $\theta_{\lambda}$ and variance $\gamma^2_{\lambda}$ satisfying
$$
\theta_{\lambda} = -  2  \sigma^{-2} \mu_{k_{1},j_{1}}^2 \sum_{i=1}^{n} \psi_{\lambda}^{2}(y_{i}) \mbox{ and } \gamma^{2}_{\lambda} = 4  \sigma^{-2} \mu_{k_{1},j_{1}}^2 \sum_{i=1}^{n} \psi_{\lambda}^{2}(y_{i})  = -2 \theta_{\lambda},
$$
under the hypothesis  that $S = S_{v}$  in model \eqref{eq:datamodel}. 
The negativity of $\theta_{\lambda}$ implies that
$$
\P_{H_{v}} \left( Z_{k,\lambda} \geq 3 \theta_{\lambda} \right) = \P_{H_{v}} \left( \frac{Z_{k,\lambda}-\theta_{\lambda}}{\sqrt{2 |\theta_{\lambda}|}} \geq -\sqrt{2 |\theta_{\lambda}|} \right) \geq \frac{1}{2},
$$
by symmetry of the standard Gaussian distribution. Hence, inserting the above inequality into \eqref{eq:lowerdelta} with $\delta=\exp(3\theta_\lambda)$, it implies that
\begin{align}
\mathcal{R}_{\sigma^2,n} &  \geq c_{\psi} \exp(3 \theta_{\lambda})  \mu_{k_{1},j_{1}}^2 \;  k_{1}  C_\psi 2^{d j_{1}} \\
& = c_{\psi} \exp(3 \theta_{\lambda}) c^2 C_\psi \min\left(k_1^{-2r/d}, 2^{-2j_1s} \right). \label{eq:boundR}
\end{align}

By setting $k_{1} = k_{1}^{(\sigma^2,n)}$ and $j_{1} = j_{1}^{(\sigma^2,n)}$ with
\begin{equation}
\label{eq:k1j1}
k_{1}^{(\sigma^2,n)} = \lfloor \left(\sigma^2 n^{-1} \right)^{-\frac{q }{ (2q + d)r/d}}  \rfloor
\quad \mbox{and} \quad
2^{j_{1}^{(\sigma^2,n)}} = \lfloor \left(\sigma^2 n^{-1} \right)^{-\frac{q}{ (2q + d)s}} \rfloor,
\end{equation}
we get 
\begin{equation} \label{eq:last}
\mathcal{R}_{\sigma^2,n} \geq  c^2 c_\psi C_\psi \exp(3 \theta_{\lambda}) \left(\sigma^2 n^{-1} \right)^{\frac{2q}{ 2q + d}}.
\end{equation}
It now remains to show that the constant  $\theta_{\lambda}$ is bounded from below, independently of $\sigma$ and $n$. 

The idea is to observed that  $\frac{1}{n} \sum_{i=1}^{n} \psi_{\lambda}^{2}(y_{i})$ behaves like a Riemann integral of $\psi_\lambda$ and should therefore be bounded by a constant since $\|\psi_\lambda\|_2=1$.
This statement can be proved using the following reasoning.
Since vector $Y=(y_{1},\ldots,y_{n})$ of PSFs locations  satisfies the quasi-uniformity condition $h_{Y,\Omega} \leq B q_{Y,\Omega}$, we get from Proposition \ref{prop:equiv_fill_distance} that the separation distance $q_{Y,\Omega}$ satisfies $q_{Y,\Omega}^d \geq  B_1 n^{-1}$ for some constant $B_1$. Now, the support of wavelet $\psi_\lambda$ is contained in a hypercube of volume proportional to $2^{-d|\lambda|}$. Hence, the number of locations $y_i$ in $\supp(\psi_\lambda)$ is bounded above by $2^{-d|\lambda|} /q_{Y,\Omega}^d \leq B_2 n 2^{-d|\lambda|}$ for some constant $B_2$. To conclude, notice that $\|\psi_\lambda\|_\infty = 2^{d |\lambda|/2} \| \psi \|_\infty$, hence:
\begin{align*}
 \frac{1}{n} \sum_{i=1}^n \psi_\lambda(y_i)^2 & =\frac{1}{n} \sum_{y_i \in \supp(\psi_\lambda)} \psi_\lambda(y_i)^2 \\
 &\leq \frac{1}{n} B_2 n 2^{-d|\lambda|} \|\psi_\lambda\|_\infty^2 \\
 &\leq B_2 \|\psi\|_\infty^2=:B_3.
\end{align*}
This implies that
$$
\theta_{\lambda} \geq - 2 B_3 c_{1}, \mbox{ for all } \lambda < j_{1}^{(\sigma^2,n)}.
$$
Hence, inserting the above inequality into \eqref{eq:last},
we finally obtain that there exists a constant $c_{0} > 0$, that does not depend on $\frac{\sigma^{2}}{n}$, such that
$$
\mathcal{R}_{\sigma^2,n} \geq c_{0} \left( \sigma^{2} n^{-1} \right)^{\frac{2 q}{ 2q + d}},
$$
completing the proof of the theorem.
\end{proof}

\bibliographystyle{abbrv}
\bibliography{PSF_RKHS}

\end{document}